    \theoremstyle{plain}
\newtheorem{thm}{Theorem}
\newtheorem{lem}[thm]{Lemma}
\newtheorem{prop}[thm]{Proposition}
\newtheorem{defn}[thm]{Definition}
    \theoremstyle{remark}
\newtheorem{rem}[thm]{Remark}
\newtheorem{example}[thm]{Example}
\crefname{thm}{theorem}{theorems}
\Crefname{thm}{Theorem}{Theorems}
\title{Groupoidal Realizability \\ for Intensional Type Theory}
\author{Samuel Luke Speight}
\date{}
\begin{document}
  
\maketitle

\begin{abstract}
We develop realizability models of intensional type theory, based on groupoids, wherein realizers themselves carry non-trivial (non-discrete) homotopical structure. In the spirit of realizability, this is intended to formalize a homotopical BHK interpretation, whereby evidence for an identification is a path. Specifically, we study partitioned groupoidal assemblies. Categories of such are parameterised by "realizer categories" (instead of the usual partial combinatory algebras) that come equipped with an interval \textit{qua} internal cogroupoid. The interval furnishes a notion of homotopy as well as a fundamental groupoid construction. Objects in a base groupoid are realized by points in the fundamental groupoid of some object from the realizer category; isomorphisms in the base groupoid are realized by paths in said fundamental groupoid. The main result is that, under mild conditions on the realizer category, the ensuing category of partitioned groupoidal assemblies models intensional (1-truncated) type theory without function extensionality. Moreover, when the underlying realizer category is "untyped", there exists an impredicative universe of 1-types (the modest fibrations). This is a groupoidal analogue of the traditional situation.
\end{abstract}

{\small \tableofcontents}


\section{Introduction}\label{sec:introduction}

It is often said that realizability formalises the BHK interpretation \citep[see][for example]{bauer12}. The latter is an informal interpretation, inductively prescribing what counts as evidence for logical formulas \citep{heyting30,heyting31,kolmogorov32,heyting34,heyting56}:
\begin{itemize}
    \item Evidence for $\phi\land\psi$ consists of evidence for $\phi$ and evidence for $\psi$.
    \item Evidence for $\phi\lor\psi$ is an identifier for either $\phi$ or $\psi$ together with evidence for the formula identified.
    \item Evidence for $\phi\rightarrow\psi$ is a process (or method, or function) that, when given evidence for $\phi$, produces evidence for $\psi$.
    \item Evidence for $\exists x\in A.\, \phi(x)$ consists of a representation of some object $a\in A$ and evidence for $\phi(a)$.
    \item Evidence for $\forall x\in A.\, \phi(x)$ is again a process that, when given a representation of any object $a\in A$, produces evidence for $\phi(a)$.
\end{itemize}

In a realizability interpretation, formally specified realizers play the role of evidence. Moreover, informal terms like "process" are given a precise, mathematical meaning. For example, in the original realizability interpretation, Kleene's (\citeyear{kleene45}) so-called "number realizability" interpretation of Heyting arithmetic, realizers are natural numbers. Processes are also natural numbers: Recall that partial computable functions may be effectively enumerated. Given a natural number $m$, the process $n$ returns the result---if it is defined---of applying the partial computable function encoded by $n$ to the argument $m$ (often denoted either by $\varphi_n(m)$ or by $\{n\}(m)$). Insofar as realizers are computational in nature, realizability interpretations unveil computational content of logical or mathematical theories.

A few decades after Kleene's number realizability, \citet{hyland82} introduced the \textit{effective topos} $\mathbf{Eff}$. This is a universe for computable mathematics: its internal logic extends number realizability. More generally, a realizability topos can be constructed over an arbitrary partial combinatory algebra (PCA); $\mathbf{Eff}$ is the realizability topos built over the PCA $\mathcal{K}_1$ (known as "Kleene's first algebra") of natural numbers and partial computable functions. The resulting "realizability topos" is a universe for computable mathematics, where the notion of computation is determined by the PCA.

The realizability topos $\mathbf{RT}(\mathcal{A})$ over an arbitrary PCA $\mathcal{A}$ boasts a number of subcategories that model powerful and expressive type theories. Each of these categories has an elementary construction directly over the PCA, independently of the realizability topos. There is the category $\mathbf{Asm}(\mathcal{A})$ of \textit{assemblies}, which is regular and locally cartesian closed. The full subcategory $\mathbf{PAsm}(\mathcal{A}) \subseteq \mathbf{Asm}(\mathcal{A})$ spanned by the \textit{partitioned assemblies} is \textit{weakly} locally cartesian closed and therefore a model of type theory without function extensionality. Note that $\mathbf{RT}(\mathcal{A})$ is exact, and can be obtained as the ex/lex completion of $\mathbf{PAsm}(\mathcal{A})$ or the ex/reg completion of $\mathbf{Asm}(\mathcal{A})$ \citep{robinson90}.
\begin{align*}
    \begin{tikzcd}[ampersand replacement=\&]
        \mathbf{PAsm}(\mathcal{A})
        \arrow[rr, hookrightarrow]
        \arrow[rr, bend left=15, rightsquigarrow, "{\text{reg/lex}}" description]
        \arrow[rrrr, bend left=30, rightsquigarrow, "{\text{ex/lex}}" description]
        \& \&
        \mathbf{Asm}(\mathcal{A})
        \arrow[rr, hookrightarrow]
        \arrow[rr, bend left=15, rightsquigarrow, "{\text{ex/reg}}" description]
        \& \&
        \mathbf{RT}(\mathcal{A})
    \end{tikzcd}
\end{align*}

Both $\mathbf{Asm}(\mathcal{A})$ and $\mathbf{PAsm}(\mathcal{A})$ contain an \textit{impredicative} universe, that is, one closed under large products. The small (closed) types are the \textit{modest} (partitioned) assemblies.

The realizability categories discussed above model some version of \textit{extensional} type theory. In number realizability, any natural number realizes a true equation; thus realizers of equations give no information beyond the fact that the equation holds. The so-called "homotopy interpretation" of type theory understands terms of the identity type $\mathsf{Id}_A(a,b)$ as paths between the points $a$ and $b$ in the space $A$ \citep{hottbook13}. This can be reformulated as a BHK clause:
\begin{itemize}
    \item Evidence for an identification of $a,b\in A$ is a path between representations of $a$ and $b$ in the space representing $A$.
\end{itemize}
The terms "path" and "space" are meant to be understood informally, just like terms such as "process" in the usual BHK clauses. The informal interpretation obtained by adding the above clause to the usual BHK interpretation shall be called the "homotopy BHK interpretation".

\begin{quote}
    The overarching aim of this paper is to develop realizability models of \textit{intensional} type theory that formalize the \textit{homotopy BHK interpretation} just as the realizability models of extensional type theory discussed above formalize the usual BHK interpretation.
\end{quote}
To this end, we shall equip Hofmann and Streicher's (\citeyear{hofmann98}) groupoid model of type theory with a suitable notion of realizability, just as traditional realizability categories do with the set model.

Instead of constructing realizability models over PCAs, we will do so over structured "realizer" categories. This idea can be traced back to \citet{lambek95} and \citet{abramsky95}: over an arbitrary category, the former constructs a category of partial equivalence relations (PERs), while the latter constructs categories of assemblies and modest sets. Both proceed to show that the ensuing realizability categories inherit structure from the initial realizer category, and in some cases even improve a weak universal property to a full one, as is a theme in realizability. Later, analyses of realizability using realizer categories were given by \citet{birkedal99,birkedal00a,birkedal00b} and \citet{robinson01}. The former constructs models of type theory in assemblies and modest sets over weakly closed partial cartesian categories (ie. categories with a notion of partial map that are weakly cartesian closed in a suitable sense) via a tripos-theoretic approach. The latter studies necessary conditions on a realizer category for the construction of realizability models with certain type-theoretic features. Both consider when a topos may be obtained.

In order to formalize the homotopical BHK interpretation, realizers need carry non-trivial (non-discrete) homotopical structure. In the current approach, the key to attaining this is to take part of the structure of a realizer category $\mathbf{R}$ to be an interval \textit{qua} internal cogroupoid $\mathbb{I}\in\mathbf{R}$. The interval supplies a notion of homotopy internal to $\mathbf{R}$ as well as a fundamental groupoid construction $\Pi:\mathbf{R}\rightarrow\mathbf{Gpd}$. Objects in a base groupoid $X$ are realized by points in the fundamental groupoid of some object $A\in\mathbf{R}$, and isomorphisms in $X$ are realized by paths in $\Pi A$. In this paper, we concentrate on \textit{partitioned groupoidal assemblies}, a groupoidal analogue of partitioned assemblies.

\subsection{Outline}\label{sec:outline}

The next section (\ref{sec:relatedwork}) discusses related work. Following this, \Cref{sec:setbasedrealizability} briefly reviews set-based realizability over combinatory algebras (\ref{sec:combinatoryalgebras}), typed combinatory algebras (\ref{sec:typedcombinatoryalgebras}) and categories (\ref{sec:categories}). \Cref{sec:realizercategories} discusses realizer categories for groupoidal realizability. These come equipped with an interval, as described in \Cref{sec:intervals}, which facilitaties a notion of homotopy (\Cref{sec:homotopies}) as well as a fundamental groupoid construction (\Cref{sec:fundamentalgroupoids}). Untyped realizer categories (required for impredicative universes) are discussed in \Cref{sec:untypedrealizercategories}.

\Cref{sec:categoriesofpartitionedgroupoidalassemblies} introduces the main players: partitioned groupoidal assemblies. The category $\mathbf{PGAsm}(\mathbf{R},\mathbb{I})$ of partitioned groupoidal assemblies over the realizer category $(\mathbf{R},\mathbb{I})$ is first studied as a (2,1)-category (\Cref{sec:asa2category}) and then as a path category (\Cref{sec:asapathcategory}). Dependent products are exhibited in \Cref{sec:dependentproducts}. \Cref{sec:impredicativeuniversesofmodestfibrations} is devoted to impredicative universes of modest fibrations. \Cref{sec:outlook} summarizes and outlines future work.

\subsection{Related work}\label{sec:relatedwork}

As far as we understand, the first to consider realizability semantics for intensional type theory were \citet{hofstra13}. They equip the syntax of 1-truncated intensional type theory with a notion of realizability allowing them to show that the syntactic groupoid associated to the type theory generated by a graph has the same homotopy type as the free groupoid on this graph.

Further work at the intersection of realizability and intensional type theory was motivated by the search for \textit{impredicative and univalent} universes of (homotopically) higher types. \citet{uemura18} gave a model with such a universe in the category of "cubical assemblies", ie. cubical objects internal to the model of extensional type theory in $\mathbf{Asm}(\mathcal{K}_1)$. Another type-theoretic principle of interest is \textit{propositional resizing}, which states that every proposition is equivalent to a "small" one (one living in the lowest universe). Uemura's model does not satisfy propositional resizing, exhibiting this as a distinct form of impredicativity. \citet{swan21} show that univalence is consistent with Church's thesis (CT, all functions on the natural numbers are computable): though CT does not hold in the cubical assemblies model, there is a reflective submodel in which it does.

In contrast to cubical assemblies, \citet{vandenberg18b} exhibits $\mathbf{Eff}$ as the homotopy category of a path category in which there is an impredicative and univalent universe of propositions that \textit{does} satisfy propositional resizing. A more complicated path category contains an impredicative and univalent universe of sets satisfying propositional resizing.

\citet{angiuli17} introduce \textit{computational higher-dimensional type theory}, based on a cubical generalization of Martin-Lof's meaning explanations \citep{angiuli17meaning}. This involves a realizability---in particular, a PER---construction over a cubical programming language, which produces a model of cubical type theory. This approach has been extended to univalent (though not impredicative) universes \citep{angiuli18} and higher inductive types \citep{cavallo19}. The realizers here, being expressions in a \emph{cubical} programming language, do carry non-trivial homotopical structure, making this approach closest in spirit to ours (though vastly different in technical detail).

Almost all of the novel results in this paper first appeared in the author's DPhil thesis \citep{speight23}.

\section{From combinatory algebras to realizer categories}\label{sec:fromcombinatoryalgebrastorealizercategories}

\subsection{Set-based realizability}\label{sec:setbasedrealizability}

In this section, we briefly review categories of assemblies over: first, (untyped) combinatory algebras; next, \textit{typed} combinatory algebras; and finally over categories with a terminal object. We show how each of these cases subsumes the previous case. Although the case of (partial) combinatory algebras is best known, it appears that categories are more easily adapted to the groupoidal setting. We put aside the matter of partiality from hereon in.

\subsubsection{Combinatory algebras}\label{sec:combinatoryalgebras}
A combinatory algebra (CA) $\mathcal{A}$ consists of a set $\mathcal{A}$ and a binary "application" operation
\begin{align*}
    (-)\cdot(?): \mathcal{A}\times\mathcal{A}\rightarrow\mathcal{A}
\end{align*}
such that there exist elements ("combinators") $\mathsf{k},\mathsf{s}\in\mathcal{A}$ satisfying $\mathsf{k}\cdot a \cdot b = a$ and $\mathsf{s} \cdot a \cdot b \cdot c = a \cdot c \cdot (b \cdot c)$. From now on, we adopt the convention of left associativity and may omit the application symbol.

CAs enjoy a property known as "combinatory completeness", which, roughly speaking, allows them to mimic the $\lambda$-calculus \citep{schonfinkel24,curry30}. A \textit{polynomial} over $\mathcal{A}$ is a formal expression built from the grammar:
\begin{align*}
    t \Coloneqq x\in\mathcal{V} \;|\; a\in\mathcal{A} \;|\; t \cdot t
\end{align*}
where $\mathcal{V}$ is a countably infinite set of variables. Clearly, we can talk about free variables of a polynomial and substitution $t[a/x]$ of the element $a\in\mathcal{A}$ for the free variable $x\in\mathcal{V}$ in the polynomial $t$. Moreover, closed polynomials have an obvious interpretation in $\mathcal{A}$. Combinatory completeness says that for every polynomial $t$ over $\mathcal{A}$ and every variable $x$ there exists a polynomial $\lambda x.\, t$ such that $\mathsf{FV}(\lambda x.\, t) \subseteq \mathsf{FV}(t) - \{x\}$ and for all $a\in\mathcal{A}$:
\begin{align*}
    (\lambda x.\, t) a = t[a/x]
\end{align*}

Indeed, an example of a CA is the set of $\lambda$-terms up to $\beta$-equivalence, where application is $\lambda$-application, $\mathsf{k}\coloneq \lambda x y.\, x$ and $\mathsf{s} = \lambda f g x.\, fx(gx)$. (Kleene's first algebra $\mathcal{K}_1$, mentioned in the previous section, is not a combinatory algebra but a \textit{partial} combinatory algebra: the application of one element to another may not be defined (Turing machines may not halt).)

The objects of the category $\mathbf{Asm}(\mathcal{A})$ of assemblies over the CA $\mathcal{A}$ are pairs $(X,\Vdash_X)$ consisting of a set $X$ and a "realizability" relation $\Vdash_X \subseteq \mathcal{A}\times X$ (written infix) such that $\forall x\in X. \, \exists a\in\mathcal{A}. \, a \Vdash_X x$. Assemblies are sometimes thought of as datatypes, whose "values" are implemented (by their realizers) in the programming language given by the CA. A morphism $f:(X,\Vdash_X) \rightarrow (Y,\Vdash_Y)$ is a set-function $f:X\rightarrow Y$ such that $\exists e\in\mathcal{A}.\, \forall x\in X. \, \forall a\in\mathcal{A}. \, a\Vdash_X x \rightarrow ea \Vdash_Y f(x)$. We say that the function $f$ is realized by $e$, and write $e\Vdash f$. Identities in $\mathbf{Asm}(\mathcal{A})$ are identity functions, which are realized by $\lambda x.\, x$. Composition is also inherited from $\mathbf{Set}$: if $e\Vdash f: X\rightarrow Y$ and $e'\Vdash g:Y \rightarrow Z$ then $\lambda x.\, e'(ex) \Vdash gf:X \rightarrow Z$.

An assembly $X$ is \textit{modest} when elements are uniquely determined by any of their realizers: $\forall x,x'\in X. \, \forall a\in\mathcal{A}. \, a \Vdash_X x \land a \Vdash_X x' \rightarrow x=x'$. An assembly $X$ is \textit{partitioned} when the relation $\Vdash_X$ is actually a function:
\begin{align*}
    \left\Vert - \right\Vert_X : X \rightarrow \mathcal{A}
\end{align*}
(so each element of $X$ has exactly one realizer). Of course, an assembly $X$ may be both modest and partitioned, in which case the function $\left\Vert - \right\Vert_X$ is an injection. Thus every modest partitioned assembly can be identified with a subset of $\mathcal{A}$: this is used in the construction of an impredicative universe of modest partitioned assemblies in $\mathbf{PAsm}(\mathcal{A})$.

\subsubsection{Typed combinatory algebras}\label{sec:typedcombinatoryalgebras}
\textit{Typed} (partial) combinatory algebras were introduced by \cite{longley99}. A typed combinatory algebra (TCA) is built over a "type system": a non-empty set $\mathcal{T}$ of "types" that is closed under the operation $\shortrightarrow$ (which associates to the right by convention). (Normally one would also consider products, but for present purposes we need not.) A TCA $\mathcal{A}$ over the type system $\mathcal{T}$ is a family
\begin{align*}
    (\mathcal{A}_A)_{A\in\mathcal{T}}
\end{align*}
of non-empty sets indexed by types, together with a family of typed application operations
\begin{align*}
    (-) \cdot_{A,B} (?): \mathcal{A}_{A\shortrightarrow B} \times \mathcal{A}_A \rightarrow \mathcal{A}_B
\end{align*}
(again we tend to omit the symbol) such that the following typed combinators are required to exist and satisfy the given equations.
\begin{align*}
    &\mathsf{k}_{A,B} \in \mathcal{A}_{A\shortrightarrow B \shortrightarrow A}
    &\mathsf{k}_{A,B} a b = a
    \\
    &\mathsf{s}_{A,B,C} \in \mathcal{A}_{(A \shortrightarrow B \shortrightarrow C) \shortrightarrow (A \shortrightarrow B) \shortrightarrow A \shortrightarrow C}
    &\mathsf{s}_{A,B,C} f g a = f a (g a)
\end{align*}
TCAs are combinatorially complete in a typed sense \citep[see][for details]{bauer22}.

If $\mathcal{A}$ is a TCA, then an object of $\mathbf{Asm}(\mathcal{A})$ is a triple $(X,A,\Vdash_X)$, where $X\in\mathbf{Set}$, $A\in\mathcal{T}$ is the "realizer type" and $\Vdash_X \subseteq \mathcal{A}_A \times X$ (so realizers of elements of $X$ all have the same type) is such that $\forall x\in X. \, \exists a\in\mathcal{A}_A. \, a \Vdash_X x$. A morphism $(X,A,\Vdash_X)\rightarrow (Y,B,\Vdash_Y)$ is a function $f:X\rightarrow Y$ such that $\exists e\in\mathcal{A}_{A\shortrightarrow B}.\, \forall x\in X. \, \forall a\in\mathcal{A}. \, a\Vdash_X x \rightarrow ea \Vdash_Y f(x)$.

Any CA can be regarded as a TCA with a single type $U=U\shortrightarrow U$ (in such a way that the ensuing categories of assemblies are isomorphic). Furthermore, any TCA can be equipped with a "unit" type such that the ensuing category of assemblies remains unchanged up to equivalence. That is, given any type system $\mathcal{T}$, we consider the augmented type system $\mathcal{T}^1 \coloneqq \mathcal{T} \cup \{ 1\}$, where $1\shortrightarrow A \coloneq A$ and $A \shortrightarrow 1 \coloneq 1$. Then we obtain the augmented TCA $\mathcal{A}^1$ over $\mathcal{T}_1$ by setting $\mathcal{A}^1_1 \coloneq 1$ (the terminal set). The augmented application operation is given by:
\begin{align*}
    &a \cdot_{1\shortrightarrow A} * \coloneq a
    &* \cdot_{A\shortrightarrow 1} a \coloneq *
\end{align*}
For the combinators, it suffices to specify the following (where $A,A'\in\mathcal{T}^1$).
\begin{align*}
    &\mathsf{k}_{1,A} \coloneq *
    &\mathsf{k}_{A,1} \coloneq \lambda x.\, x
    \\
    &\mathsf{s}_{1,A,A'} \coloneq \lambda f a.\, fa \eqcolon \mathsf{s}_{A,1,A'}
    &\mathsf{s}_{A,A',1} \coloneq *
\end{align*}
Here we use typed combinatorial completeness of $\mathcal{A}$. Checking the relevant equations is straightforward.

\begin{prop}
    Let $\mathcal{A}$ be a TCA. $\mathbf{Asm}(\mathcal{A}) \simeq \mathbf{Asm}(\mathcal{A}^1)$.
\end{prop}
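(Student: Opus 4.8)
The plan is to realize the inclusion $\mathcal{T}\hookrightarrow\mathcal{T}^1$ as a \emph{full} embedding $\iota:\mathbf{Asm}(\mathcal{A})\hookrightarrow\mathbf{Asm}(\mathcal{A}^1)$ and then check that it is essentially surjective. On objects, $\iota$ sends $(X,A,\Vdash_X)$, with $A\in\mathcal{T}$, to the very same triple regarded over $\mathcal{T}^1$; on morphisms it is the identity. This is well defined because whenever $A,B\in\mathcal{T}$ we have $A\shortrightarrow B\in\mathcal{T}$, so $\mathcal{A}^1_{A\shortrightarrow B}=\mathcal{A}_{A\shortrightarrow B}$, and the augmented application $\cdot_{A,B}$ agrees with the original one (the augmentation only alters the operations into or out of the type $1$). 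Thus $\iota$ is literally the inclusion of a full subcategory, and functoriality is immediate.

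Fully faithfulness is then essentially automatic: a morphism $\iota(X,A,\Vdash_X)\to\iota(Y,B,\Vdash_Y)$ in $\mathbf{Asm}(\mathcal{A}^1)$ is a function $f:X\to Y$ together with a realizer $e\in\mathcal{A}^1_{A\shortrightarrow B}$, and since $A\shortrightarrow B\in\mathcal{T}$ both the set of admissible realizers and the realizability condition coincide with those computed in $\mathbf{Asm}(\mathcal{A})$. Hence $\iota$ induces a bijection on hom-sets.

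The substance of the argument is essential surjectivity. The objects of $\mathbf{Asm}(\mathcal{A}^1)$ not already in the image of $\iota$ are exactly those with realizer type $1$. For such an object $(X,1,\Vdash_X)$ the structure is forced: since $\mathcal{A}^1_1=1=\{*\}$, the surjectivity axiom makes $\Vdash_X$ the full relation $\{*\}\times X$. I would then use non-emptiness of $\mathcal{T}$ and of the carriers to fix some $A\in\mathcal{T}$ and some $a_0\in\mathcal{A}_A$, and define $(X,A,\Vdash'_X)$ with $a\Vdash'_X x \iff a=a_0$. The identity function on $X$ is claimed to be an isomorphism $(X,1,\Vdash_X)\cong(X,A,\Vdash'_X)$: the forward map has realizer $a_0\in\mathcal{A}_A=\mathcal{A}^1_{1\shortrightarrow A}$, using $a_0\cdot_{1\shortrightarrow A}*=a_0$ together with $a_0\Vdash'_X x$; the backward map has realizer $*\in\mathcal{A}^1_{A\shortrightarrow 1}=\{*\}$, using $*\cdot_{A\shortrightarrow 1}a=*$ together with the fact that $*$ realizes every element of $X$. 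These two functions are mutually inverse because both are the identity on $X$.

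I expect the only real care to be in this last step: one must notice that unit-type assemblies are completely degenerate (their realizability relation is total), and then verify that the two candidate realizers discharge the realizability conditions, which is precisely where the defining equations $a\cdot_{1\shortrightarrow A}*=a$ and $*\cdot_{A\shortrightarrow 1}a=*$ of the augmentation do exactly the work required. Combining the full embedding with essential surjectivity yields the claimed equivalence $\mathbf{Asm}(\mathcal{A})\simeq\mathbf{Asm}(\mathcal{A}^1)$.
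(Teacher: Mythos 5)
Your proof is correct and follows essentially the same route as the paper's: the paper likewise treats $\mathbf{Asm}(\mathcal{A})$ as a full subcategory and establishes essential surjectivity by replacing a unit-typed assembly $(X,1,\Vdash_X)$ with the constantly-$a_0$ partitioned assembly over some $A\in\mathcal{T}$, realizing the identity by $a_0\in\mathcal{A}_{1\shortrightarrow A}$ one way and $*\in\mathcal{A}_{A\shortrightarrow 1}$ the other. Your explicit observation that unit-typed assemblies have total realizability relation is a detail the paper leaves implicit but which is indeed what makes the backward realizer work.
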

\begin{proof}
    $\mathbf{Asm}(\mathcal{A})$ is a full subcategory of $\mathbf{Asm}(\mathcal{A}^1)$. For any $X=(X,1,\Vdash_X)\in\mathbf{Asm}(\mathcal{A}^1)$ we define an isomorphism
    \begin{align*}
        f: X \rightarrow X' \coloneq \left(X,A,\left\Vert-\right\Vert_{X'}\right)
    \end{align*}
    where for all $x\in X$ we set
    \begin{align*}
        \left\Vert x \right\Vert_{X'} \coloneq a_0
    \end{align*}
    for an arbitrarily chosen $a_0 \in \mathcal{A}_A$. The function $f\coloneq \mathsf{id}_X$ is realized by $a_0 \in \mathcal{A}_A = \mathcal{A}_{1\shortrightarrow A}$. The inverse $f^{-1} \coloneq \mathsf{id}_X$ is realized by $*\in\mathcal{A}_1 = \mathcal{A}_{A\shortrightarrow 1}$. 
\end{proof}

\subsubsection{Categories}\label{sec:categories}

A category, like a TCA, gives rise to an \textit{a priori} typed notion of realizability: the types are the objects of the category. Suppose $\mathbf{R}$ is a category with a terminal object $1$. Then we have the functor
\begin{align*}
    \Pi \coloneq \mathbf{R}(1,-): \mathbf{R} \rightarrow \mathbf{Set}
\end{align*}

The objects of the category $\mathbf{Asm}(\mathbf{R})$ of assemblies over the "realizer category" $\mathbf{R}$ are triples $(X,A,\Vdash_X)$, where $X\in\mathbf{Set}$, $A\in\mathbf{R}$ (a realizer type is now an object from the realizer category) and $\Vdash_X \subseteq \Pi A \times X$ is such that $\forall x\in X. \, \exists a\in \Pi A. \, a \Vdash_X x$. A morphism $(X,A,\Vdash_X)\rightarrow (Y,B,\Vdash_Y)$ is a function $f:X\rightarrow Y$ such that $\exists e:A\rightarrow B \in \mathbf{R}.\, \forall x\in X. \, \forall a\in\mathcal{A}. \, a\Vdash_X x \rightarrow \Pi(e)(a) = e\circ a \Vdash_Y f(x)$.

Given a TCA $\mathcal{A}$ with a unit type $1$ ($\mathcal{A}_1 = 1$) we may build a category $\mathbf{R}(\mathcal{A})$ with a terminal object. The objects of $\mathbf{R}(\mathcal{A})$ are the types of $\mathcal{A}$. A morphism $A \rightarrow B$ is a "computable" function $k: \mathcal{A}_A \rightarrow \mathcal{A}_B$. The function $k$ is computable iff there exists $e\in\mathcal{A}_{A\shortrightarrow B}$ representing it, that is, for all $a\in\mathcal{A}_A$ we have $k(a)=ea$. Note that computable functions $\mathcal{A}_1 \rightarrow \mathcal{A}_A$ are in bijective correspondence with elements of $\mathcal{A}_A$: a function $k:\mathcal{A}_1 \rightarrow \mathcal{A}_A$ is represented by $\lambda x.\, k(*)$.

\begin{prop}
    Let $\mathcal{A}$ be a TCA with a unit type. $\mathbf{Asm}(\mathcal{A}) \cong \mathbf{Asm}(\mathbf{R}(\mathcal{A}))$.
\end{prop}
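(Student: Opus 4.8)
The plan is to exhibit an isomorphism of categories directly, by promoting the bijection between points of $\Pi A$ and elements of $\mathcal{A}_A$ noted just above to a natural isomorphism. Writing $U:\mathbf{R}(\mathcal{A})\rightarrow\mathbf{Set}$ for the faithful functor sending a type $A$ to its carrier $\mathcal{A}_A$ and a morphism $e:A\rightarrow B$ (a computable function $\mathcal{A}_A\rightarrow\mathcal{A}_B$) to itself, I would first check that the maps $\theta_A:\Pi A\rightarrow\mathcal{A}_A$, $\theta_A(k)\coloneq k(*)$, assemble into a natural isomorphism $\theta:\Pi\Rightarrow U$. Each $\theta_A$ is a bijection by the observation above, its inverse sending $b\in\mathcal{A}_A$ to the computable function represented by $\lambda x.\, b$. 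Naturality in $A$ is the computation $\theta_B(\Pi(e)(k))=(e\circ k)(*)=e(k(*))=U(e)(\theta_A(k))$, which I would justify by unfolding $e\circ k$ as the composite of the point $k:1\rightarrow A$ with $e:A\rightarrow B$ in $\mathbf{R}(\mathcal{A})$.

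Next I would define a functor $F:\mathbf{Asm}(\mathbf{R}(\mathcal{A}))\rightarrow\mathbf{Asm}(\mathcal{A})$ that leaves the underlying data untouched and merely transports realizability relations along $\theta$. On objects, $F(X,A,\Vdash_X)\coloneq(X,A,\Vdash_X')$ where $b\Vdash_X' x$ iff $\theta_A^{-1}(b)\Vdash_X x$; since $\theta_A$ is a bijection, the requirement that every $x$ carry a realizer is preserved, so $F$ is well-defined and is in fact a \emph{bijection} on objects (this is exactly why the statement asserts an isomorphism rather than the mere equivalence of the previous proposition: the object data match on the nose, the type component being literally the same set of choices). On morphisms I would set $F(f)\coloneq f$, the same underlying function $X\rightarrow Y$, whence functoriality is immediate from that of $\mathbf{Set}$.

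The crux is to verify that $F$ is full and faithful, i.e.\ that a function $f:X\rightarrow Y$ is tracked in $\mathbf{Asm}(\mathbf{R}(\mathcal{A}))$ exactly when it is tracked between the transported objects in $\mathbf{Asm}(\mathcal{A})$; this is where the two notions of realizer must be reconciled. Suppose $e:A\rightarrow B$ tracks $f$ in $\mathbf{Asm}(\mathbf{R}(\mathcal{A}))$, so $a\Vdash_X x$ implies $\Pi(e)(a)\Vdash_Y f(x)$, and let $\hat e\in\mathcal{A}_{A\shortrightarrow B}$ represent $e$. By naturality of $\theta$ and representability, $\theta_B(\Pi(e)(a))=U(e)(\theta_A(a))=e(\theta_A(a))=\hat e\,\theta_A(a)$, so under the transported relations the $\mathbf{R}(\mathcal{A})$-tracking condition reads: $\theta_A(a)\Vdash_X' x$ implies $\hat e\,\theta_A(a)\Vdash_Y' f(x)$. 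As $\theta_A$ is surjective, $\theta_A(a)$ ranges over all of $\mathcal{A}_A$, so this is precisely the statement that $\hat e$ tracks $f$ in $\mathbf{Asm}(\mathcal{A})$. Conversely, any $\hat e\in\mathcal{A}_{A\shortrightarrow B}$ tracking $f$ in $\mathbf{Asm}(\mathcal{A})$ represents the computable function $b\mapsto\hat e\,b$, and the same computation shows this morphism of $\mathbf{R}(\mathcal{A})$ tracks $f$; hence the tracking conditions coincide and $F$ is fully faithful.

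Combining the three steps, $F$ is bijective on objects and fully faithful, hence an isomorphism of categories. The only genuinely delicate point is the reconciliation in the crux: one must keep straight the two roles played by elements of $\mathcal{A}$ — as points via $\Pi A\cong\mathcal{A}_A$ on the one hand, and as representatives $\hat e\in\mathcal{A}_{A\shortrightarrow B}$ of computable functions on the other — and it is exactly naturality of $\theta$ together with $e(\theta_A(a))=\hat e\,\theta_A(a)$ that glues them. That distinct $\hat e$ may represent the same $e$ is harmless, since the tracking condition is existential in the realizer.
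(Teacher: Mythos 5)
Your proposal is correct and follows essentially the same route as the paper's (much terser) proof: identify $\Pi A$ with $\mathcal{A}_A$ via evaluation at $*$, observe the functors are the identity on underlying data modulo this identification, and reconcile the two tracking conditions by passing between a computable function $e$ and a representative $\hat e$ of it. Your version merely makes explicit the naturality of the identification and the bijectivity-on-objects/full-faithfulness bookkeeping that the paper leaves implicit.
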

\begin{proof}
    Up to the identification of computable functions $\mathcal{A}_1 \rightarrow \mathcal{A}_A$ with elements of $\mathcal{A}_A$, the functors going back and forth between these two categories are identities on both objects and arrows. A function $f$ realized by $e$ in $\mathbf{Asm}(\mathcal{A})$ is realized by the computable function $e\cdot(-)$ (represented by $e$) in $\mathbf{Asm}(\mathbf{R}(\mathcal{A}))$; conversely, a function $g$ realized by the computable function $k$ in $\mathbf{Asm}(\mathbf{R}(\mathcal{A}))$ is realized by $e_k$ in $\mathbf{Asm}(\mathcal{A})$, where $e_k$ represents $k$.
\end{proof}

\subsection{Realizer categories}\label{sec:realizercategories}

In this development, the key to obtaining realizers with non-trivial homotopical structure is to take part of the structure of a realizer category to be an interval \textit{qua} internal cogroupoid. We will assume that realizer categories are cartesian closed; this is a fairly mild assumption and provides a pleasant context in which to work with intervals. \citep{warren08,warren12} are excellent references when it comes to intervals \textit{qua} cogroupoids. Realizer categories come in \textit{typed} and \textit{untyped} varieties.

\subsubsection{Intervals}\label{sec:intervals}
Let $\mathbf{R}$ be a category with a terminal object. An interval (cogroupoid) $\mathbb{I}\in\mathbf{R}$ is a diagram of the form
\begin{align*}
    \begin{tikzcd}[ampersand replacement=\&]
        \mathbb{I}_0 = 1
        \ar[r, bend left=50, "0"]
        \ar[r, bend right=50, swap, "1"]
        \&
        \mathbb{I}_1
        \ar[r, bend left=50, "i_0"]
        \ar[r, bend right=50, swap, "i_1"]
        \ar[l, swap, "*"]
        \ar[r, "2"]
        \arrow[loop above, distance=3em, "\sigma"]
        \&
        \mathbb{I}_2
        \arrow[r, bend left=50, "j_0"]
        \arrow[r, bend right=50, swap, "j_1"]
        \&
        \mathbb{I}_3
    \end{tikzcd}
\end{align*}
We require that $\mathbb{I}_0=1$ is terminal. $\mathbb{I}_0$ and $\mathbb{I}_1$ are respectively known as the "object of coobjects" and the "object of coarrows". The diagram
\begin{align}\label{eqn:pushout}
    \begin{tikzcd}[ampersand replacement=\&]
        \mathbb{I}_0
        \arrow[r, "0"]
        \arrow[d, swap, "1"] 
        \&
        \mathbb{I}_1
        \arrow[d, "i_1"]
        \\
        \mathbb{I}_1
        \arrow[r, swap, "i_0"]
        \&
        \mathbb{I}_2
        \arrow[ul, phantom, "\ulcorner", very near start]
    \end{tikzcd}
\end{align}
is required to be a pushout. Maps $\mathbb{I}_1 \rightarrow A$ are thought of as paths in $A$, and so, the pushout allows us to concatenate two paths $\alpha,\beta:\mathbb{I}_1 \rightarrow A$ that match nose to tail: $\beta 0 = \alpha 1$. The result is $[\beta,\alpha]:\mathbb{I}_2 \rightarrow A$ (a path with twice the length of $\alpha$ and $\beta$). Likewise, the following is a pushout.
\begin{align*}
    \begin{tikzcd}[ampersand replacement=\&]
        \mathbb{I}_1
        \arrow[r, "i_1"]
        \arrow[d, swap, "i_0"] 
        \&
        \mathbb{I}_2 \arrow[d, "j_0"]
        \\
        \mathbb{I}_2
        \arrow[r, swap, "j_1"]
        \&
        \mathbb{I}_3
        \arrow[ul, phantom, "\ulcorner", very near start]
    \end{tikzcd}
\end{align*}
To round off the definition, the cogroupoid axioms are required to hold. The first set makes sure that the endpoint (or, respectively, cosource and cotarget) maps $0,1$ play nicely with cocomposition $2$ and coidentity $*$.
\begin{align*}
    \begin{tikzcd}[ampersand replacement=\&]
        \mathbb{I}_0
        \arrow[r, "0"]
        \arrow[d, swap, "0"] 
        \&
        \mathbb{I}_1
        \arrow[d, "2"]
        \\
        \mathbb{I}_1
        \arrow[r, swap, "i_0"]
        \&
        \mathbb{I}_2
    \end{tikzcd}
    \quad\quad\quad
    \begin{tikzcd}[ampersand replacement=\&]
        \mathbb{I}_0
        \arrow[r, "1"]
        \arrow[d, swap, "1"] 
        \&
        \mathbb{I}_1
        \arrow[d, "2"]
        \\
        \mathbb{I}_1
        \arrow[r, swap, "i_1"]
        \&
        \mathbb{I}_2
    \end{tikzcd}
    \quad\quad\quad
    \begin{tikzcd}[ampersand replacement=\&]
        \mathbb{I}_0
        \arrow[r, "0"]
        \arrow[equal, dr]
        \&
        \mathbb{I}_1
        \arrow[d, "*"]
        \&
        \mathbb{I}_0
        \arrow[l, swap, "1"]
        \arrow[equal, dl]
        \\
        \&
        \mathbb{I}_0
    \end{tikzcd}
\end{align*}
The second set makes sure that the inverse operation $\sigma$ behaves as expected.
\begin{align*}
    \begin{tikzcd}[ampersand replacement=\&]
        \mathbb{I}_0
        \arrow[r, "0"]
        \arrow[dr, swap, "1"]
        \&
        \mathbb{I}_1
        \arrow[d, "\sigma"]
        \&
        \mathbb{I}_0
        \arrow[l, swap, "1"]
        \arrow[dl, "0"]
        \\
        \&
        \mathbb{I}_1
    \end{tikzcd}
    \quad\quad\quad
    \begin{tikzcd}[ampersand replacement=\&]
        \mathbb{I}_0
        \arrow[equal, dr]
        \arrow[r, "\sigma"]
        \&
        \mathbb{I}_0
        \arrow[d, "\sigma"]
        \\
        \&
        \mathbb{I}_0
    \end{tikzcd}
\end{align*}
The next two axioms are coidentity and coassociativity respectively. 
\begin{align*}
    \begin{tikzcd}[ampersand replacement=\&]
        \&
        \mathbb{I}_1
        \arrow[d, "2"]
        \arrow[dl, swap, "\mathbb{I}_1"]
        \arrow[dr, "\mathbb{I}_1"]
        \&
        \\
        \mathbb{I}_1
        \&
        \mathbb{I}_2
        \arrow[r, swap, "{[1*,\mathbb{I}_1]}"]
        \arrow[l, "{[\mathbb{I}_1,0*]}"]
        \&
        \mathbb{I}_1
    \end{tikzcd}
    \quad\quad\quad
    \begin{tikzcd}[ampersand replacement=\&]
        \mathbb{I}_1
        \arrow[r, "2"]
        \arrow[d, swap, "2"] 
        \&
        \mathbb{I}_2
        \arrow[d, "{[j_1i_1,j_0 2]}"]
        \\
        \mathbb{I}_2
        \arrow[r, swap, "{[j_1 2,j_0i_0]}"]
        \&
        \mathbb{I}_1
    \end{tikzcd}
\end{align*}
Lastly, we have coinverse laws.
\begin{align*}
    \begin{tikzcd}[ampersand replacement=\&]
        \mathbb{I}_1
        \arrow[r, "2"]
        \arrow[d, swap, "*"] 
        \&
        \mathbb{I}_2
        \arrow[d, "{[\mathbb{I}_1,\sigma]}"]
        \\
        \mathbb{I}_0
        \arrow[r, swap, "1"]
        \&
        \mathbb{I}_1
    \end{tikzcd}
    \quad\quad\quad
    \begin{tikzcd}[ampersand replacement=\&]
        \mathbb{I}_1
        \arrow[r, "2"]
        \arrow[d, swap, "*"] 
        \&
        \mathbb{I}_2
        \arrow[d, "{[\sigma,\mathbb{I}_1]}"]
        \\
        \mathbb{I}_0
        \arrow[r, swap, "0"]
        \&
        \mathbb{I}_3
    \end{tikzcd}
\end{align*}

\begin{defn}\label{def:typedrealizercategory}
    A (typed) realizer category $(\mathbf{R},\mathbb{I})$ is a cartesian closed category $\mathbf{R}$ together with an interval $\mathbb{I}\in\mathbf{R}$.
\end{defn}

\begin{example}\label{eg:intervalgroupoid}
    $(\mathbf{Gpd},\mathbf{I})$ is a realizer category. The object of coarrows $\mathbf{I}_1$ is the "walking isomorphism":
    \begin{align*}
        \begin{tikzcd}[ampersand replacement=\&]
            0
            \ar[r, bend left=50, "i"]
            \&
            1
            \ar[l, bend left=50, "i^{-1}"]
        \end{tikzcd}
    \end{align*}
    The maps $0$ and $1$ pick out the corresponding endpoints of $\mathbf{I}_1$. The map $\sigma$ sends $i\mapsto i^{-1}$.

    $\mathbf{I}_2$ has three objects and, again, one arrow in each hom set.
    \begin{align*}
        \begin{tikzcd}[ampersand replacement=\&]
            0
            \ar[r, bend left=50, "i_0"]
            \&
            1
            \ar[l, bend left=50, "i_0^{-1}"]
            \ar[r, bend left=50, "i_1"]
            \&
            2
            \ar[l, bend left=50, "i_1^{-1}"]
        \end{tikzcd}
    \end{align*}
    The maps $i_0$ and $i_1$ send $i\in \mathbf{I}_1$ to the synonymous (eponymous, even) morphisms in $\mathbb{I}_2$. The map $2$ picks out the composite $i_1 i_0$.

    Continuing the trend, $\mathbf{I}_3$ has four objects and one arrow in each hom set.
    \begin{align*}
        \begin{tikzcd}[ampersand replacement=\&]
            0
            \ar[r, bend left=50, "i_0"]
            \&
            1
            \ar[l, bend left=50, "i_0^{-1}"]
            \ar[r, bend left=50, "i_1"]
            \&
            2
            \ar[l, bend left=50, "i_1^{-1}"]
            \ar[r, bend left=50, "i_2"]
            \&
            3
            \ar[l, bend left=50, "i_2^{-1}"]
        \end{tikzcd}
    \end{align*}
    The map $j_0$ sends $i_0 \mapsto i_0$ and $i_1 \mapsto i_1$; the map $j_1$ sends $i_0 \mapsto i_1$ and $i_1 \mapsto i_2$.
\end{example}

\begin{example}
    Let $\mathbf{hTop}$ be the category of spaces and homotopy classes of maps. The full subcategory spanned by the CW complexes is cartesian closed ($\mathbf{hTop}$ is only weakly cartesian closed). It contains an interval whose object of coarrows is the real unit interval $[0,1]$. Note that we must take homotopy classes of maps so that the cogroupoid axioms hold.
\end{example}

\subsubsection{Homotopies}\label{sec:homotopies}

The interval $\mathbb{I}\in\mathbf{R}$ endows the ambient category with the structure of a (2,1)-category \citep[in fact, a strict $\omega$-category, see][Theorem 1.12]{warren12}. The higher cells are given by homotopies with respect to $\mathbb{I}$.

A homotopy $H:f\Rightarrow g:A\rightarrow B$ with respect to the interval $\mathbb{I}$ is a map $H:A\times \mathbb{I}_1 \rightarrow B$ making the following diagram in $\mathbf{R}$ commute.
\begin{align*}
    \begin{tikzcd}[ampersand replacement=\&]
            A \times \mathbb{I}_0
            \arrow[d, swap, "\pi_1"]
            \arrow[r, "A\times 0"]
            \&
            A \times \mathbb{I}_1
            \arrow[d, "H"]
            \&
            A \times \mathbb{I}_0
            \arrow[d, "\pi_1"]
            \arrow[l, swap, "A\times 1"]
            \\
            A
            \arrow[r, swap, "f"]
            \&
            B
            \&
            A
            \arrow[l, "g"]
        \end{tikzcd}
\end{align*}
Given a homotopy $H:A\times \mathbb{I}_1 \rightarrow B$ we can find its domain and codomain respectively by:
\begin{align*}
    &\mathsf{dom}(H) \coloneqq H \circ \left\langle A, 0* \right\rangle
    &\mathsf{cod}(H) \coloneqq H \circ \left\langle A, 1* \right\rangle
\end{align*}

As the functor $(-)\times A$ possesses a right adjoint, the following square is a pushout.
\begin{align*}
    \begin{tikzcd}[ampersand replacement=\&]
        A\times \mathbb{I}_0
        \arrow[r, "A\times 0"]
        \arrow[d, swap, "A\times 1"] 
        \&
        A\times \mathbb{I}_1
        \arrow[d, "A\times i_1"]
        \\
        A\times \mathbb{I}_1
        \arrow[r, swap, "A\times i_0"]
        \&
        A\times \mathbb{I}_2
        \arrow[ul, phantom, "\ulcorner", very near start]
    \end{tikzcd}
\end{align*}
If $H,H': A \times \mathbb{I}_1 \rightarrow B$ such that $H'(A \times 0) = H (A \times 1)$, then the morphism $[H',H]:A\times \mathbb{I}_2 \rightarrow B$ is given by
\begin{align*}
    \mu \left[ \lambda (H' \circ \mathsf{swap}), \lambda (H \circ \mathsf{swap}) \right] \circ \mathsf{swap}
\end{align*}
If $\mathsf{dom}(H') = \mathsf{cod}(H)$ then their vertical composition is defined using this universal morphism:
\begin{equation*}
    H' \circ H \coloneqq [H',H] \circ \left(A\times 2\right)
\end{equation*}
(note the overloading of the composition symbol $\circ$).

The horizontal composition $H' \ast H$ of $H:A\times \mathbb{I}_1\rightarrow B$ and $H':B\times \mathbb{I}_1\rightarrow C$ is given by the following composite in $\mathbf{R}$.
\begin{align*}
    \begin{tikzcd}[ampersand replacement=\&]
        A\times \mathbb{I}_1
        \arrow[r, "A\times\Delta"]
        \&
        A\times(\mathbb{I}_1 \times \mathbb{I}_1)\cong(A\times \mathbb{I}_1)\times \mathbb{I}_1
        \arrow[r, "H\times \mathbb{I}_1"]
        \&
        B\times \mathbb{I}_1
        \arrow[r, "H'"]
        \&
        C
    \end{tikzcd}
\end{align*}

The identity homotopy at $f:A\rightarrow B$ is given by $f\pi_1: A\times \mathbb{I}_1 \rightarrow B$, and the inverse of a homotopy $H:A\times \mathbb{I}_1 \rightarrow B$ is $H\circ (A\times\sigma)$.

\subsubsection{Fundamental groupoids}\label{sec:fundamentalgroupoids}

By considering maps out of $\mathbb{I}$ into a fixed object $A\in\mathbf{R}$ we obtain the fundamental groupoid of $A$. That is, we have a 2-functor
\begin{equation*}
    \Pi = (-)^\mathbb{I} : \mathbf{R} \rightarrow \mathbf{Gpd}
\end{equation*}
A quick way to see this is because the contravariant hom functor takes colimits (used in the definition of an interval) to limits (used in the definition of a category).

The fundamental groupoid of $A$ has as objects points in $A$, ie. maps $\mathbb{I}_0\rightarrow A$. A morphism $\alpha:a\rightarrow b$ is a path $\alpha$ in $A$ making the following diagram commute.
\begin{align*}
    \begin{tikzcd}[ampersand replacement=\&]
    \mathbb{I}_0
    \arrow[r, "0"]
    \arrow[dr, swap, "a"]
    \&
    \mathbb{I}_1
    \arrow[d, "\alpha"]
    \&
    \mathbb{I}_0
    \arrow[l, swap, "1"]
    \arrow[dl, "b"]
    \\
    \&
    A
    \end{tikzcd}
\end{align*}
The composition of $\alpha:a\rightarrow b$ with $\beta:b\rightarrow c$ is defined by
    \begin{equation*}
        \beta \circ \alpha \coloneqq [\beta,\alpha] \circ 2
    \end{equation*}
($2$ re-parameterises the double-length path). The identity at $a$ is $a*$ and the inverse of $\alpha$ is $\alpha\sigma$.

If $f:A\rightarrow B \in \mathbf{R}$, the functor $\Pi(f):\Pi A \rightarrow \Pi B$ is given by post-composition (in $\mathbf{R}$) with $f$. The composition law for functors holds because $f[\beta,\alpha] = [f\beta,f\alpha]$ by the universal property of the pushout (\ref{eqn:pushout}).

A natural isomorphism $\phi: F\Rightarrow G: \mathbf{C}\rightarrow \mathbf{D}$ between functors (between categories or groupoids) is equivalently given by a functor:
\begin{align*}
    \phi:\mathbf{C} \times \mathbf{I}_1 \rightarrow \mathbf{D}
\end{align*}
such that $\phi(-,0)=F$ and $\phi(-,1)=G$. In this way, a homotopy $H:f\Rightarrow g:A\rightarrow B\in\mathbf{R}$ gives rise to a natural transformation
    \begin{equation*}
        \Pi(H) : \Pi A \times \mathbf{I}_1 \rightarrow \Pi B
    \end{equation*}
by setting:
\begin{align*}
    &\Pi(H)(a,0) = \Pi(H)(a) \coloneqq H \circ \langle a,0 \rangle : \mathbb{I}_0\rightarrow B
    \\
    &\Pi(H)(a,1) = \Pi(H)(a) \coloneqq H \circ \langle a,1 \rangle : \mathbb{I}_0\rightarrow B
    \\
    &\Pi(H)(\alpha,i) = \Pi(H)(\alpha) \coloneqq H \circ \langle \alpha,\mathbb{I}_1 \rangle : \mathbb{I}_1\rightarrow B
    \\
    &\Pi(H)(\alpha,i^{-1}) = \Pi(H)(\alpha) \coloneqq H \circ \langle \alpha,\sigma \rangle : \mathbb{I}_1\rightarrow B
\end{align*}

Given a homotopy $H:f\Rightarrow g:A\rightarrow B$ and a path $\alpha:a\rightarrow b\in \Pi A$, the path $\Pi(H)(\alpha,i) \in \Pi B$ can be thought of as the diagonal path through a naturality square. Indeed, we have the following.
\begin{lem}[Warren, personal communication]\label{thm:boundaries}
With the above data, the following diagram commutes in $\Pi B$.
\begin{align*}
    \begin{tikzcd}[ampersand replacement=\&]
    \Pi(f)(a)
    \arrow[rr, "{H \circ \langle \alpha 0 * , \mathbb{I}_1 \rangle}"]
    \arrow[dd, swap, "{H\circ \langle \alpha, 0* \rangle}"]
    \arrow[rrdd, "{\Pi(H)(\alpha,i)}"]
    \&
    \&
    \Pi(g)(a)
    \arrow[dd, "{H\circ \langle \alpha, 1* \rangle}"]
    \\
    \\
    \Pi(f)(b)
    \arrow[rr, swap, "{H \circ \langle \alpha 1 *, \mathbb{I}_1 \rangle}"]
    \&
    \&
    \Pi(g)(b)
    \end{tikzcd}
\end{align*}
\end{lem}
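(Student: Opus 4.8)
The plan is to recognise the square as the image, under the functor $\Pi(H)\colon\Pi(A\times\mathbb{I}_1)\to\Pi B$ obtained by applying the $2$-functor $\Pi$ to the morphism $H\colon A\times\mathbb{I}_1\to B$, of a commuting square in the domain groupoid; commutativity then becomes an instance of functoriality together with a single path-reparametrisation identity in the interval. Recall that this $\Pi(H)$ sends a path $\gamma\colon\mathbb{I}_1\to A\times\mathbb{I}_1$ to $H\circ\gamma$, and that it preserves the groupoid composition $\beta\circ\gamma=[\beta,\gamma]\circ 2$ precisely because $H[\beta,\gamma]=[H\beta,H\gamma]$, the same law that makes $\Pi$ functorial on morphisms. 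Hence it suffices to exhibit the four edges and the diagonal as values of $\Pi(H)$ and to factor the diagonal morphism of the domain in two ways.

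First I would identify the edges. By definition the diagonal is $H\circ\langle\alpha,\mathbb{I}_1\rangle=\Pi(H)(\langle\alpha,\mathbb{I}_1\rangle)$, and the top and bottom edges are $\Pi(H)$ applied to the constant-in-$A$ paths $\langle\alpha 0*,\mathbb{I}_1\rangle\colon(a,0)\to(a,1)$ and $\langle\alpha 1*,\mathbb{I}_1\rangle\colon(b,0)\to(b,1)$. For the left edge I would factor $\langle\alpha,0*\rangle=(A\times 0)\circ\langle\alpha,*\rangle$ and invoke the homotopy square $H\circ(A\times 0)=f\pi_1$, giving $H\circ\langle\alpha,0*\rangle=f\alpha=\Pi(f)(\alpha)$; dually $H\circ\langle\alpha,1*\rangle=g\alpha=\Pi(g)(\alpha)$ for the right edge. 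Thus all five labelled paths are values of the single functor $\Pi(H)$.

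It then remains to verify the two factorisations of the diagonal morphism $\langle\alpha,\mathbb{I}_1\rangle\colon(a,0)\to(b,1)$ inside $\Pi(A\times\mathbb{I}_1)$, namely
\begin{align*}
    \langle\alpha,\mathbb{I}_1\rangle
    = \langle\alpha 1*,\mathbb{I}_1\rangle\circ\langle\alpha,0*\rangle
    = \langle\alpha,1*\rangle\circ\langle\alpha 0*,\mathbb{I}_1\rangle .
\end{align*}
Applying $\Pi(H)$ to these equations and using preservation of composition produces exactly the lower and upper triangles of the diagram, so the square commutes in $\Pi B$.

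I expect this reparametrisation identity to be the only real content. Each composite is of the form $[-,-]\circ 2$, so I would check it componentwise through the product projections $\pi_A$ and $\pi_{\mathbb{I}_1}$ (two maps into $A\times\mathbb{I}_1$ agree iff their components do), using that copairing commutes with post-composition. For the first factorisation the $\mathbb{I}_1$-component is $[\mathbb{I}_1,0*]\circ 2$, which equals $\mathbb{I}_1$ by the coidentity axiom of the interval, while the $A$-component is $[\alpha 1*,\alpha]\circ 2=\alpha\circ([1*,\mathbb{I}_1]\circ 2)=\alpha$ by the same axiom; the second factorisation is symmetric. The main obstacle is therefore the bookkeeping that threads the interval's coidentity laws through the copairings — everything else reduces to the already established functoriality of $\Pi$.
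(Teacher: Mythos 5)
Your proof is correct and follows essentially the same route as the paper's: both reduce commutativity of the square to the identity $\langle\alpha,\mathbb{I}_1\rangle=[\langle\alpha 1*,\mathbb{I}_1\rangle,\langle\alpha,0*\rangle]\circ 2$ (and its transpose), i.e.\ to the decomposition of the diagonal of $\mathbb{I}_1$ as a concatenation of a ``horizontal'' and a ``vertical'' edge, using that copairing out of the pushout commutes with post-composition. The only difference is that the paper factors out $\alpha\times\mathbb{I}_1$ and cites Warren for the identity $[\langle 1*,\mathbb{I}_1\rangle,\langle\mathbb{I}_1,0*\rangle]\circ 2=\Delta_{\mathbb{I}_1}$, whereas you verify the same fact componentwise from the coidentity axioms, which makes the argument self-contained.
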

\begin{proof}
We show that the diagonal is equal to the left-bottom boundary; that these are equal to the top-right boundary is a symmetric argument. Using the definition of composition in $\Pi B$, we have to show
\begin{align*}
    H \circ \langle \alpha, \mathbb{I}_1 \rangle = \left[H \circ \langle \alpha 1 *, \mathbb{I}_1 \rangle, H \circ \langle \alpha, 0 *\rangle\right] \circ 2
\end{align*}
This simplifies to
\begin{align*}
    H \circ (\alpha \times \mathbb{I}_1) \circ \left[ \left\langle 1 *, \mathbb{I}_1 \rangle, \langle \mathbb{I}_1, 0 * \right\rangle \right] \circ 2
\end{align*}
But
\begin{align*}
    \left[\left\langle 1 *, \mathbb{I}_1 \rangle, \langle \mathbb{I}_1, 0 * \right\rangle\right] \circ 2
\end{align*}
is a decomposition of the diagonal $\Delta_{\mathbb{I}_1}$ \cite[see][p. 212]{warren12}.
\end{proof}

\subsubsection{Squares}\label{sec:squares}

Let $(\mathbf{R},\mathbb{I})$ be a realizer category. Define the (edge-symmetric) double category $\Box\mathbf{R}(A,B)$ to have:
\begin{itemize}
    \item objects: maps $A\rightarrow B \in \mathbf{R}$;
    \item both horizontal and vertical morphisms: maps $A\times \mathbb{I}_1 \rightarrow B$;
    \item 2-cells: commutative squares of maps $A\times \mathbb{I}_1 \rightarrow B$.
\end{itemize}
Denote by $\blacksquare\mathbf{R}(A,B)$ the (edge-symmetric) double category with the same objects and morphisms as $\Box\mathbf{R}(A,B)$, but with 2-cells maps $A\times \mathbb{I}_1 \times \mathbb{I}_1 \rightarrow B$.

As a consequence of \Cref{thm:boundaries}, every 2-cell $\phi\in \blacksquare\mathbf{R}(A,B)$ determines a 2-cell $\partial(\phi)\in \Box\mathbf{R}(A,B)$:
\begin{align*}
    \begin{tikzcd}[ampersand replacement=\&]
        \phi_{00}
        \arrow[r, "\phi_{\mathbb{I}_1 0}"{name=A}]
        \arrow{d}[swap]{\phi_{0 \mathbb{I}_1}}
        \&
        \phi_{10}
        \arrow{d}{\phi_{1 \mathbb{I}_1}}
        \\
        \phi_{01}
        \arrow[r, swap, "\phi_{\mathbb{I}_1 1}"{name=B}]
        \&
        \phi_{11}
        \arrow[from=A,to=B, Rightarrow, shorten >=2ex, shorten <=2ex, "\partial(\phi)"]
    \end{tikzcd}
\end{align*}
where, eg. $\phi_{01} \coloneqq \phi \circ \langle A, 0*, 1* \rangle: A \rightarrow B$ and $\phi_{\mathbb{I}_1 1} \coloneqq \phi \circ \langle A\pi_1, \mathbb{I}_1\pi_2, 1*\pi_2 \rangle: A\times \mathbb{I}_1 \rightarrow B$. This determines a double functor:
\begin{align*}
    \partial_{\mathbf{R}(A,B)}: \blacksquare\mathbf{R}(A,B) \rightarrow \Box\mathbf{R}(A,B)
\end{align*}
for each $A,B\in\mathbf{R}$ (we sometimes just write $\partial$).

\begin{rem}\label{rem:squares}
    Some of the results to follow rely on the hypothesis that, for all $A,B\in\mathbf{R}$, $\partial_{\mathbf{R}(A,B)}$ is an isomorphism of double categories. As shorthand for this we will say that $\partial_\mathbf{R}$ is an isomorphism.
    
    This hypothesis holds whenever $(\mathbf{R},\mathbb{I})$ is finitely complete as a (2,1)-category, for then the cotensor $\mathbf{I}_1 \pitchfork B$ is necessarily isomorphic to the internal hom $B^{\mathbb{I}_1}$ \citep[][Lemma 3.25]{warren08}:
    \begin{align*}
        \mathbf{R}\left(A, \mathbf{I}_1 \pitchfork B \right) \cong \mathbf{Gpd}\left( \mathbf{I}_1, \mathbf{R}(A,B) \right) \cong \mathbf{R}\left(A\times\mathbb{I}_1, B \right) \cong \mathbf{R}\left(A,B^{\mathbb{I}_1}\right)
    \end{align*}
\end{rem}

\subsubsection{Untyped realizer categories}\label{sec:untypedrealizercategories}

We know from other settings \citep{birkedal00a,birkedal00b,robinson01,lietz02} that impredicativity is intimately related to untypedness at the level of realizers. A "universal object" allows us to turn the \textit{a priori} typed notion of realizability given by a category into an untyped one. Traditionally, an object in a category is universal iff every object in the category is a retract of it. In the higher setting, we allow for pseudoretracts.
\begin{defn}\label{def:universalobject}
    An object $U\in\mathbf{R}$ is universal iff for every object $A\in\mathbf{R}$ there is a section and retraction
    \begin{align*}
        \begin{tikzcd}[ampersand replacement=\&]
            A \ar[r, "s_A"] \& U \ar[r, "r_A"] \& A
        \end{tikzcd}
    \end{align*}
    together with a homotopy
    \begin{align*}
        \rho_A: r_A s_A \Rightarrow \mathsf{id}_A
    \end{align*}
\end{defn}

\begin{defn}\label{def:untypedrealizercategory}
    An untyped realizer category $(\mathbf{R},\mathbb{I},U)$ is a (typed) realizer category $(\mathbf{R},\mathbb{I})$ together with a universal object $U\in\mathbf{R}$.
\end{defn}
In particular, an untyped realizer category provides an up-to-homotopy model of the untyped $\lambda$-calculus: $U^U$ is a pseudoretract of $U$.

\begin{example}
    Domain theory is a rich source of models of the untyped $\lambda$-calculus. \textit{Scott complete categories}, introduced by \cite{adamek97}, are categorified \textit{Scott domains} \citep{scott93}. A Scott complete category is a finitely accessible category in which every diagram with a cocone has a colimit. Let $\kappa$ be an inaccessible cardinal. The category $\mathbf{SCC}$ of locally $\kappa$-small Scott complete categories and continuous functors is cartesian closed \citep{adamek97}, contains the interval $\mathbf{I}\in\mathbf{Gpd}$ (see Example \ref{eg:intervalgroupoid}) and contains a universal object \citep{velebil99}. Homotopies with respect to $\mathbf{I}$ in $\mathbf{SCC}$ are arbitrary natural isomorphisms. $\partial_{\mathbf{Gpd}}$ is an isomorphism and thus $\partial_{\mathbf{SCC}}$ is too.
\end{example}

\section{Categories of partitioned groupoidal assemblies}\label{sec:categoriesofpartitionedgroupoidalassemblies}

We first define the constituents of the category $\mathbf{PGAsm}(\mathbf{R},\mathbb{I})$ of partitioned groupoidal assemblies over the realizer category $(\mathbf{R},\mathbb{I})$.
\begin{defn}
\label{def:partitionedgroupoidalassemblies}
    A partitioned groupoidal assembly $X$ is a triple $(X,A,{\left\Vert - \right\Vert}_X)$, where $X\in\mathbf{Gpd}$ is the underlying groupoid, $A\in\mathbf{R}$ is the "realizer type", and
    \begin{align*}
        {\left\Vert - \right\Vert}_X : X \rightarrow \Pi A
    \end{align*}
    is the functor that assigns realizers to objects and isomorphisms of $X$. A partitioned groupoidal assembly $X$ is modest whenever ${\left\Vert-\right\Vert}_X$ is fully faithful.
    
    A morphism
    \begin{align*}
        F: (X,A,{\left\Vert-\right\Vert}_X) \rightarrow (Y,B,{\left\Vert-\right\Vert}_Y)
    \end{align*}
    of partitioned groupoidal assemblies is a functor $F: X \rightarrow Y$ such that there exists $e:A\rightarrow B$ and a natural isomorphism $\epsilon$ as in the following diagram.
    \begin{align}\label{eqn:morphismpgasm}
        \begin{tikzcd}[ampersand replacement=\&]
            X
            \arrow[rr, "F"]
            \arrow[dd, swap, "{\left\Vert-\right\Vert}_X"]
            \& \&
            Y
            \arrow[dd, "{\left\Vert-\right\Vert}_Y"]
            \ar[ddll, Leftarrow, shorten <= 1.5em, shorten >= 1.5em, "\epsilon"]
            \\ \\
            \Pi A
            \arrow[rr, swap, "\Pi(e)"]
            \& \&
            \Pi B
        \end{tikzcd}
    \end{align}
\end{defn}

This formalizes the homotopy BHK interpretation: a realizer of an object $x\in X$ is a point $a\in\Pi A$ (ie. a map $a:\mathbb{I}_0 \rightarrow A$), and a realizer of an identification (isomorphism) $p:x\rightarrow x' \in X$ is a path $\alpha: a \rightarrow a' \in \Pi A$ (ie. a map $\alpha:\mathbb{I}_1\rightarrow A$ such that $\alpha0=a$ and $\alpha1=a'$, where $a'$ realizes $x'$.

The morphism $e: A \rightarrow B$ implements the functor $F$ up to natural isomorphism, as witnessed by $\epsilon$. We will often refer to the \textit{pair} $(e,\epsilon)$ as a realizer for $F$, writing $(e,\epsilon) \Vdash F$.

Identities in $\mathbf{PGAsm}(\mathbf{R},\mathbb{I})$ are identity functors; each is realized by the relevant pair of an identity morphism and an identity natural isomorphism. Composition is performed by pasting squares such as (\ref{eqn:morphismpgasm}) side by side ; explicitly, if $(e,\epsilon)\Vdash F:X\rightarrow Y$ and $(e',\epsilon')\Vdash G:Y\rightarrow Z$ then the composite $GF:X\rightarrow Z$ is realized by
\begin{align*}
    \left( e'e, \left(\epsilon' \ast F \right) \circ \left(\Pi\left(e'\right) \ast \epsilon\right) \right)
\end{align*}

Observe that $\mathbf{PGAsm}(\mathbf{R},\mathbb{I})$ is a quotient of the "super-comma" category $\mathbf{Gpd} \downarrow \Pi$ (to use the terminology of \cite{maclanecwm}---though might nowadays be called the underlying 1-category of the 2-comma category); it is a quotient because realizers of are only required to exist, not carried around as extra data. This should be compared to the $\mathcal{F}$-construction of \cite{robinson01}, which is a set-based, 1-categorical analogue.

The above definition of modesty generalizes the traditional notion, for if $\mathbb{I}$ is a discrete interval and $X$ a set then, given fullness, ${\left\Vert x \right\Vert}_X = {\left\Vert x' \right\Vert}_X$ implies $X(x,x')\neq\emptyset$ and thus  $x=x'$.

\begin{prop}\label{thm:pgasmcartesianclosure}
    $\mathbf{PGAsm}(\mathbf{R},\mathbb{I})$ is weakly cartesian closed. Moreover, the weak exponential object $Y^X$ is modest whenever $Y$ is.
\end{prop}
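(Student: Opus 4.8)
The plan is to exhibit finite products directly, build a weak exponential whose realizer type is the internal hom of $\mathbf{R}$, and transport modesty through a full-faithfulness argument. For products I would take the terminal object to be $(\mathbf{1},1,!)$ with $\mathbf{1}$ the terminal groupoid and $1\in\mathbf{R}$ terminal, and set $X\times Y \coloneqq (X\times Y, A\times B, \langle {\left\Vert-\right\Vert}_X,{\left\Vert-\right\Vert}_Y\rangle)$, using that $\Pi$ preserves products (it is built from the hom-functor and $\mathbb{I}_0=1$); projections are realized by the projections of $\mathbf{R}$, and these checks are routine. Since $\mathbf{R}$ is cartesian closed, I would take the realizer type of $Y^X$ to be $B^A$, with underlying groupoid given by:
\begin{itemize}
    \item objects: pairs $(F,(e,\epsilon))$ of a functor $F:X\rightarrow Y$ with a realizer $(e,\epsilon)\Vdash F$;
    \item morphisms $(F,(e,\epsilon))\rightarrow(F',(e',\epsilon'))$: pairs $(\theta,h)$ with $\theta:F\Rightarrow F'$ a natural isomorphism and $h:e\Rightarrow e'$ a homotopy in $\mathbf{R}$, subject to the coherence condition
    \begin{align*}
        \epsilon' \circ \left( {\left\Vert-\right\Vert}_Y \ast \theta \right) = \left( \Pi(h) \ast {\left\Vert-\right\Vert}_X \right) \circ \epsilon.
    \end{align*}
\end{itemize}
The functor ${\left\Vert-\right\Vert}_{Y^X}:Y^X\rightarrow\Pi(B^A)$ sends $(F,(e,\epsilon))$ to the point named by $e:A\rightarrow B$ and $(\theta,h)$ to the path named by $h$ (via $\mathbf{R}(A\times\mathbb{I}_1,B)\cong\mathbf{R}(\mathbb{I}_1,B^A)$). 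I would verify that vertical composition of $h$-components matches composition of paths in $\Pi(B^A)$, so that ${\left\Vert-\right\Vert}_{Y^X}$ is a functor and $Y^X$ is a well-defined partitioned groupoidal assembly.

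The evaluation $\mathsf{ev}:Y^X\times X\rightarrow Y$ would send $((F,(e,\epsilon)),x)\mapsto F(x)$ and act on morphisms by naturality; it is realized by $(\mathsf{ev}_{\mathbf{R}},\varepsilon)$, where $\mathsf{ev}_{\mathbf{R}}:B^A\times A\rightarrow B$ is the evaluation of $\mathbf{R}$ and the component of $\varepsilon$ at $((F,(e,\epsilon)),x)$ is $\epsilon_x:{\left\Vert F(x)\right\Vert}_Y \cong e\circ{\left\Vert x\right\Vert}_X$. Naturality of $\varepsilon$ in the $X$-variable is naturality of $\epsilon$; naturality in the $Y^X$-variable is \emph{precisely} the coherence condition imposed on morphisms of $Y^X$. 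For the weak universal property, given $(g,\gamma)\Vdash G:Z\times X\rightarrow Y$, I would define the transpose $\tilde{G}:Z\rightarrow Y^X$ by $z\mapsto(G(z,-),(e_z,\epsilon_z))$ with $e_z\coloneqq \lambda g\circ{\left\Vert z\right\Vert}_Z$ the partial application of the currying of $g$ at ${\left\Vert z\right\Vert}_Z$, and on $p:z\rightarrow z'$ by $(G(p,-),\,\lambda g\circ{\left\Vert p\right\Vert}_Z)$. Then $\tilde{G}$ is realized by $(\lambda g,\mathsf{id})$, and a direct computation gives $\mathsf{ev}\circ(\tilde{G}\times X)=G$ on the nose. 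The universal property is only \emph{weak} because the transpose is not unique: it depends on the chosen realizer $(g,\gamma)$, and the assignment $\Phi\mapsto\mathsf{ev}\circ(\Phi\times X)$ is split surjective but not injective on hom-sets, which is exactly the absence of function extensionality.

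For the modesty claim, suppose $Y$ is modest, so ${\left\Vert-\right\Vert}_Y$ is fully faithful; I would show ${\left\Vert-\right\Vert}_{Y^X}$ is fully faithful. For faithfulness, if $(\theta,h)$ and $(\theta',h)$ share the same $h$, the coherence condition together with invertibility of $\epsilon'$ forces ${\left\Vert-\right\Vert}_Y\ast\theta = {\left\Vert-\right\Vert}_Y\ast\theta'$, whence $\theta=\theta'$ by faithfulness of ${\left\Vert-\right\Vert}_Y$. For fullness, given a path $h:e\Rightarrow e'$, the composite $(\epsilon')^{-1}\circ(\Pi(h)\ast{\left\Vert-\right\Vert}_X)\circ\epsilon$ is a natural isomorphism ${\left\Vert-\right\Vert}_Y\circ F\Rightarrow{\left\Vert-\right\Vert}_Y\circ F'$; fullness and faithfulness of ${\left\Vert-\right\Vert}_Y$ lift it, uniquely and componentwise, to a natural isomorphism $\theta:F\Rightarrow F'$, and then $(\theta,h)$ is a morphism of $Y^X$ lying over $h$.

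The most delicate step is pinning down the coherence condition on morphisms of $Y^X$ and checking that it is at once (i) closed under composition and inverses, so that $Y^X$ is a groupoid on which ${\left\Vert-\right\Vert}_{Y^X}$ is a functor, and (ii) exactly what is needed for the realizing isomorphism $\varepsilon$ of $\mathsf{ev}$ to be natural in the $Y^X$-variable. Making the vertical composition of homotopies in $\mathbf{R}$ line up with composition of paths in $\Pi(B^A)$, so that ${\left\Vert-\right\Vert}_{Y^X}$ genuinely preserves composition, is where the bookkeeping is heaviest; once that is in place, the remaining verifications (products, the $\beta$-law, and the full-faithfulness transfer) are essentially formal.
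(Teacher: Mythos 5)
Your proposal is correct and follows essentially the same route as the paper: the same product and terminal object, a weak exponential on realizer type $B^A$ whose underlying groupoid consists of functors bundled with chosen realizers, the same evaluation map and transpose, weakness arising from the choice of realizer, and the same full-faithfulness transfer for modesty. Your explicit boundary-square coherence condition on morphisms of $Y^X$ is equivalent to the paper's formulation via the uniquely determined filler $\zeta$ (both rest on $\partial_{\mathbf{Gpd}}$ being an isomorphism), so the two presentations differ only cosmetically.
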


\begin{proof}
    The terminal object is $(\mathbf{1},\mathbb{I}_0,\Vert \!-\! \Vert_\mathbf{1}: *\mapsto \mathbb{I}_0)$. Let $(X,A,{\left\Vert-\right\Vert}_X), (Y,B,{\left\Vert-\right\Vert}_Y) \in \mathbf{PGAsm}(\mathbf{R},\mathbb{I})$. Their product is
    \begin{align*}
        \left( X\times Y, A\times B, {\left\Vert-\right\Vert}_{X\times Y} \right)
    \end{align*}
    where
    \begin{align*}
        {\left\Vert-\right\Vert}_{X\times Y} \coloneqq \left\langle {\left\Vert \pi_1(-) \right\Vert}_X , {\left\Vert \pi_2(-) \right\Vert}_Y \right\rangle
    \end{align*}
    
    The weak exponential is
    \begin{align*}
        Y^X \coloneqq \left( \mathsf{Real}\left(Y^X\right), B^A, {\left\Vert-\right\Vert}_{Y^X} \right)
    \end{align*}
    where $\mathsf{Real}(Y^X)$ is the groupoid whose objects are triples
    \begin{align*}
        \left( F: X \rightarrow Y, e \in \Pi\left(B^A\right),\epsilon \right)
    \end{align*}
    such that
    \begin{align*}
        \left( \mu(e) \circ \langle *, A \rangle, \epsilon \right) \Vdash F
    \end{align*}
    and whose morphisms $(F,e,\epsilon)\rightarrow(G,e',\epsilon')$ are tuples
    \begin{align*}
        \left( \psi:F\Rightarrow G, f:e\rightarrow e', \zeta \right)
    \end{align*}
    such that
    \begin{align*}
        \left( \mu(f)\circ\mathsf{swap}, \zeta \right) \Vdash \psi
    \end{align*}
    and:
    \begin{align}\label{eqn:real2}
        &\zeta(-,0,-) = \epsilon
        &\zeta(-,1,-) = \epsilon'
    \end{align}
    Actually, the natural isomorphism $\zeta$ is uniquely determined: the equations (\ref{eqn:real2}) mean that the boundary of $\zeta$ is fully specified and $\partial_\mathbf{Gpd}$ is an isomorphism.

    The evaluation morphism is defined:
    \begin{align*}
        &\mathsf{ev}: Y^X \times X \rightarrow Y
        \\
        &\mathsf{ev}(F,e,\epsilon,x) \coloneqq Fx
        \\
        &\mathsf{ev}(\psi,f,p) \coloneqq \psi(p,i)
    \end{align*}
    and realized by $(\mathsf{ev},\epsilon')$, where $\mathsf{ev}$ denotes the evaluation map in $\mathbf{R}$, and where:
    \begin{align*}
        \epsilon'_{(F,e,\epsilon,x)} \coloneqq \epsilon_x
    \end{align*}
    The naturality of $\epsilon'$ is captured the following cube.
    \begin{align*}
        \begin{tikzcd}[ampersand replacement=\&]
        \& \&
        \substack{\mathsf{ev} \circ \langle e, {\left\Vert x \right\Vert}_X \rangle \\ = \Pi(\mu(e)) {\left\Vert x \right\Vert}_X}
        \arrow[rrr, "\epsilon_x"]
        \arrow{ddll}[swap]{\substack{\mathsf{ev} \circ \langle e, {\left\Vert p \right\Vert}_X \rangle \\ = \Pi(\mu(e)) {\left\Vert p \right\Vert}_X}}
        \ar{dddd}[near start]{\substack{\mathsf{ev} \circ \langle f, {\left\Vert x \right\Vert}_X \rangle \\ = \Pi(\mu(f)) {\left\Vert x \right\Vert}_X}}
        \& \& \&
        {\left\Vert Fx \right\Vert}_Y
        \ar{dddd}{{\left\Vert \psi_x \right\Vert}_Y}
        \ar{ddll}{{\left\Vert F(p) \right\Vert}_Y}
        \\ \\
        \substack{\mathsf{ev} \circ \langle e, {\left\Vert x' \right\Vert}_X \rangle \\ = \Pi(\mu(e)) {\left\Vert x' \right\Vert}_X}
        \ar[crossing over]{rrr}[near start, swap]{\epsilon_{x'}}
        \ar{dddd}[swap]{\substack{\mathsf{ev} \circ \langle f, {\left\Vert x' \right\Vert}_X \rangle \\ = \Pi(\mu(f)) {\left\Vert x' \right\Vert}_X}}
        \& \& \&
        {\left\Vert Fx' \right\Vert}_Y
        \\ \\
        \& \&
        \substack{\mathsf{ev} \circ \langle e', {\left\Vert x \right\Vert}_X \rangle \\ = \Pi(\mu(e')) {\left\Vert x \right\Vert}_X}
        \arrow{rrr}[near start]{\epsilon'_{x}}
        \arrow{ddll}[near start, swap]{\substack{\mathsf{ev} \circ \langle e', {\left\Vert p \right\Vert}_X \rangle \\ = \Pi(\mu(e')) {\left\Vert p \right\Vert}_X}}
        \& \& \&
        {\left\Vert Gx \right\Vert}_Y
        \ar{ddll}{{\left\Vert G(p) \right\Vert}_Y}
        \\ \\
        \substack{\mathsf{ev} \circ \langle e', {\left\Vert x' \right\Vert}_X \rangle \\ = \Pi(\mu(e')) {\left\Vert x' \right\Vert}_X}
        \arrow[rrr, swap, "\epsilon'_{x'}"]
        \& \& \&
        \Vert Gx' \Vert_Y
        \ar[from=uuuu,crossing over]{}[near start]{{{\left\Vert \psi_{x'} \right\Vert}_Y}}
    \end{tikzcd}
    \end{align*}

    For the universal property, suppose we have a morphism
    \begin{align*}
        K: \left( Z, C, {\left\Vert-\right\Vert}_Z \right) \times X \rightarrow Y
    \end{align*}
    Pick a realizer $(e,\epsilon)\Vdash K$. With this we define the transpose $\widetilde{K}: Z \rightarrow Y^X$ of $K$. On objects:
    \begin{align*}
        \widetilde{K}(z) \coloneqq \left( K(z,-), e^z, \epsilon^z \right)
    \end{align*}
    where:
    \begin{align*}
        &e^z \coloneqq \lambda\left( e \circ \left( {\left\Vert z \right\Vert}_Z \times A \right) \right)
        \\
        &\epsilon^z_x \coloneqq {\epsilon}_{(z,x)}
    \end{align*}
    This works because
    \begin{align*}
        \Pi \left( \mu\left( \lambda \left( e \circ \left( {\left\Vert z \right\Vert}_Z \times A \right) \right) \right) \circ \langle *,A \rangle \right) {\left\Vert x \right\Vert}_X
        = &\mu\left( \lambda \left( e \circ \left( {\left\Vert z \right\Vert}_Z \times A \right) \right) \right) \circ \langle *,A \rangle \circ {\left\Vert x \right\Vert}_X
        \\
        = & e \circ \left( {\left\Vert z \right\Vert}_Z \times A \right) \langle *, A \rangle \circ {\left\Vert x \right\Vert}_X
        \\
        = & e \circ \left( {\left\Vert z \right\Vert}_Z \times A \right) \circ \langle \mathbb{I}_0, {\left\Vert x \right\Vert}_X \rangle
        \\
        = & e \circ \left\langle {\left\Vert z \right\Vert}_Z, {\left\Vert x \right\Vert}_X \right\rangle
        \\
        = & \Pi(e) {\left\Vert (z,x) \right\Vert}_{Z\times X}
    \end{align*}
    and
    \begin{align*}
        \epsilon_{(z,x)}: \Pi(e) {\left\Vert (z,x) \right\Vert}_{Z\times X} \rightarrow {\left\Vert K(z,x) \right\Vert}_Y
    \end{align*}
    On morphisms:
    \begin{align*}
        \widetilde{K}(r) \coloneqq \left( \psi^r, f^r \right)
    \end{align*}
    where:
    \begin{align*}
        \psi^r(p,i) &\coloneqq K(r,p)
        \\
        f^r &\coloneqq \lambda\left( e \circ \left( {\left\Vert r \right\Vert}_Z \times A \right) \right)
    \end{align*}

    Suppose that $Y$ is modest and let $(F,e,\epsilon),(G,e',\epsilon') \in \mathsf{Real}(Y^X)$. Any $f:e\rightarrow e'$ uniquely determines a morphism $(\psi,f): (F,e,\epsilon) \rightarrow (G,e',\epsilon')$ due to ${\left\Vert-\right\Vert}_Y$ being fully faithful: the image of the component $\psi_x$ under ${\left\Vert-\right\Vert}_Y$ is the unique morphism making the following square commute.
    \begin{align*}
        \begin{tikzcd}[ampersand replacement=\&]
            {\left\Vert Fx \right\Vert}_Y
            \arrow[rr, dashed, "{\left\Vert \psi_x \right\Vert}_Y"]
            \& \&
            {\left\Vert Gx \right\Vert}_Y
            \\ \\
            \substack{\Pi(f) \langle 0, {\left\Vert x \right\Vert}_X \rangle \\ = \Pi(e){\left\Vert x \right\Vert}_X}
            \arrow[uu, "\epsilon_x"]
            \arrow[rr, swap,  "{\Pi(f) \left\langle \mathbb{I}_1, {\left\Vert x \right\Vert}_X \right\rangle}"]
            \& \&
            \substack{\Pi(f) \langle 1, {\left\Vert x \right\Vert}_X \rangle \\ = \Pi(e'){\left\Vert x \right\Vert}_X}
            \arrow[uu, swap, "\epsilon'_x"]
        \end{tikzcd}
    \end{align*}
\end{proof}

\subsection{As 2-categories}\label{sec:asa2category}

$\mathbf{PGAsm}(\mathbf{R},\mathbb{I})$ possesses an interval of its own. The object of coarrows is given by:
\begin{align*}
    &\mathbf{I}_1 \coloneqq \left( \mathbf{I}_1, \mathbb{I}_1, {\left\Vert-\right\Vert}_{\mathbf{I}_1} \right)
    \\
    &{\left\Vert 0 \right\Vert}_{\mathbf{I}_1} \coloneqq 0
    &&{\left\Vert 1 \right\Vert}_{\mathbf{I}_1} \coloneqq 1
    &&&{\left\Vert i \right\Vert}_{\mathbf{I}_1} \coloneqq \mathbb{I}_1
\end{align*}
The other parts of the cogroupoid diagram are likewise obtained by marrying up the corresponding parts of the cogroupoids in $\mathbf{Gpd}$ and $\mathbf{R}$.
\begin{align*}
    &\mathbf{I}_2 \coloneqq \left( \mathbf{I}_2, \mathbb{I}_2, {\left\Vert-\right\Vert}_{\mathbf{I}_2} \right)
    \\
    &{\left\Vert 0 \right\Vert}_{\mathbf{I}_2} \coloneqq i_0 0
    &&{\left\Vert 1 \right\Vert}_{\mathbf{I}_2} \coloneqq i_0 1 = i_1 0
    &&&{\left\Vert 2 \right\Vert}_{\mathbf{I}_2} \coloneqq i_1 1
    \\
    &{\left\Vert i_0 \right\Vert}_{\mathbf{I}_2} \coloneqq i_0
    \quad\quad
    &&{\left\Vert i_1 \right\Vert}_{\mathbf{I}_2} \coloneqq i_1
    \\
    &\mathbf{I}_3 \coloneqq \left( \mathbf{I}_3, \mathbb{I}_3, {\left\Vert-\right\Vert}_{\mathbf{I}_3} \right)
    \\
    &{\left\Vert 0 \right\Vert}_{\mathbf{I}_2} \coloneqq j_0 i_0 0
    &&{\left\Vert 1 \right\Vert}_{\mathbf{I}_2} \coloneqq j_0 i_0 1 = j_0 i_1 0 = j_1 i_0 0
    \\
    &{\left\Vert 2 \right\Vert}_{\mathbf{I}_2} \coloneqq j_0 i_1 1 = j_1 i_0 1 = j_1 i_1 0
    &&{\left\Vert 3 \right\Vert}_{\mathbf{I}_2} \coloneqq j_1 i_1 1
    \\
    &{\left\Vert i_0 \right\Vert}_{\mathbf{I}_2} \coloneqq j_0 i_0
    &&{\left\Vert i_1 \right\Vert}_{\mathbf{I}_2} \coloneqq j_0 i_1 = j_1 i_0
    &&&{\left\Vert i_2 \right\Vert}_{\mathbf{I}_2} \coloneqq j_1 i_1
\end{align*}
The underlying functors of the morphisms $i_0,i_1,2,j_0,j_1 \in \mathbf{PGAsm}(\mathbf{R},\mathbb{I})$ are as in Example \ref{eg:intervalgroupoid} and are realized by the maps $i_0,i_1,2,j_0,j_1\in\mathbf{R}$ respectively.

We will show that $\mathbf{I}_2$ is the pushout of $0,1$; a similar argument works for $\mathbf{I}_3$. Suppose that we are in the following situation:
\begin{align*}
    \begin{tikzcd}[ampersand replacement=\&]
        \mathbf{1}
        \arrow[r, "0"]
        \arrow[d, swap, "1"] 
        \&
        \mathbf{I}_1
        \arrow[d, "i_1"]
        \arrow[ddr, bend left, "G"]
        \\
        \mathbf{I}_1
        \arrow[r, swap, "i_0"]
        \arrow[drr, bend right, swap, "F"]
        \&
        \mathbf{I}_2
        \\
        \& \&
        \left( X , A , {\left\Vert-\right\Vert}_X \right)
    \end{tikzcd}
\end{align*}
where $(e,\epsilon)\Vdash F$. The functor $[G,F]:\mathbf{I}_2 \rightarrow X$ is the universal morphism in $\mathbf{Gpd}$. Define $d \coloneqq e0*$ and observe that
\begin{align*}
    \Pi(d) {\left\Vert 0 \right\Vert}_{\mathbf{I}_2} = \Pi(d) {\left\Vert 1 \right\Vert}_{\mathbf{I}_2} = \Pi(d) {\left\Vert 2 \right\Vert}_{\mathbf{I}_2} = e0
\end{align*}
as well as
\begin{align*}
    \Pi(d) {\left\Vert i_0 \right\Vert}_{\mathbf{I}_2} = \Pi(d) {\left\Vert i_1 \right\Vert}_{\mathbf{I}_2} = e0* = \mathsf{id}_{e0} 
\end{align*}
We can realize $[G,F]$ with $(d,\delta)$, where:
\begin{align*}
    \delta_0 &\coloneqq \epsilon_0 : e0 \rightarrow {\left\Vert F0 \right\Vert}_X
    \\
    \delta_1 &\coloneqq \epsilon_1 e : e0 \rightarrow {\left\Vert F1 \right\Vert}_X
    \\
    \delta_2 &\coloneqq  {\left\Vert G(i) \right\Vert}_X \epsilon_1 e : e0 \rightarrow {\left\Vert G1 \right\Vert}_X
\end{align*}
Naturality of $\epsilon$ guarantees naturality of $\delta$.

As $\mathbf{PGAsm}(\mathbf{R},\mathbb{I})$ is only \textit{weakly} cartesian closed, we cannot apply \citep[Theorem 1.12]{warren12} (which assumes cartesian closedness) to deduce that it becomes a (2,1)-category when 2-cells are taken to be homotopies. However, we will show by hand that this is the case after all.

Let $\phi: F \Rightarrow G: X \rightarrow Y$ and $\psi: G \Rightarrow H: X \rightarrow Y$ be homotopies in $\mathbf{PGAsm}(\mathbf{R},\mathbb{I})$. Their vertical composition $\psi\circ \phi = \psi\phi: F \Rightarrow H: X \rightarrow Y$ is defined as in $\mathbf{Gpd}$:
\begin{align*}
    &\psi\phi : X \times \mathbf{I}_1 \rightarrow Y
    \\
    &\psi\phi(p:x\rightarrow x',i) \coloneqq \psi_{x'} \circ \phi_{x'} \circ F(p) = H(p) \circ \psi_x \circ \phi_x
\end{align*}

Let $(e^\phi, \epsilon^\phi) \Vdash \phi$ and $(e^\psi, \epsilon^\psi) \Vdash \psi$. We obtain a realizer $(e^{\psi\phi}, \epsilon^{\psi\phi}) \Vdash \psi\phi$ as follows. First let $e^{\psi\phi} \coloneqq e^\phi$. Then define:
\begin{align*}
    \epsilon^{\psi\phi}_{(x,0)} &\coloneqq \epsilon^\phi_{(x,0)} : e^\phi {\left\Vert (x,0) \right\Vert}_{X \times \mathbf{I}_1} \rightarrow {\left\Vert Fx \right\Vert}_Y
    \\
    \epsilon^{\psi\phi}_{(x,1)} &\coloneqq {\left\Vert \psi_x \right\Vert}_Y \circ \epsilon^\phi_{(x,1)} : e^\phi {\left\Vert (x,0) \right\Vert}_{X \times \mathbf{I}_1} \rightarrow {\left\Vert Gx \right\Vert}_Y \rightarrow {\left\Vert Fx \right\Vert}_Y
\end{align*}
That this is natural is captured in the following diagram.
\begin{align*}
        \begin{tikzcd}[ampersand replacement=\&]
            e^\phi {\left\Vert (x,0) \right\Vert}_{X \times \mathbf{I}_1}
            \arrow[ddd, "{e^\phi {\left\Vert (p,0) \right\Vert}_{X \times \mathbf{I}_1}}"]
            \arrow[rrr, "{\epsilon^\psi\phi_{(x,0)} = \epsilon^\phi_{(x,0)}}"]
            \arrow[dddddd, bend right=60, swap, "{e^\phi {\left\Vert (p,i) \right\Vert}_{X \times \mathbf{I}_1}}"]
            \& \& \&
            {\left\Vert Fx \right\Vert}_Y
            \arrow[ddd, swap, "\substack{{\left\Vert \psi\phi(p,0) \right\Vert}_Y \\ = {\left\Vert F(p) \right\Vert}_Y}"]
            \arrow[ddddddddd, bend left=40, "{{\left\Vert \psi\phi(p,i) \right\Vert}_Y}"]
            \\ \\ \\
            e^\phi {\left\Vert (x',0) \right\Vert}_{X \times \mathbf{I}_1}
            \arrow[rrr, "{\epsilon^\phi_{(x',0)}}"]
            \arrow[ddd, "{e^\phi {\left\Vert (x',i) \right\Vert}_{X \times \mathbf{I}_1}}"]
            \& \& \&
            {\left\Vert Fx' \right\Vert}_Y
            \arrow[ddd, swap, "\substack{{\left\Vert \psi\phi(p,0) \right\Vert}_Y \\ = {\left\Vert F(p) \right\Vert}_Y}"]
            \\ \\ \\
            e^\phi {\left\Vert (x',1) \right\Vert}_{X \times \mathbf{I}_1}
            \arrow[rrr, swap, "\epsilon^\phi_{(x',1)}"]
            \arrow[dddrrr, swap, "\epsilon^{\psi\phi}_{(x',1)}"]
            \& \& \&
            {\left\Vert Gx' \right\Vert}_Y
            \arrow[ddd, swap, "{\left\Vert \psi_{x'} \right\Vert}_Y"]
            \\ \\ \\
            \& \& \&
            {\left\Vert Hx' \right\Vert}_Y
        \end{tikzcd}
    \end{align*}

Now let $\phi: F\Rightarrow G: X\rightarrow Y$ and $\psi: H\Rightarrow K: Y\rightarrow Z$. Their horizontal composition $\psi \ast \phi: HF \Rightarrow KG: X \rightarrow Z$ is defined as in $\mathbf{Gpd}$:
\begin{align*}
    &\psi\ast\phi : X \times \mathbf{I}_1 \rightarrow Z
    \\
    &\psi\ast\phi(x,0) \coloneqq HFx
    \\
    &\psi\ast\phi(x,1) \coloneqq KGx
    \\
    &\psi\ast\phi(p,i) \coloneqq \psi_{Gx'} \circ H\left(\phi_{x'}\right) \circ H(F(p)) = K(G(p)) \circ \psi_{Gx} \circ H\left(\phi_x\right)
\end{align*}

Let $(e^\phi, \epsilon^\phi) \Vdash \phi$, $(e^\psi, \epsilon^\psi) \Vdash \psi$ and $(e^H, \epsilon^H) \Vdash H$. We obtain a realizer $(e^{\psi\ast\phi}, \epsilon^{\psi\ast\phi}) \Vdash \psi\ast\phi$ as follows.
\begin{align*}
    e^{\psi\ast\phi} &\coloneqq e^H e^\phi
    \\
    \epsilon^{\psi\ast\phi}_{(x,0)} &\coloneqq \left( \left(\epsilon^H \ast \phi \right) \circ \left(\Pi\left(e^H\right) \ast \epsilon^\phi \right) \right)_{(x,0)}
    \\
    \epsilon^{\psi\ast\phi}_{(x,1)} &\coloneqq \left\Vert \psi_{Gx} \right\Vert_Z \circ \left( \left(\epsilon^H \ast \phi \right) \circ \left(\Pi\left(e^H\right) \ast \epsilon^\phi \right) \right)_{(x,1)}
\end{align*}
Naturality is captured in the following diagram.
\begin{align*}
    \adjustbox{scale=0.95}{
    \begin{tikzcd}[ampersand replacement=\&]
        \Pi \left( e^H \right) \left( e^\phi \Vert (x,0) \Vert_{X \times \mathbf{I}_1} \right)
        \arrow{rr}{{\Pi \left( e^H \right) \left( \epsilon^\phi_{(x,0)} \right)}}
        \arrow{dd}{{\Pi \left( e^H \right) \left( e^\phi \Vert (p,0) \Vert_{X \times \mathbf{I}_1} \right)}}
        \arrow[bend right=85, swap]{dddd}[anchor=center, rotate=-90, yshift=-10]{{\Pi \left( e^H e^\phi \right) \Vert (p,i) \Vert_{X \times \mathbf{I}_1} }}
        \arrow[bend left=30]{rrrr}{{\epsilon^{\psi\ast\phi}_{(x,0)}}}
        \& \&
        \Pi \left( e^H \right) \Vert Fx \Vert_Y
        \arrow{dd}{{\Pi \left( e^H \right) \Vert F(p) \Vert_Y}}
        \arrow{rr}{{\epsilon^H_{Fx}}}
        \& \&
        \Vert HFx \Vert_Z
        \arrow{dd}[swap]{\Vert H(F(p)) \Vert_Z}
        \arrow[bend left=40]{dddddd}[anchor=center, rotate=-90, yshift=10]{{\Vert (\psi\ast\phi)(p,i) \Vert_Z}}
        \\ \\
        \Pi \left( e^H \right) \left( e^\phi \Vert (x',0) \Vert_{X \times \mathbf{I}_1} \right)
        \arrow{rr}{{\Pi \left( e^H \right) \left( \epsilon^\phi_{(x',0)} \right)}}
        \arrow{dd}{{\Pi \left( e^H \right) \left( e^\phi \Vert (x',i) \Vert_{X \times \mathbf{I}_1} \right)}}
        \& \&
        \Pi \left( e^H \right) \Vert Fx' \Vert_Y
        \arrow{dd}{{\Pi \left( e^H \right) \left\Vert \phi_{x'} \right\Vert_Y}}
        \arrow{rr}{{\epsilon^H_{Fx'}}}
        \& \&
        \Vert HFx' \Vert_Z
        \arrow{dd}[swap]{\Vert H(F(p)) \Vert_Z}
        \\ \\
        \Pi \left( e^H \right) \left( e^\phi \Vert (x',1) \Vert_{X \times \mathbf{I}_1} \right)
        \arrow{rr}{{\Pi \left( e^H \right) \left( \epsilon^\phi_{(x',1)} \right)}}
        \arrow[bend right=20, swap]{rrrrdd}{{\epsilon^{\phi\ast\psi}_{(x',1)}}}
        \& \&
        \Pi \left( e^H \right) \Vert Gx' \Vert_Y
        \arrow{rr}{{\epsilon^H_{Gx'}}}
        \& \&
        \Vert HGx' \Vert_Z
        \arrow{dd}[swap]{{\left \Vert \psi_{Gx'} \right \Vert_Z}}
        \\ \\
        \& \&
        \& \&
        \Vert KGx' \Vert_Z
    \end{tikzcd}
}
\end{align*}

The identity 2-cell on $F: X \rightarrow Y$ is given by the identity natural transformation, which is realized by
\begin{align*}
    \left( e^F\pi_1, \left(\epsilon^F \ast \pi_1 \right) \circ \left(\Pi\left(e^F\right) \ast {\left\Vert-\right\Vert}_X \pi_1 \right) \right)
\end{align*}
where $(e^F,\epsilon^F)\Vdash F$. The inverse of $\phi: F \Rightarrow G: X \rightarrow Y$ is given by the inverse natural transformation $\phi^{-1}: G \Rightarrow F: X \rightarrow Y$. 
Assuming $(e^\phi,\epsilon^\phi)$ is a realizer for $\phi$, then $\phi^{-1}$ is realized by $(e^\phi, \widetilde{\epsilon})$, where:
\begin{align*}
    &\widetilde{\epsilon}: \Pi(e^\phi) \circ \Vert \!-\! \Vert_{X \times \mathbf{I}_1} \Rightarrow \Vert \!-\! \Vert_Y \circ \phi^{-1}
    \\
    &\widetilde{\epsilon}_{(x,0)} \coloneqq \epsilon^\phi_{(x,1)}
    \\
    &\widetilde{\epsilon}_{(x,1)} \coloneqq \epsilon^\phi_{(x,0)}
\end{align*}
The axioms for (2,1)-categories hold in virtue of them holding for $\mathbf{Gpd}$.



As a 2-categorical variation on the theme of realizability categories inheriting (or improving) structure from the realizer category, we will now show that $\mathbf{PGAsm}(\mathbf{R},\mathbb{I})$ is finitely complete as a (2,1)-category whenever $(\mathbf{R},\mathbb{I})$ is (see Remark \ref{rem:squares}). With this hypothesis, we may obtain a realizer $\mathbb{I}_1\times \mathbb{I}_1\rightarrow A$ by providing a commutative square of paths $\mathbb{I}_1 \rightarrow A$. 

\begin{prop}\label{thm:pgasmfinitecomplete}
    If $\partial_\mathbf{R}$ is an isomorphism of double categories then $\mathbf{PGAsm}(\mathbf{R},\mathbb{I})$ is finitely complete as a (2,1)-category. In particular, if $(\mathbf{R},\mathbb{I})$ is finitely complete as a (2,1)-category then so is $\mathbf{PGAsm}(\mathbf{R},\mathbb{I})$.
\end{prop}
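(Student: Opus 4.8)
The plan is to show that $\mathbf{PGAsm}(\mathbf{R},\mathbb{I})$ admits each generating finite $(2,1)$-limit—finite products, cotensors by the interval $\mathbf{I}_1$, inserters, and equifiers—whence it is finitely complete as a $(2,1)$-category. Finite products and the terminal object are already supplied by \Cref{thm:pgasmcartesianclosure}. The \emph{in particular} clause is then immediate: by \Cref{rem:squares}, if $(\mathbf{R},\mathbb{I})$ is finitely complete as a $(2,1)$-category then $\partial_\mathbf{R}$ is an isomorphism, so it suffices to prove the first statement. For each remaining shape I would follow a uniform recipe: build the limit of the \emph{underlying} groupoids (using that $\mathbf{Gpd}$ is finitely complete), assemble a realizer type from the products and exponentials of the cartesian closed category $\mathbf{R}$, and then define the realizability functor—the only delicate point being its action on the isomorphisms of the limit groupoid.

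The cotensor is where the hypothesis on $\partial_\mathbf{R}$ first bites. For $X = (X,A,{\left\Vert-\right\Vert}_X)$ I take the underlying groupoid of $\mathbf{I}_1 \pitchfork X$ to be the functor groupoid $[\mathbf{I}_1, X]$—objects are isomorphisms of $X$, morphisms are commuting squares—and the realizer type to be the exponential $A^{\mathbb{I}_1}$. Unwinding \Cref{sec:fundamentalgroupoids}, the objects of $\Pi(A^{\mathbb{I}_1})$ are paths $\mathbb{I}_1 \rightarrow A$ and its morphisms are squares $\mathbb{I}_1 \times \mathbb{I}_1 \rightarrow A$. The realizability functor sends an isomorphism $p : x \rightarrow x'$ to the path ${\left\Vert p \right\Vert}_X$, and a commuting square of $X$ to the square of $A$ filling the boundary obtained by applying ${\left\Vert-\right\Vert}_X$ to its four edges. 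Such a filler exists and is unique exactly because $\partial_{\mathbf{R}(\mathbb{I}_0, A)}$ is an isomorphism: a commuting square of paths in $\Pi A$ is precisely a $2$-cell of $\Box\mathbf{R}(\mathbb{I}_0, A)$, and its unique preimage under $\partial$ is a map $\mathbb{I}_1 \times \mathbb{I}_1 \rightarrow A$, i.e.\ a morphism of $\Pi(A^{\mathbb{I}_1})$. The defining isomorphism $\mathbf{PGAsm}(Z, \mathbf{I}_1 \pitchfork X) \cong \mathbf{Gpd}(\mathbf{I}_1, \mathbf{PGAsm}(Z,X))$ then follows, again using uniqueness of fillers to realize the transposes.

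Inserters and equifiers follow the same template. The inserter of a parallel pair $F,G : X \rightarrow Y$ has underlying groupoid the inserter in $\mathbf{Gpd}$—objects are pairs $(x,\theta)$ with $\theta : Fx \rightarrow Gx$—and I take the realizer type to be $A \times B^{\mathbb{I}_1}$, realizing $x$ by ${\left\Vert x \right\Vert}_X$ and the comparison isomorphism $\theta$ by a path, with the action on the square-shaped morphisms of the inserter again furnished by the inverse of $\partial_\mathbf{R}$. The equifier of $\phi,\psi : F \Rightarrow G$ is the simplest: its underlying groupoid is the full subgroupoid of $X$ on those $x$ with $\phi_x = \psi_x$, carried with the restricted realizer type $A$ and restricted realizability functor, so that it is a full sub-assembly of $X$; conical equalizer-type limits are treated identically as sub-assemblies.

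I expect the main obstacle to be the \emph{two-dimensional} universal properties rather than the one-dimensional ones. For each limit one must check that a $2$-cell between two cones is realized, and a realizer for such a $2$-cell is a square in the ambient realizer type whose boundary is already pinned down by the cone realizers. This is exactly the problem of filling a commuting square of paths, and is solved uniquely by the assumed invertibility of $\partial_\mathbf{R}$ (cf.\ \Cref{thm:boundaries}). The remaining bookkeeping—that the chosen fillers are natural, respect composition and identities, and that the $(2,1)$-category axioms transfer—reduces to the corresponding facts in $\mathbf{Gpd}$, since every construction above is defined on underlying groupoids exactly as it is there.
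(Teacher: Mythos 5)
Your proposal is correct in substance but follows a different decomposition of finite (2,1)-completeness than the paper. The paper invokes Street's characterization---a (2,1)-category is finitely complete iff it has a terminal object, pullbacks and pseudopullbacks (isocomma objects)---and then constructs exactly those three limits, leaning on Warren's observation that a 2-category arising from an interval inherits whatever conical limits the underlying 1-category has, so that only the pseudopullback requires real two-dimensional work. You instead go through the PIE-style generating family (products, cotensors by $\mathbf{I}_1$, inserters, equifiers), supplemented by conical equalizers. The two routes are interderivable and the constructions essentially coincide: the paper's pseudopullback $F\downarrow G$, with realizer type $A\times B\times C^{\mathbb{I}_1}$ and realizability functor using $\lambda(\partial^{-1}(-))$ on the comparison data, is precisely your inserter of $F\pi_1, G\pi_2 : X\times Y \rightarrow Z$; and in both arguments the entire content of the hypothesis on $\partial_\mathbf{R}$ is the same, namely that a commutative square of paths in $\Pi C$ has a unique filler $\mathbb{I}_1\times\mathbb{I}_1\rightarrow C$, which is what lets one realize the square-shaped morphisms of the limit groupoid and the two-dimensional universal property. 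What the paper's route buys is economy (one nontrivial construction instead of three); what yours buys is that the cotensor $\mathbf{I}_1\pitchfork X$ is exhibited explicitly as a strict construction on the arrow groupoid, which is a useful complement to the paper's \emph{weak} exponential $X^{\mathbf{I}_1}$ used later as a path object.

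One small caution: finite products, inserters and equifiers generate only the PIE-limits, which do not include equalizers or pullbacks, so your opening list of "generating" limits is incomplete as stated. You do repair this with the closing remark that conical equalizer-type limits are handled as sub-assemblies with restricted realizer type and realizability functor; make sure that remark is promoted to an actual step of the proof rather than an aside, since without it the argument does not yield all finite weighted limits.
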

\begin{proof}
    A (2,1)-category is finitely complete iff it has a terminal object, pullbacks and pseudopullbacks ((iso)comma objects) \citep{street76}. A 2-category $(\mathbf{C},\mathbb{I})$ arising from an interval has whatever conical (co)limits $\mathbf{C}$ has in the 1-dimensional sense \citep[][Lemma 2.1]{warren12}. A terminal object was exhibited in Proposition \ref{thm:pgasmcartesianclosure}.

    The pullback $F^*Y$ of $G: (Y, B, {\left\Vert-\right\Vert}_Y)\rightarrow (Z,C,{\left\Vert-\right\Vert}_Z)$ along $F: (X,A,{\left\Vert-\right\Vert}_X) \rightarrow Z$ is given by
    \begin{align*}
        \left( F^*Y, A \times B, {\left\Vert-\right\Vert}_{F^* Y} \right)
    \end{align*}
    where $F^*Y$ is the pullback in $\mathbf{Gpd}$ and the realizability functor is defined
    \begin{align*}
        {\left\Vert-\right\Vert}_{F^*Y} \coloneqq \left\langle {\left\Vert \pi_1(-) \right\Vert}_X, {\left\Vert \pi_2(-) \right\Vert}_Y \right\rangle
    \end{align*}
    The projection functors are realized by the respective projections from $\mathbf{R}$. If $S: (W,D,{\left\Vert-\right\Vert}_W) \rightarrow X$ and $T: W \rightarrow Y$ are such that $FS = GT$ then we obtain the universal morphism:
    \begin{align*}
        &[S,T] : W \rightarrow F^*Y
        \\
        &[S,T](-) \coloneqq (S(-),T(-))
    \end{align*}
    that is realized by
    \begin{align*}
        \left( \langle e, e' \rangle: D \rightarrow A \times B, \langle \epsilon, \epsilon' \rangle \right)
    \end{align*}
    where $(e,\epsilon) \Vdash S$ and $(e',\epsilon') \Vdash T$.
    
    The pseudopullback of $G: Y \rightarrow Z$ along $F: X \rightarrow Z$ is given by
    \begin{align*}
        F\downarrow G \coloneqq \left( F\downarrow G, A\times B\times C^{\mathbb{I}_1}, \Vert \!-\! \Vert_{F\downarrow G} \right)
    \end{align*}
    where $F\downarrow G$ is the pseudopullback in $\mathbf{Gpd}$ and the realizability functor is defined:
    \begin{align*}
        {\left\Vert (x, y, r) \right\Vert}_{F\downarrow G} &\coloneqq \left\langle \Vert x \Vert_X, \left\Vert y \right\Vert_Y, \lambda \left\Vert r \right\Vert_Z \right\rangle
        \\
        \left\Vert (p,q) \right\Vert_{F\downarrow G} &\coloneqq \left\langle \left\Vert p \right\Vert_X, \left\Vert q \right\Vert_Y, \lambda\left( \partial^{-1} \left\Vert (p,q,r,r') \right\Vert_Z \right) \right\rangle
    \end{align*}
    where $\Vert (p,q,r,r') \Vert_Z$ denotes the commutative square:
    \begin{align*}
        \begin{tikzcd}[ampersand replacement=\&]
            \left\Vert Fx \right\Vert_Z
            \arrow{r}[yshift=3]{\left\Vert F(p) \right\Vert_Z}
            \arrow{d}[swap, xshift=-3]{\left\Vert r \right\Vert_Z}
            \&
            \left\Vert Fx' \right\Vert_Z
            \arrow{d}[xshift=3]{\left\Vert r' \right\Vert_Z}
            \\
            \left\Vert Gy \right\Vert_Z
            \arrow{r}[swap, yshift=-3]{\left\Vert G(q) \right\Vert_Z}
            \&
            \left\Vert Gy' \right\Vert_Z
        \end{tikzcd}
    \end{align*}
    The projection functors are realized by the respective projections from $\mathbf{R}$.
    
    If we have morphisms $S: W \rightarrow X$ and $T: W \rightarrow Y$ and a 2-cell $\psi:FS \Rightarrow GT$ then we obtain the universal morphism:
    \begin{align*}
        & \left[S,T,\psi\right] : W \rightarrow F\downarrow G
        \\
        &\left[S,T,\psi\right](w) \coloneqq \left( Sw,Tw, \psi(w,i) \right)
        \\
        &\left[S,T,\psi\right](v) \coloneqq \left( S(v),T(v) \right)
    \end{align*}
    Finally, we construct a realizer $(e,\epsilon)\Vdash [S,T,\psi]$ from realizers $(e^S,\epsilon^S) \Vdash S$, $(e^T,\epsilon^T) \Vdash T$ and $(e^\psi,\epsilon^\psi) \Vdash \psi$. For the first component:
    \begin{align*}
        &e: D \rightarrow A \times B \times C^{\mathbb{I}_1}
        \\
        &e \coloneqq \left\langle e^S, e^T, \lambda\left( e^\psi \right) \right\rangle
    \end{align*}
    For the second component:
    \begin{align*}
        \epsilon_w \coloneqq \left\langle \epsilon^S_w, \epsilon^T_w, \lambda \left( \partial^{-1} \left( \epsilon^\psi_{(w,i)} \right) \right) \right\rangle
    \end{align*}
    where $\epsilon^\psi_{(w,i)}$ denotes the following (naturality) square of paths in $\Pi C$.
    \begin{align*}
        \begin{tikzcd}[ampersand replacement=\&]
            e^\psi \left\Vert (w,0) \right\Vert_{W\times \mathbf{I}_1}
            \arrow[rr, "{\epsilon^\psi_{(w,0)}}"]
            \arrow[dd, swap, "{e^\psi \left\Vert (w,i) \right\Vert_{W\times \mathbf{I}_1}}"]
            \& \&
            \substack{\left\Vert \psi(w,0) \right\Vert_Z \\ = \left\Vert FSw \right\Vert_Z}
            \arrow[dd, "{\left\Vert \psi(w,i) \right\Vert_Z}"]
            \\ \\
            e^\psi \left\Vert (w,1) \right\Vert_{W\times \mathbf{I}_1}
            \arrow[rr, swap, "{\epsilon^\psi_{(w,1)}}"]
            \& \&
            \substack{\left\Vert \psi(w,1) \right\Vert_Z \\ = \left\Vert GTw \right\Vert_Z}
        \end{tikzcd}
    \end{align*}
    Double functoriality of $\partial^{-1}$ guarantees the naturality of $\epsilon$.
\end{proof}

\subsection{As path categories}\label{sec:asapathcategory}

Path categories, introduced by \cite{vandenbergm18} \citep[see also][]{vandenberg18a}, are a slight strengthening of Brown's (\citeyear{brown73}) \textit{categories of fibrant objects}, but nevertheless constitute a relatively minimal setting in which to develop abstract homotopy theory as well as model intensional type theory (in general, path categories model a version of type theory in which the computation rule for identity types holds propositionally). A path category $\mathbf{C}$ comes equipped with two classes of maps, fibrations and (weak) equivalences, and satisfies the axioms listed below. An acyclic fibration is a map that is both a fibration and an equivalence. Dependent types are modelled by fibrations.
\begin{enumerate}[label= (PC\arabic*),leftmargin=*,labelindent=1em]
    \item Isomorphisms are fibrations and fibrations are closed under composition.
    \item The pullback of a fibration along any other map exists and is again a fibration.
    \item $\mathbf{C}$ has a terminal object $1$ and every map $X \rightarrow 1$ is a fibration.
    \item Isomorphisms are equivalences.
    \item Equivalences satisfy the 2-out-of-6 property.
    \item Every object in $\mathbf{C}$ has a path object.
    \item Every acyclic fibration has a section.
    \item The pullback of an acyclic fibration along any other map exists and is again an acyclic fibration.
\end{enumerate}
A path object $\mathcal{P}X\in\mathbf{C}$ for an object $X\in\mathbf{C}$ is a factorisation of the diagonal by an equivalence $r$ followed by a fibration $\langle s, t \rangle$.
\begin{align*}
    \begin{tikzcd}[ampersand replacement=\&]
        \&
        \mathcal{P}X
        \arrow[rd, "{\langle s, t \rangle}"]
        \\
        X
        \arrow[ru, "r"]
        \arrow[rr, swap, "\Delta_X"]
        \& \&
        X\times X
    \end{tikzcd}
\end{align*}
Path objects give rise to a nation of (fibrewise) homotopy between morphisms.

Given an object $X\in\mathbf{C}$, let $\mathbf{C}(X)$ be the full subcategory of the slice $\mathbf{C}/X$ spanned by the fibrations. $\mathbf{C}(X)$ is a path category, where a morphism in $\mathbf{C}(X)$ is a fibration or equivalence iff it is so in $\mathbf{C}$. The following result, which is Proposition 2.6 in \citep{vandenbergm18} and is proved on p. 428 of \citep{brown73}, is used in \Cref{sec:impredicativeuniversesofmodestfibrations}.
\begin{lem}[Brown's lemma]\label{thm:brownslemma}
    For any map $f:Y\rightarrow X$, the pullback functor
    \begin{align*}
        f^*: \mathbf{C}(X) \rightarrow \mathbf{C}(Y)
    \end{align*}
    preserves both fibrations and equivalences.
\end{lem}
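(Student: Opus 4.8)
The plan is to prove the two preservation statements separately, treating fibrations as the easy case and equivalences as the substantive one, with the latter resting on Brown's factorization lemma. Throughout I use that $f^*\colon \mathbf{C}(X)\to\mathbf{C}(Y)$ is a well-defined functor: the pullbacks it forms exist because the objects of $\mathbf{C}(X)$ are fibrations and (PC2) guarantees pullbacks of fibrations, while the resulting composites $f^*A\to Y$ are again fibrations, so $f^*$ genuinely lands in $\mathbf{C}(Y)$. I also record the consequences of (PC4) and (PC5): since isomorphisms are equivalences and equivalences satisfy 2-out-of-6, the equivalences contain the identities, are closed under composition, and satisfy 2-out-of-3.

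The first observation, which powers both halves, is that for any morphism $g\colon (A,p)\to(B,q)$ of $\mathbf{C}(X)$ the induced map $f^*g\colon f^*A\to f^*B$ is itself a pullback of $g$. Indeed, $f^*A$ is the pullback of $p$ along $f$ and $f^*B$ the pullback of $q$ along $f$; since $q\circ g=p$, the pasting lemma for pullbacks identifies the square with corners $f^*A, A, f^*B, B$ (vertical maps $f^*g$ on the left and $g$ on the right) as a pullback, so $f^*g$ is the pullback of $g$ along the projection $f^*B\to B$. Applying (PC2) when $g$ is a fibration shows $f^*$ preserves fibrations; applying (PC8) when $g$ is an acyclic fibration shows $f^*$ preserves acyclic fibrations. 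This disposes of the fibration claim and supplies the key input for equivalences.

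For equivalences I will use Brown's factorization lemma inside the path category $\mathbf{C}(X)$: every map $g$ of $\mathbf{C}(X)$ factors as $g=q\circ s$, where $q$ is a fibration and $s$ is a section of some acyclic fibration $p$ (so $p\circ s=\mathrm{id}$). One builds this from a path object $\mathcal{P}B$ via the mapping path space $A\times_B \mathcal{P}B$: the first projection is the pullback along $g$ of the source map $\mathcal{P}B\to B$, which is itself an acyclic fibration, hence is an acyclic fibration by (PC8); $s$ is its canonical section; and $q$ is the composite with the target map. When $g$ is moreover an equivalence, $s$ is an equivalence (a section of an acyclic fibration, by 2-out-of-3), whence $q$ is an equivalence by 2-out-of-3 and, being a fibration, an acyclic fibration. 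Since $\mathbf{C}(X)$ is a path category whose fibrations and equivalences are exactly those of $\mathbf{C}$, this factorization lives over $X$ with precisely the stated classes.

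It then remains to push the factorization through $f^*$. Given an equivalence $g=q\circ s$ as above, functoriality gives $f^*g=f^*q\circ f^*s$. By the previous paragraph $f^*q$ is an acyclic fibration, hence an equivalence, and $f^*p$ is an acyclic fibration, hence an equivalence; from $f^*p\circ f^*s=f^*(p\circ s)=\mathrm{id}$ together with 2-out-of-3 I conclude that $f^*s$ is an equivalence. Thus $f^*g$ is a composite of equivalences and therefore an equivalence. The main obstacle is the factorization lemma itself, namely constructing the mapping path space and verifying that the projection is an acyclic fibration and the target composite a fibration; everything after that is bookkeeping with 2-out-of-3 and the pullback-stability axioms (PC2) and (PC8). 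A secondary point to get right is that the entire argument must be carried out in the slice path category $\mathbf{C}(X)$, which is legitimate precisely because $\mathbf{C}(X)$ is a path category with the induced classes of fibrations and equivalences.
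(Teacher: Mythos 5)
Your proof is correct and is exactly the standard argument: the paper itself does not prove this lemma but defers to Proposition 2.6 of van den Berg--Moerdijk and p.~428 of Brown (1973), whose proof is precisely your route via the factorization lemma (mapping path space built from a path object in $\mathbf{C}(X)$, acyclic fibrations pulled back by (PC8), and 2-out-of-3). Nothing to add beyond noting that you have correctly identified the factorization lemma as the one genuine input.
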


From hereon in, we assume that $\partial_{\mathbf{R}}$ is an isomorphism of double categories. To make $\mathbf{PGAsm}(\mathbf{R},\mathbb{I})$ into a path category we take fibrations and equivalences to be isofibrations and equivalences respectively, both internal to $\mathbf{PGAsm}(\mathbf{R},\mathbb{I})$ considered as a 2-category. Recall (eg. from \citep[][Section 7.2]{lack10}) that a morphism $p:Y\rightarrow Z$ in a 2-category $\mathbf{C}$ is defined to be an isofibration iff every invertible 2-cell
\begin{align*}
     \begin{tikzcd}[ampersand replacement=\&]
        X
        \arrow[rr, "f"]
        \arrow[ddr, swap, "g" {name=Z}]
        \& \&
        Y
        \arrow[ddl, "p"]
        \\ \\
        \&
        Z
        \arrow[from=1-3, to=Z, Leftarrow, shorten <= 2.6em, shorten >= 1.4em, Rightarrow, yshift=3, near end, "\phi" {xshift=2.5}]
    \end{tikzcd}
\end{align*}
lifts to an invertible 2-cell
\begin{align*}
     \begin{tikzcd}[ampersand replacement=\&]
        X
        \arrow[rr, bend left=55, "f" {name=P}]
        \arrow[rr, swap, "\phi^* f" {name=Q}]
        \arrow[ddr, swap, "g" {name=Z}]
        \& \&
        Y
        \arrow[ddl, "p"]
        \\ \\
        \&
        Z
        \arrow[from=1-3, to=Z, equals, shorten <= 3.0em, shorten >= 2.0em, yshift=-5, xshift=-1]
        \arrow[from=P, to=Q, Rightarrow, shorten <= 0.5em, shorten >= 0.55em, "{\overline{\phi}(f)}" {xshift=3}]
    \end{tikzcd}
\end{align*}
This reduces to the usual notion in the case $\mathbf{C}=\mathbf{Gpd}$.

\begin{lem}
    A morphism in $\mathbf{PGAsm}(\mathbf{R},\mathbb{I})$ is a fibration (internal isofibration) if and only if it is an isofibration in $\mathbf{Gpd}$.
\end{lem}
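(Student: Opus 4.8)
The plan is to study the forgetful 2-functor $U\colon\mathbf{PGAsm}(\mathbf{R},\mathbb{I})\to\mathbf{Gpd}$ sending a partitioned groupoidal assembly to its underlying groupoid, a morphism to its underlying functor, and a homotopy (2-cell) to its underlying natural isomorphism. Since realizers are only required to exist and are not carried as data, a $\mathbf{PGAsm}$-morphism is determined by its underlying functor and a 2-cell by its underlying natural isomorphism; hence $U$ is faithful on both 1-cells and 2-cells. The engine of the whole argument is a single lemma: \emph{every natural isomorphism between realizable functors is itself realizable}, so that $U$ is moreover \emph{iso-full} on hom-groupoids. Given $(e^F,\epsilon^F)\Vdash F$ and a natural isomorphism $\theta\colon F\Rightarrow G$, one realizes the corresponding homotopy $X\times\mathbf{I}_1\to Y$ by transporting the realizer of $F$ along $\theta$: take the realizer morphism $e^F\pi_1$ and set $\epsilon_{(x,0)}\coloneqq\epsilon^F_x$ and $\epsilon_{(x,1)}\coloneqq\lVert\theta_x\rVert_Y\circ\epsilon^F_x$, naturality following from naturality of $\theta$ and of $\epsilon^F$ exactly as in the construction of the identity 2-cell realizer above. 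Consequently $U$ reflects and creates precisely the invertible-2-cell lifts appearing in the definition of an internal isofibration.

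For the forward direction, suppose $p\colon Y\to Z$ is an internal isofibration in $\mathbf{PGAsm}$ and let $\beta\colon (Up)(y)\to z$ be an isomorphism of $Z$ that we must lift. I would instantiate the isofibration property at $X=\mathbf{1}$, with $f\colon\mathbf{1}\to Y$ and $g\colon\mathbf{1}\to Z$ the (evidently realizable) morphisms picking out $y$ and $z$, and with the invertible 2-cell $\phi$ taken to be $\beta$, which is a legitimate 2-cell by the lemma. The resulting lift $\overline{\phi}(f)\colon f\Rightarrow\phi^{*}f$ has as underlying data an object $y'\coloneqq(\phi^{*}f)(*)$ together with an isomorphism $y\to y'$ lying over $\beta$ and satisfying $(Up)(y')=z$; this is exactly the lift required for $Up$ to be an isofibration of groupoids.

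For the converse, assume $Up$ is an isofibration in $\mathbf{Gpd}$, which by the remark preceding the statement coincides with its being an internal isofibration in the 2-category $\mathbf{Gpd}$. Given an invertible 2-cell $\phi\colon pf\Rightarrow g$ in $\mathbf{PGAsm}$, I would apply $U$ and use the $\mathbf{Gpd}$-lift to obtain a functor $H\colon UX\to UY$ and a natural isomorphism $\theta\colon Uf\Rightarrow H$ with $(Up)\circ H=Ug$ strictly and $(Up)\cdot\theta=U\phi$. Transporting a realizer of $f$ along $\theta$, as in the lemma, exhibits $H$ as a $\mathbf{PGAsm}$-morphism $\phi^{*}f$, and a second application of the lemma realizes $\theta$ as a 2-cell $\overline{\phi}(f)\colon f\Rightarrow\phi^{*}f$. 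Because $\mathbf{PGAsm}$-morphisms and 2-cells are determined by their underlying functors and natural isomorphisms, the strict identity $(Up)\circ H=Ug$ upgrades to $p\circ\phi^{*}f=g$ and the whiskering identity $(Up)\cdot\theta=U\phi$ upgrades to $p\cdot\overline{\phi}(f)=\phi$, which is precisely the data making $p$ an internal isofibration.

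The main obstacle is the lemma that natural isomorphisms between realizable functors are realizable, together with the bookkeeping that $U$ is faithful on both 1- and 2-cells; once these are secured the proof is essentially a transfer of the 2-categorical lifting property across $U$. I would take particular care over two points: that the lift produced in $\mathbf{Gpd}$ is \emph{strict}, so that $p\circ\phi^{*}f=g$ holds on the nose rather than merely up to isomorphism (this is exactly what an isofibration of groupoids supplies); and over the orientation conventions for $\phi$, which are harmless here precisely because all 2-cells in sight are invertible.
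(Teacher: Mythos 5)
Your proposal is correct and follows essentially the same route as the paper: the paper also takes the lift $\phi^*F$ and $\overline{\phi}(F)$ computed in $\mathbf{Gpd}$ and realizes them by transporting the realizer of $F$ along the components of the lifted 2-cell (its $\delta_x \coloneq \left\Vert \overline{(\phi_x)}(Fx)\right\Vert_Y \circ \epsilon_x$ and $\gamma_{(x,0)}\coloneq\epsilon_x$, $\gamma_{(x,1)}\coloneq\left\Vert \overline{(\phi_x)}(Fx)\right\Vert_Y\circ\epsilon_x$ are exactly the content of your ``iso-fullness'' lemma). Your instantiation at $X=\mathbf{1}$ merely spells out the forward direction that the paper dismisses as trivial; the underlying computations coincide.
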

\begin{proof}
    The forwards direction is trivial. For the backwards direction, suppose we have $F: (X,A,\left\Vert-\right\Vert) \rightarrow (Y,B,\left\Vert-\right\Vert_Y)$ and $\phi: PF \Rightarrow G: X \rightarrow (Z,C,\left\Vert-\right\Vert_Z) \in \mathbf{PGAsm}(\mathbf{R},\mathbb{I})$, where $P: Y \rightarrow Z$ is an isofibration in $\mathbf{Gpd}$. Define the functor:
    \begin{align*}
        &\phi^* F: X \rightarrow Y
        \\
        &\phi^* F (x) \coloneq \left(\phi_x\right)^* Fx
        \\
        &\phi^* F (p:x\rightarrow x') \coloneq \overline{\left(\phi_x\right)}(Fx') \circ F(p) \circ {\overline{\left(\phi_x\right)}(Fx)}^{-1}
    \end{align*}
    where $\left(\phi_x\right)^* Fx$ is the transport of $Fx$ along the path $\phi_x$. So indeed $P \circ \phi^* F = G$. Moreover, define the natural transformation:
    \begin{align*}
        &\overline{\phi}(F): X \times \mathbf{I}_1 \rightarrow Y
        \\
        &\overline{\phi}(F)(p,i) \coloneq \overline{\left(\phi_{x'}\right)}(Fx') \circ F(p)
    \end{align*}
    We have $P \ast \overline{\phi}(F) = \phi$. We need to realize both $\phi^* F$ and $\overline{\phi}(F)$.

    Suppose that $(e,\epsilon)\Vdash F$. We obtain a realizer $(e, \delta) \Vdash \phi^* F$ by setting
    \begin{align*}
        \delta_x \coloneq {\left\Vert \overline{\left(\phi_x\right)} (Fx) \right\Vert}_Y \circ \epsilon_x : \Pi(e)\left\Vert x \right\Vert_X \rightarrow {\left\Vert \phi^* F(x) \right\Vert}_Y
    \end{align*}
    The following diagram encapsulates the naturality of $\delta$.
    \begin{align*}
        \begin{tikzcd}[ampersand replacement=\&]
            \Pi(e)\left\Vert x \right\Vert_X
            \arrow[dd, swap, "\Pi(e)\left\Vert p \right\Vert_X"]
            \arrow[rr, "\epsilon_x"]
            \& \&
            \left\Vert Fx \right\Vert_Y
            \arrow[dd, "\left\Vert F(p) \right\Vert_Y"]
            \arrow[rr, "{\left\Vert \overline{\left(\phi_x\right)} (Fx) \right\Vert}_Y"]
            \& \&
            \left\Vert \phi^* F (x) \right\Vert_Y
            \arrow[dd, "\left\Vert \phi^* F (p) \right\Vert_Y"]
            \\ \\
            \Pi(e)\left\Vert x' \right\Vert_X
            \arrow[rr, swap, "\epsilon_{x'}"]
            \& \&
            \left\Vert Fx' \right\Vert_Y
            \arrow[rr, swap, "{\left\Vert \overline{\left(\phi_x'\right)} (Fx') \right\Vert}_Y"]
            \& \&
            \left\Vert \phi^* F (x') \right\Vert_Y
        \end{tikzcd}
    \end{align*}

    We obtain a realizer $(e\pi_1: A\times \mathbb{I}_1 \rightarrow B, \gamma) \Vdash \overline{\phi}(F)$ by setting:
    \begin{align*}
        \gamma_{(x,0)} &\coloneq \epsilon_x: \Pi(e\pi_1) \left\Vert (x,0) \right\Vert_{X\times \mathbf{I}_1} = \Pi(e)\left\Vert x \right\Vert_X \rightarrow \left\Vert Fx \right\Vert_Y
        \\
        \gamma_{(x,1)} &\coloneq \left\Vert \overline{\left( \phi_x \right)} (Fx) \right\Vert_Y \circ \epsilon: \Pi(e\pi_1) \left\Vert (x,1) \right\Vert_{X\times \mathbf{I}_1} = \Pi(e)\left\Vert x \right\Vert_X \rightarrow \left\Vert \phi^* F (x) \right\Vert_Y
    \end{align*}
    The diagram above also establishes the naturality of $\gamma$, noting that $\Pi(e\pi_1)\left\Vert - \right\Vert_{X\times \mathbf{I}_1} = \Pi(e)\left\Vert \pi_1(-) \right\Vert_X$.
\end{proof}

This is worth emphasizing: given an isofibration $F: (X,A,\left\Vert-\right\Vert) \rightarrow (Y,B,\left\Vert-\right\Vert_Y)$, the fibres $X_y$ are partitioned groupoidal assemblies, where $\left\Vert-\right\Vert_{X_y}$ is just the restriction of $\left\Vert-\right\Vert_X$. Given $q:y \rightarrow y' \in Y$, the transport functor $q^*: X_y \rightarrow X_{y'}$ is realized by $(\mathsf{id}_A, \epsilon)$, where $\epsilon_x \coloneq \left\Vert \overline{q}(x) \right\Vert_X$ (the $F$-lift of $q$ at $x$). This relies on the fact that functors are implemented up to isomorphism.

Using the axiom of choice, every isofibration in $\mathbf{Gpd}$ is equivalent to a split one. The upshot of the following lemma is that any such equivalence can be upgraded to one in $\mathbf{PGAsm}(\mathbf{R},\mathbb{I})$.
\begin{lem}\label{thm:replete}
    Let $X=(X,A,{\left\Vert-\right\Vert}_X) \in \mathbf{PGAsm}(\mathbf{R},\mathbb{I})$ and suppose that we have an equivalence:
    \begin{align*}
        &F: X \rightarrow Y
        &&G: Y \rightarrow X
        &&&\phi: \mathsf{id}_X \Rightarrow GF
        &&&&\psi: \mathsf{id}_Y \Rightarrow FG
    \end{align*}
    in $\mathbf{Gpd}$. Then $Y$ can be equipped with the structure of a partitioned groupoidal assembly in such a way that the above equivalence is elevated to one in $\mathbf{PGAsm}(\mathbf{R},\mathbb{I})$. Moreover, if $X$ is modest then so is $Y$.
\end{lem}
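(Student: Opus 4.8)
The plan is to transport the realizability structure from $X$ to $Y$ along the \emph{backward} functor $G$. Concretely, I give $Y$ the realizer type $A$ and define its realizability functor by precomposition with $G$:
\begin{align*}
    {\left\Vert-\right\Vert}_Y \coloneqq {\left\Vert-\right\Vert}_X \circ G : Y \rightarrow \Pi A .
\end{align*}
With this choice $G$ is trivial to realize: since $\Pi(\mathsf{id}_A)\circ{\left\Vert-\right\Vert}_Y = {\left\Vert-\right\Vert}_X\circ G$ on the nose, we get $(\mathsf{id}_A,\mathsf{id})\Vdash G$. For $F$ the isomorphism $\phi:\mathsf{id}_X\Rightarrow GF$ supplies the missing data: ${\left\Vert-\right\Vert}_X\ast\phi$ is a natural isomorphism ${\left\Vert-\right\Vert}_X\Rightarrow{\left\Vert-\right\Vert}_X\circ GF={\left\Vert-\right\Vert}_Y\circ F$, so $(\mathsf{id}_A,{\left\Vert-\right\Vert}_X\ast\phi)\Vdash F$. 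The modesty clause is then immediate: an equivalence of groupoids is in particular fully faithful, so if ${\left\Vert-\right\Vert}_X$ is fully faithful then so is the composite ${\left\Vert-\right\Vert}_X\circ G={\left\Vert-\right\Vert}_Y$, whence $Y$ is modest.

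It remains to promote the coherence isomorphisms $\phi$ and $\psi$ to 2-cells of $\mathbf{PGAsm}(\mathbf{R},\mathbb{I})$, i.e.\ to realize them as homotopies. The point to exploit is that both endpoints of each homotopy are realized by the \emph{same} underlying map $\mathsf{id}_A$, so I can use the constant homotopy $\pi_1:A\times\mathbb{I}_1\rightarrow A$ as the first component of the realizer. Viewing $\phi$ as a functor $X\times\mathbf{I}_1\rightarrow X$, I set $e^\phi\coloneqq\pi_1$ and specify the accompanying natural isomorphism on objects by
\begin{align*}
    &\epsilon^\phi_{(x,0)} \coloneqq \mathsf{id}_{{\left\Vert x \right\Vert}_X}
    &&\epsilon^\phi_{(x,1)} \coloneqq {\left\Vert \phi_x \right\Vert}_X
\end{align*}
and symmetrically $\epsilon^\psi_{(y,0)}\coloneqq\mathsf{id}_{{\left\Vert y\right\Vert}_Y}$, $\epsilon^\psi_{(y,1)}\coloneqq{\left\Vert\psi_y\right\Vert}_Y$. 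Since $\mathbf{I}_1$ has a unique morphism between any two of its objects, these components exhaust the data of the transformations, and naturality in the $X$-direction (for morphisms $(p,\mathsf{id})$) is literally the image under ${\left\Vert-\right\Vert}_X$ of the naturality of $\phi$, while compatibility with the generating path $i$ (for morphisms $(p,i)$) follows from the factorisation $\phi(p,i)=\phi_{x'}\circ p$ in $X$. Thus $(\pi_1,\epsilon^\phi)\Vdash\phi$ and $(\pi_1,\epsilon^\psi)\Vdash\psi$, and their inverses are realizable by the inversion of homotopies from \Cref{sec:asa2category}.

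The subtlety worth flagging is that I do \emph{not} attempt to match $\epsilon^\phi,\epsilon^\psi$ at their endpoints against the realizers of $\mathsf{id}$, $GF$ and $FG$ obtained by composing the realizers of $F$ and $G$; such matching would amount to a triangle identity and need not hold for a mere equivalence. This is harmless precisely because $\mathbf{PGAsm}(\mathbf{R},\mathbb{I})$ is a quotient in which realizers are only required to \emph{exist}: restricting $(\pi_1,\epsilon^\phi)$ and $(\pi_1,\epsilon^\psi)$ to the endpoints furnishes \emph{some} realizers for the relevant source and target functors, which is all that is demanded. Having lifted $F$, $G$, $\phi$ and $\psi$, and knowing from \Cref{sec:asa2category} that 2-cells in $\mathbf{PGAsm}(\mathbf{R},\mathbb{I})$ compose and invert as in $\mathbf{Gpd}$, the entire equivalence transports wholesale: $(F,G,\phi,\psi)$ is an equivalence in $\mathbf{PGAsm}(\mathbf{R},\mathbb{I})$ because it is one in $\mathbf{Gpd}$. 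I expect the only genuine computation to be the naturality check for $\epsilon^\phi$ and $\epsilon^\psi$, which, as indicated, collapses to functoriality once the triviality of the hom-sets of $\mathbf{I}_1$ is invoked.
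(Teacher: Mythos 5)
Your proposal is correct and follows essentially the same route as the paper: same choice of realizer type $A$ and realizability functor ${\left\Vert-\right\Vert}_X\circ G$, same realizers $(\mathsf{id},\mathsf{id})$ for $G$, $(\mathsf{id},{\left\Vert-\right\Vert}_X\ast\phi)$ for $F$, and $(\pi_1,\epsilon^\phi)$, $(\pi_1,\epsilon^\psi)$ for the 2-cells, and the same two-fully-faithful-functors argument for modesty. The extra remarks on naturality and on not needing to match realizers at endpoints are sound but go beyond what the paper records.
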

\begin{proof}
    $Y$ is given the realizer type $A$ and the realizability functor
    \begin{align*}
        {\left\Vert-\right\Vert}_Y \coloneqq {\left\Vert-\right\Vert}_X \circ G: Y \rightarrow \Pi A
    \end{align*}
    The functor $G$ is clearly realized by $(\mathsf{id}, \mathsf{id})$. The functor $F$ is realized by $(\mathsf{id}, {\left\Vert-\right\Vert}_X \ast \phi)$. The natural transformation $\phi$ is realized by $(\pi_1,\epsilon^\phi)$, where:
    \begin{align*}
        &\epsilon^\phi_{(x,0)} \coloneqq \mathsf{id}_{{\left\Vert x \right\Vert}_X}
        &&\epsilon^\phi_{(x,1)} \coloneqq {\left\Vert \phi_x \right\Vert}_X
    \end{align*}
    The case for $\psi$ is completely symmetric.

    The functor $G$, being an equivalence of groupoids, is fully faithful and essentially surjective. So if $X$ is modest then ${\left\Vert-\right\Vert}_Y$ is the composition of two fully faithful functors, and hence fully faithful itself.
\end{proof}

Turning to the path category axioms, (PC1)-(PC3) hold due to standard results about isofibrations (of groupoids), and (PC4) and (PC5) hold for equivalences in any 2-category. We check the remaining axioms below.

\paragraph*{(PC6): path objects}\label{sec:pathobjects}
Given $X=(X,A,{\left\Vert-\right\Vert}_X)$, the weak exponential object $X^{\mathbf{I}_1}$ (see Proposition \ref{thm:pgasmcartesianclosure}) is a path object $\mathcal{P}X$. Thus $\mathcal{P}X$ is modest whenever $X$ is.
    
The equivalence $\mathsf{r}:X\rightarrow X^{\mathbf{I}_1}$ is given by:
\begin{align*}
    r(x) &\coloneqq \left( i \mapsto \mathsf{id}_x, \lambda {\left\Vert \mathsf{id}_x \right\Vert}_X, \mathsf{id} \right)
    \\
    r(p) &\coloneqq \left( (i,i) \mapsto p, \lambda\left( \partial^{-1}{\left\Vert p \right\Vert}_X \right) \right)
\end{align*}
where the argument ${\left\Vert p \right\Vert}_X$ of $\partial^{-1}$ denotes the commutative square whose horizontal edges are ${\left\Vert p \right\Vert}_X$ and whose vertical edges are $\mathsf{id}_x$. $r$ is realized by $(\lambda(\pi_1): A \rightarrow A^{\mathbb{I}_1}, \mathsf{id})$.

The fibration $(s,t)_X: X^{\mathbf{I}_1} \rightarrow X\times X$ is given by:
\begin{align*}
    (F,e,\epsilon) &\mapsto (F0,F1)
    \\
    (\psi,f) &\mapsto \left( \psi(0,i),\psi(1,i) \right)
\end{align*}
realized by $( \langle \mathsf{eval} \circ \langle \mathsf{id},0 \rangle, \mathsf{eval} \circ \langle \mathsf{id},1 \rangle \rangle, \epsilon )$.

Suppose $(F,e,\epsilon)$ is in the fibre over $(x_1,x_2)$. Then $F(i): x_1 \rightarrow x_2$. We define the chosen lift $(\psi,f): (F,e,\epsilon) \rightarrow (G,e,\delta)$ of $p = (p_1,p_2):(x_1,x_2) \rightarrow (x'_1,x'_2)$ at $(F,e,\epsilon)$ as follows. First:
\begin{align*}
    &G(i) \coloneqq p_2 \circ F(i) \circ p_1^{-1}
    &&\delta_0 \coloneqq p_1 \circ \epsilon_0
    &&&\delta_1 \coloneqq p_2 \circ \epsilon_1
\end{align*}
Next:
\begin{align*}
    \psi(i,i) \coloneqq p_2 \circ F(i) = G(i) \circ p_1
\end{align*}
Finally:
\begin{align*}
    f \coloneqq \lambda\left( \partial^{-1} \left(\hat{e}\right) \right) : \mathbb{I}_1 \rightarrow A^{\mathbb{I}_1}
\end{align*}
where $\hat{e}$ in the above expression is used to denote the following commutative square of paths in $\Pi A$.
\begin{align*}
    \begin{tikzcd}[ampersand replacement=\&]
        \Pi(\mu(e))0
        \arrow[d, swap, "\Pi(\mu(e))"]
        \arrow[r, equal, "{\delta_0^{-1} p_1 \epsilon_0}"]
        \&
        \Pi(\mu(e))0
        \arrow[d, "\Pi(\mu(e))"]
        \\
        \Pi(\mu(e))1
        \arrow[r, equal, swap, "{\delta_1^{-1} p_2 \epsilon_1}"]
        \&
        \Pi(\mu(e))1
    \end{tikzcd}
\end{align*}

\paragraph*{(PC7)}\label{sec:pc7}

Given an acyclic fibration $F: (X, A, {\left\Vert-\right\Vert}_X ) \rightarrow (Y, B, {\left\Vert-\right\Vert}_Y )$, a pseudoinverse $G: Y \rightarrow X$ of $F$ realized by $(e,\epsilon)$ and 2-cell $\psi: GF \Rightarrow \mathsf{id}_X$, we define a section $S:Y\rightarrow X$ of $F$ by:
\begin{align*}
    Sy &\coloneqq \left(\psi_y\right)^* Gy
    \\
    S(q:y\rightarrow y') &\coloneqq  \overline{\left(\psi_{y'}\right)}(Gy') \circ G(q) \circ \left( \overline{\left(\psi_y\right)}(Gy) \right)^{-1}
\end{align*}
This is realized by $(e, \epsilon')$, where
\begin{align*}
    \epsilon'_y \coloneqq \overline{\left(\psi_{y}\right)}(Gy) \circ \epsilon_y
\end{align*}

\paragraph*{(PC8)}\label{sec:pc8}

Let $G: (Y, B, \left\Vert-\right\Vert_Y) \rightarrow (Z, C, \left\Vert-\right\Vert_Z)$ be an acyclic fibration with pseudoinverse $H: Z \rightarrow Y$ witnessed by natural isomorphisms $\psi: GH \Rightarrow \mathsf{id}_Z$ and $\phi: \mathsf{id}_Y \Rightarrow HG$. Furthermore, let $F: (X, A, {\left\Vert-\right\Vert}_X) \rightarrow Z$ be an arbitrary map. We construct a pseudoinverse $S: X \rightarrow F^*Y$ to the fibration $F^*(G): F^*Y \rightarrow X$. Define $S \coloneqq [X,T]$, where:
\begin{align*}
    &T: X \rightarrow Y
    \\
    &Tx \coloneqq \psi_{Fx}^* HFx
    \\
    &T(p) \coloneqq \overline{\phi_{Fx'}}(HFx') \circ H(F(p)) \circ \left( \overline{\phi_{Fx}}(HFx) \right)
\end{align*}
Given realizers $(e^F,\epsilon^F) \Vdash F$ and $(e^H,\epsilon^H) \Vdash H$, a realizer for $T$ is $(e^Fe^H, \epsilon^T)$, where
\begin{align*}
    \epsilon^T_x \coloneqq \overline{\psi_{Fx}}(HFx) \circ \epsilon^H_{Fx} \circ \Pi\left(e^H\right)\left(\epsilon^F_x\right)
\end{align*}

Clearly we have $F^*G \circ S = \mathsf{id}_X$. We now construct a natural isomorphism $\sigma: \mathsf{id}_{F^*Y} \Rightarrow S \circ F^*(G)$. We have
\begin{align*}
    S\left(F^*(G)(x,y)\right) = Sx = \left( x, \psi^*_{Fx} HFx \right)
\end{align*}
So we define
\begin{align*}
    \sigma_{(x,y)} \coloneqq \left( \mathsf{id}_x, \sigma_y \right)
\end{align*}
where $\sigma_y$ is the following composite.
\begin{align*}
    \begin{tikzcd}[ampersand replacement=\&]
        y
        \arrow[r, "\phi_y"]
        \&
        HGy = HFx
        \arrow[rr, "{\overline{\psi_{Fx}}(HFx)}"]
        \& \&
        \psi^*_{Fx} HFx
    \end{tikzcd}
\end{align*}
This is realized by $(\pi_1: (A\times B)\times \mathbb{I}_1 \rightarrow A \times B, \epsilon^\sigma)$, where:
\begin{align*}
    \epsilon^\sigma_{(x,y,0)} &\coloneqq \left\langle {\left\Vert x \right\Vert}_X, {\left\Vert y \right\Vert}_Y \right\rangle
    \\
    \epsilon^\sigma_{(x,y,1)} &\coloneqq \left\langle {\left\Vert x \right\Vert}_X, {\left\Vert \sigma_y \right\Vert}_Y \right\rangle
\end{align*}

\subsubsection{Dependent products}\label{sec:dependentproducts}

The notion of dependent product for path categories studied as \cite[Definition 5.2][]{vandenbergm18} is as follows. A path category $\mathbf{C}$ is said to have \textit{homotopy $\Pi$-types} iff for any two fibrations $f: X \rightarrow J$ and $\alpha: J \rightarrow I$ there is an object $\Pi_\alpha(F): \Pi_\alpha X \rightarrow I$ in $\mathbf{C}(I)$ together with an "evaluation map" $\mathsf{ev}: \alpha^* \Pi_\alpha X \rightarrow X$ over $J$ with the following universal property: if there is a map $g: Y \rightarrow I$ and a map $m: \alpha^* Y \rightarrow X$ over $J$ then there exists a unique map $n: Y \rightarrow \Pi_\alpha X$ such that $\mathsf{ev} \circ \alpha^* n$  and $m$ are fibrewise homotopic over $J$ (type-theoretically, this is the $\beta$-law). If the uniqueness criterion on $n$ is dropped, we are left with \textit{weak} homotopy $\Pi$-types. Type-theoretically, these correspond to dependent functions that do not necessarily satisfy the $\eta$-law, and thus do not necessarily satisfy function extensionality. If the $\beta$-law holds on the nose then we have (weak) $\Pi$-types. (See \citep{denbesten20} for a more in-depth study of dependent products in path categories.)

\begin{thm}\label{thm:pgasmweakdependentproducts}
    $\mathbf{PGAsm}(\mathbf{R},\mathbb{I})$ has weak $\Pi$-types.
\end{thm}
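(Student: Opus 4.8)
The plan is to construct the dependent product $\Pi_\alpha X$ explicitly, in the same spirit as the weak exponential object built in \Cref{thm:pgasmcartesianclosure}, and then verify the (weak) universal property by hand. Given fibrations $F : X \rightarrow J$ and $\alpha : J \rightarrow I$ in $\mathbf{PGAsm}(\mathbf{R},\mathbb{I})$, with underlying realizer types $A \rightarrow B$ and $B \rightarrow C$ respectively, the object $\Pi_\alpha X$ over $I$ should have realizer type something like $C \times (B^{A})^{\ldots}$ — more precisely, the realizers will be internal maps in $\mathbf{R}$ encoding, over each point of $I$, a choice of section of $F$ over the fibre $\alpha^{-1}(i)$. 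The underlying groupoid $\Pi_\alpha X$ is the groupoid of such sections (objects are functors from the fibre picking out elements of the corresponding fibres of $X$, together with their realizer data, modulo the pattern already established for $\mathsf{Real}(Y^X)$), fibred over $I$ via the obvious projection. The first concrete step is therefore to write down this groupoid and its realizability functor $\Vert - \Vert_{\Pi_\alpha X}$, checking that the projection to $I$ is an isofibration (hence a fibration, by the lemma characterising fibrations as isofibrations in $\mathbf{Gpd}$).

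\emph{Next I would} exhibit the evaluation map $\mathsf{ev} : \alpha^* \Pi_\alpha X \rightarrow X$ over $J$ and realize it, reusing the cube-style naturality argument from the exponential construction in \Cref{thm:pgasmcartesianclosure}. The pullback $\alpha^* \Pi_\alpha X$ is computed as in \Cref{thm:pgasmfinitecomplete}, so its realizers are pairs $\langle -, - \rangle$ of the realizers of $\Pi_\alpha X$ and of $J$; evaluation then amounts to internally applying the encoded section to the given point of the fibre, which is implemented in $\mathbf{R}$ by the evaluation map of the cartesian closed structure composed with appropriate projections. The realizer for $\mathsf{ev}$ will again take the form $(\mathsf{ev}, \epsilon')$ with $\epsilon'$ read off componentwise from the section data, exactly paralleling the exponential case.

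\emph{For the universal property}, given $g : Y \rightarrow I$ and $m : \alpha^* Y \rightarrow X$ over $J$, I would form the transpose $n : Y \rightarrow \Pi_\alpha X$ by currying $m$ fibrewise: over a point $y \in Y$ with $g(y) = i$, the section $n(y)$ sends a point of the fibre $\alpha^{-1}(i)$ to the corresponding value of $m$. The realizer $(e^n, \epsilon^n)$ is built from a chosen realizer $(e^m, \epsilon^m) \Vdash m$ by abstraction ($\lambda$) in $\mathbf{R}$, mirroring the definition of $\widetilde{K}$ in \Cref{thm:pgasmcartesianclosure}, where $e^{n}$ involves $\lambda(e^m \circ (\Vert - \Vert \times A))$-type expressions and $\epsilon^n$ is obtained by reindexing $\epsilon^m$. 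One then checks that $\mathsf{ev} \circ \alpha^* n$ and $m$ are fibrewise homotopic over $J$: since the $\beta$-law for the cartesian closed structure of $\mathbf{R}$ holds on the nose, and since functors in $\mathbf{PGAsm}(\mathbf{R},\mathbb{I})$ are only implemented \emph{up to} natural isomorphism, the two composites will agree up to a canonical natural isomorphism realized using $\partial^{-1}$ (invoking the standing hypothesis that $\partial_{\mathbf{R}}$ is an isomorphism), which is precisely a fibrewise homotopy. Because we do not claim uniqueness of $n$, this yields \emph{weak} homotopy $\Pi$-types — in fact, since the $\beta$-law on the underlying groupoids holds strictly (the section genuinely evaluates to $m$), we get weak $\Pi$-types rather than merely weak \emph{homotopy} $\Pi$-types.

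\emph{The hard part will be} correctly managing the realizer data for the sections: a realizer of an object of $\Pi_\alpha X$ must internally (in $\mathbf{R}$) encode a fibrewise section together with a coherent family of the isomorphisms $\epsilon$ witnessing that the encoded internal map implements the functorial section up to isomorphism, and the boundary/naturality coherences for these families are exactly where the isomorphism $\partial_{\mathbf{R}}$ and the uniqueness of boundary-specified 2-cells (as exploited for $\zeta$ in \Cref{thm:pgasmcartesianclosure}) must be used to pin down the higher data. Verifying that the projection is a genuine isofibration and that the evaluation and transpose realizers satisfy the required naturality squares — assembled from the same cube diagrams appearing in the cartesian closure proof, but now relativised over the base $I$ and $J$ — is the principal bookkeeping obstacle; the homotopy-theoretic content itself is light, since it reduces to the strict $\beta$-law in $\mathbf{R}$ plus the up-to-isomorphism implementation of functors.
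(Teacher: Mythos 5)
Your overall architecture matches the paper's: build the total groupoid of realized sections explicitly in the style of the weak exponential from \Cref{thm:pgasmcartesianclosure}, check the projection is an isofibration, realize an evaluation map, transpose a given map using a \emph{chosen} realizer (which is exactly what makes the $\Pi$-types weak), and note that the $\beta$-law holds on the nose because isofibrations of groupoids are exponentiable. All of that is right.

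The genuine gap is your use of the strict fibre $\alpha^{-1}(i)$ as the domain of the sections. The paper instead takes an object of $\Pi_F X$ over a point $z$ of the base to be a map $H$ out of the \emph{homotopy} fibre $F\downarrow z$ (the pseudopullback of \Cref{thm:pgasmfinitecomplete}, with objects $(y, u: Fy\rightarrow z)$ and realizer type $B\times C^{\mathbb{I}_1}$), living over the middle object. This choice is load-bearing in two ways. First, it gives all fibres a single, $z$-independent realizer type, so that $A^{B\times C^{\mathbb{I}_1}}$ is a uniform realizer type for sections and $C\times A^{B\times C^{\mathbb{I}_1}}$ can serve as the realizer type of $\Pi_F X$ (note also that your guess $(B^{A})^{\ldots}$ has the exponent inverted: sections send realizers of the fibre to realizers of the total space). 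Second, and more importantly, reindexing along $r: z\rightarrow z'$ is the strictly functorial map $F\downarrow r: (y,u)\mapsto (y,ru)$, which is what lets a morphism of $\Pi_F X$ be defined as $r$ together with a 2-cell $H\Rightarrow H'\circ(F\downarrow r)$ whose composition is strictly associative, and what makes the chosen lift of $r$ literal precomposition $H\circ(F\downarrow r^{-1})$. With strict fibres, comparing $H$ and $H'$ requires a transport functor between $\alpha^{-1}(i)$ and $\alpha^{-1}(i')$ built from non-canonical choices of lifts, and $(r'r)^{*}=r'^{*}r^{*}$ holds only up to isomorphism, so the groupoid you propose to write down in your first step does not have a well-defined strict composition. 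Everything downstream is then phrased in terms of these homotopy fibres: evaluation becomes $H(y,\mathsf{id}_z)$, and the transpose precomposes a realized map with a comparison functor $F\downarrow Rw \rightarrow F^{*}W$ built from transports, after which the $\beta$-law is verified as a literal equality.
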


\begin{proof}
    Given fibrations $G: (X, A, {\left\Vert-\right\Vert}_X) \rightarrow (Y, B, {\left\Vert-\right\Vert}_Y)$ and $F: Y \rightarrow (Z, C, {\left\Vert-\right\Vert}_Z)$, the objects of the underlying groupoid of $\Pi_F X$ are tuples $(z,H,e,\epsilon)$, where $z\in Z$, $H: F\downarrow z \rightarrow X$ in the slice over $Y$
    \begin{align}\label{eqn:depprod}
        \begin{tikzcd}[ampersand replacement=\&]
            F\downarrow z
            \arrow[rr, "H"]
            \arrow[dr, swap, "\pi_1"]
            \& \&
            X
            \arrow[dl, "G"]
            \\
            \&
            Y
        \end{tikzcd}
    \end{align}
    ($F\downarrow z$ is the "homotopy fibre", constructed as a pseudopullback, see Proposition \ref{thm:pgasmfinitecomplete}), $e \in \Pi(A^{B\times C^{\mathbb{I}_1}})$ and $\epsilon$ is a natural isomorphism such that $(\mu(e), \epsilon) \Vdash H$.
    
    Observe that any $r: z \rightarrow z'$ induces a morphism $F\downarrow r: F\downarrow z \rightarrow F\downarrow z'$ of partitioned groupoidal assemblies defined by $(y,u: Fy \rightarrow z) \mapsto (y, ru)$ and identity on morphisms; the morphism is realized by $(e^r, \epsilon^r)$, where:
    \begin{align*}
        e^r &\coloneqq \mathsf{id}_{B\times C^{\mathbb{I}_1}}
        \\
        \epsilon^r_{(y,u)} &\coloneqq \left\langle B, \lambda\left(\partial^{-1} {\left\Vert(r,u)\right\Vert}_Z\right) \right\rangle
    \end{align*}
    and where ${\left\Vert (r,u) \right\Vert}_Z$ denotes the following commutative square.
    \begin{align*}
        \begin{tikzcd}[ampersand replacement=\&]
            {\left\Vert Fy \right\Vert}_Z
            \arrow[r, equal]
            \arrow[d, swap, "{\left\Vert u \right\Vert}_Z"]
            \&
            {\left\Vert Fy \right\Vert}_Z
            \arrow[d, "{\left\Vert ru \right\Vert}_Z"]
            \\
            {\left\Vert z \right\Vert}_Z
            \arrow[r, swap, "{\left\Vert r \right\Vert}_Z"]
            \&
            {\left\Vert z' \right\Vert}_Z
        \end{tikzcd}
    \end{align*}
    Naturality follows double-functoriality of $\partial^{-1}$.
    
    A morphism $(r, \psi, f, \zeta): (z,H,e,\epsilon) \rightarrow (z',H',e',\epsilon')$ in $\Pi_F X$ consists of a morphism $r:z\rightarrow z'$, a natural isomorphism $\psi: H \Rightarrow H' \circ (F\downarrow r)$ over $Y$ and a path $f:e\rightarrow e' \circ e^r = e'$ and a natural isomorphism $\zeta$ satisfying:
    \begin{align*}
        &\zeta(-,0,-) = \epsilon
        &&\zeta(-,1,-) = \left(\epsilon' \ast (F\downarrow r) \right) \circ \left(\Pi(e') \ast \epsilon^r \right)
    \end{align*}
    \begin{align*}
        \begin{tikzcd}[ampersand replacement=\&]
            F\downarrow z
            \arrow[rr, "F\downarrow r"]
            \arrow[dd, swap, "{\left\Vert-\right\Vert}_{F\downarrow z}"]
            \& \&
            F\downarrow z'
            \arrow[dd, "{\left\Vert-\right\Vert}_{F\downarrow z'}"]
            \ar[ddll, Leftarrow, shorten <= 1.25em, shorten >= 1.25em, "\epsilon^r"]
            \arrow[rr, "H'"]
            \& \&
            X
            \arrow[dd, "{\left\Vert-\right\Vert}_X"]
            \ar[ddll, Leftarrow, shorten <= 1.25em, shorten >= 1.25em, "{\epsilon'}"]
            \\ \\
            \Pi\left( B \times C^{\mathbb{I}_1} \right)
            \arrow[rr, equal]
            \& \&
            \Pi (B \times C^{\mathbb{I}_1})
            \arrow[rr, swap, "\Pi(e')"]
            \& \&
            \Pi A
        \end{tikzcd}
    \end{align*}
    as well as
    \begin{align*}
        (\mu(f)\circ\mathsf{swap},\zeta)\Vdash\psi
    \end{align*}
    As in the proof of Proposition \ref{thm:pgasmcartesianclosure}, $\zeta$ is uniquely determined. The realizer type is $C \times A^{B\times C^{\mathbb{I}_1}}$ and the realizability functor is given by
    \begin{align*}
        {\left\Vert-\right\Vert}_{\Pi_F X} \coloneqq \left\langle {\left\Vert-\right\Vert}_Z \circ \pi_1, \pi_3 \right\rangle
    \end{align*}

    The fibration $\Pi_F(G): \Pi_F X \rightarrow Z$ is given by the first projection and realized by $(\pi_1, \mathsf{id})$. The chosen lift of $r$ at $(z,H,e,\epsilon)$ is
    \begin{align*}
        (r, \psi, f): (z,H,e,\epsilon) \rightarrow (z',H',e,\epsilon')
    \end{align*}
    where:
    \begin{align*}
        H' &\coloneqq H \circ \left( F \downarrow r^{-1} \right)
        \\
        \epsilon' &\coloneqq \left(\epsilon \ast F\downarrow r^{-1} \right) \circ \left( e \ast \epsilon^r \right)
        \\
        \psi &\coloneqq \mathsf{id}_H
        \\
        f &\coloneqq \mathsf{id}_e
    \end{align*}
    The third of these definitions makes sense because $(F\downarrow r^{-1}) \circ (F\downarrow r) = \mathsf{id}_{F\downarrow z}$.

    We now define the evaluation map $\mathsf{ev}: F^* \Pi_F X \rightarrow X$. First let us compute $F^*\Pi_F X$. Objects of the underlying groupoid of $F^*\Pi_F X$ are tuples $(y,z,H,e,\epsilon)$, where $y\in Y$, $z=Fy$, and $H$, $e$, $\epsilon$ are as above. A morphism $(q, r, \psi, f): (y,z,H,e,\epsilon) \rightarrow (y',z',H',e',\epsilon')$ consists of morphisms $q: y\rightarrow y'$ and $r=F(q): z \rightarrow z'$, as well as $\psi$, $f$ as described above. The realizer type of $F^*\Pi_F X$ is $B \times C \times A^B$ and the realizability functor is given by
    \begin{align*}
        {\left\Vert-\right\Vert}_{F^*\Pi_F X} \coloneqq \left\langle {\left\Vert-\right\Vert}_Y \circ \pi_1, {\left\Vert-\right\Vert}_Z \circ \pi_2, \pi_4 \right\rangle
    \end{align*}
    Define the map $\mathsf{ev}$ by:
    \begin{align*}
        \mathsf{ev}(y,z,H,e,\epsilon) &\coloneqq H \left( y, \mathsf{id}_z \right)
        \\
        \mathsf{ev}(q,r,\psi,f) &\coloneqq H'(q) \circ \psi_{\left(y,\mathsf{id}_z\right)}
    \end{align*}
    The argument $q$ of $H'$ is here considered as a morphism $(y,r) \rightarrow (y', \mathsf{id}_{z'})$ in $F\downarrow z'$. The map $\mathsf{ev}$ lives in the slice over $Y$ thanks to (\ref{eqn:depprod}). It is realized by $( \mathsf{ev} \circ \langle \pi_3, \pi_1 \rangle, \epsilon' )$, where we overload notation and use $\mathsf{ev}$ for the evaluation map from $\mathbf{R}$, and $\epsilon'_{(y,z,H,e,\epsilon)} \coloneqq \epsilon_y$. This is natural as we know that there is a natural isomorphism $\zeta$ making the following diagram commute.
    \begin{align*}
        \begin{tikzcd}[ampersand replacement=\&]
            \substack{\Pi(\mu(f))  \left\langle 0, {\left\Vert y \right\Vert}_Y \right\rangle \\ = \mu(e) {\left\Vert y \right\Vert}_Y}
            \arrow[dd, "{\Pi(\mu(f)) \left\langle \mathbb{I}_1, {\left\Vert y \right\Vert}_Y \right\rangle}"]
            \arrow[rr, "{\zeta(y,0,i) = \epsilon_y}"]
            \arrow[dddd, bend right=60, swap, "{\Pi(\mu(f))\left\langle \mathbb{I}_1, q \right\rangle}"]
            \& \&
            {\left\Vert H(y,\mathsf{id}_z) \right\Vert}_Y
            \arrow[dd, "{\left\Vert \psi_y \right\Vert}_Y"]
            \\ \\
            \substack{\Pi(\mu(f)) \left\langle 1, {\left\Vert y \right\Vert}_Y \right\rangle \\ = \mu(e'){\left\Vert y \right\Vert}_Y}
            \arrow[rr, "{\zeta(y,1,i) = \epsilon'_y}"]
            \arrow[dd, "\substack{\Pi(\mu(f)) \left\langle \mathbb{I}_1, {\left\Vert q \right\Vert}_Y \right\rangle \\ = \mu(e')\left\Vert q \right\Vert_Y}"]
            \& \&
            \substack{\left\Vert H'\left( F\downarrow r(y, \mathsf{id}_z) \right)  \right\Vert_Y \\ = \left\Vert H'(y,r) \right\Vert_Y}
            \arrow[dd, "\left\Vert H'(q) \right\Vert_Y"]
            \\ \\
            \substack{\Pi(\mu(f)) \left\langle 1, \left\Vert y' \right\Vert_Y \right\rangle \\ = \mu(e')\left\Vert y' \right\Vert_Y}
            \arrow[rr, swap, "\epsilon'_{y'}"]
            \& \&
            {\left\Vert H'\left(y',\mathsf{id}_{z'}\right) \right\Vert}_Y
        \end{tikzcd}
    \end{align*}

    For the universal property, assume we have a map $R: (W, D, \left\Vert-\right\Vert_{W}) \rightarrow Z$ and a map $S: F^*W \rightarrow X$ over $Y$.
    \begin{align*}
        \begin{tikzcd}[ampersand replacement=\&]
            F^*W
            \arrow[rr, "S"]
            \arrow[dr, swap, "F^*R"]
            \& \&
            X
            \arrow[dl, "G"]
            \\
            \&
            Y
        \end{tikzcd}
    \end{align*}
    We construct a map $T$ as in the following diagram.
    \begin{align*}
        \begin{tikzcd}[ampersand replacement=\&]
            W
            \arrow[rr, "T"]
            \arrow[dr, swap, "R"]
            \& \&
            \Pi_F(X)
            \arrow[dl, "\Pi_F(G)"]
            \\
            \&
            Z
        \end{tikzcd}
    \end{align*}
    Define:
    \begin{align*}
        Tw &\coloneqq \left( Rw, H, e, \epsilon \right)
        \\
        T(v:w\rightarrow w') &\coloneqq \left( R(v), \psi, f \right)
    \end{align*}
    where the components $H$, $e$, $\epsilon$, $\psi$ and $f$ are defined below.
    
    First:
    \begin{align*}
        &H: F\downarrow Rw \rightarrow X
        \\
        &H \coloneqq S \circ \left[ \sigma_1, \sigma_2 \right]
    \end{align*}
    Here,
    \begin{align*}
        &\sigma_1: F\downarrow Rw \rightarrow Y
        \\
        &\sigma_1(y,r) \coloneqq r^* y
        \\
        &\sigma_1(q:(y,r)\rightarrow (y',r')) \coloneqq \overline{r'}(y') \circ q \circ \overline{r}(y)^{-1}
    \end{align*}
    is realized by $(\pi_1, \delta)$, where $\delta_y \coloneqq \left\Vert \overline{r}(y) \right\Vert_Y$ and $\sigma_2: F\downarrow Rw \rightarrow W$ is the constantly $w$ functor, realized by $( \left\Vert w \right\Vert_W *, \delta')$, where $\delta_w \coloneqq \mathsf{id}_{\left\Vert w \right\Vert_W}$. We can form $[\sigma_1,\sigma_2]$ because $F(r^*y) = Rw$ and
    \begin{align*}
        F \left( \overline{r'}(y') \circ q \circ \overline{r}(y)^{-1} \right) = r'qr^{-1} = r'qq^{-1}r'^{-1} = \mathsf{id}_{Rw}
    \end{align*}
    
    Next, pick a realizer $(e^S, \epsilon^S) \Vdash S$ and define $e \in \Pi(A^{B\times C^{\mathbb{I}_1}})$ to be the exponential transpose of
    \begin{align*}
        \begin{tikzcd}[ampersand replacement=\&]
            1 \times \left(B \times C^{\mathbb{I}_1}\right)
            \arrow[rr, "{\langle \pi_1\pi_2, \pi_1 \rangle}"]
            \& \&
            B \times 1
            \arrow[rr, "{B \times \left\Vert w \right\Vert_W}"]
            \& \&
            B \times D
            \arrow[r, "e^S"]
            \&
            A
        \end{tikzcd}
    \end{align*}
    The natural isomorphism $\epsilon$ is defined by the following pasting diagram.
    \begin{align*}
        \begin{tikzcd}[ampersand replacement=\&]
            F\downarrow Rw
            \arrow[rr, "{[\sigma_1,\sigma_2]}"]
            \arrow[dd, swap, "\left\Vert-\right\Vert_{F\downarrow Rw}"]
            \& \&
            F^*W
            \arrow[dd, "\left\Vert-\right\Vert_{F^*W}"]
            \ar[ddll, Leftarrow, shorten <= 1.25em, shorten >= 1.25em, "{\left\langle\delta,\delta'\right\rangle}"]
            \arrow[rr, "S"]
            \& \&
            X
            \arrow[dd, "\left\Vert-\right\Vert_X"]
            \ar[ddll, Leftarrow, shorten <= 1.25em, shorten >= 1.25em, "{\epsilon^S}"]
            \\ \\
            \Pi\left( B \times C^{\mathbb{I}_1} \right)
            \arrow[rr, swap, "{\Pi\left( B\times \left\Vert w \right\Vert_W * \right)}"]
            \& \&
            \Pi (B \times D)
            \arrow[rr, swap, "\Pi\left(e^S\right)"]
            \& \&
            \Pi A
        \end{tikzcd}
    \end{align*}
    The $\Pi$-types being constructed are weak because we have had to make a choice of realizer $(e^S,\epsilon^S)$ for each $S$. 
    
    As for $\psi: H \Rightarrow H' \circ (F\downarrow R(v))$, we must have:
    \begin{align*}
        &\psi(y,r,0) = H(y,r) = S\left( r^* y, w \right)
        \\
        &\psi(y',r',1) = H'(y', R(v)r')  = S\left( (R(v)r')^* y', w' \right)
    \end{align*}
    So we define
    \begin{align*}
        \psi(q,i) \coloneqq S\left( \overline{R(v)}\left( (r')^*(y') \right), v \right) \circ H(q) = H'(q) \circ S\left( \overline{R(v)}\left( r^*(y) \right), v \right)
    \end{align*}
    The argument $q$ of $H$ is regarded as a morphism $(y,r)\rightarrow (y',r')$ in $F\downarrow Rw$, whereas $q$ \textit{qua} argument of $H'$ is regarded as a morphism $(y,R(v)r)\rightarrow (y',R(v)r')$ in $F\downarrow Rw'$. Finally, we define $f:e\rightarrow e'$ (recall that $(e,\epsilon) \Vdash H$) to be the exponential transpose of
    \begin{align*}
        \begin{tikzcd}[ampersand replacement=\&]
            \mathbb{I}_1 \times \left( B \times C^{\mathbb{I}_1} \right)
            \arrow[rr, "{\langle \pi_1\pi_2, \pi_1 \rangle}"]
            \& \&
            B \times \mathbb{I}_1
            \arrow[rr, "{B \times \left\Vert v \right\Vert_W}"]
            \& \&
            B \times D
            \arrow[r, "e^S"]
            \&
            A
        \end{tikzcd}
    \end{align*}

    To complete the proof we show that
    \begin{align*}
        S = \mathsf{ev} \circ F^*T
    \end{align*}
    On objects:
    \begin{align*}
        \mathsf{ev}\left( F^*T (y,w) \right)
        &= \mathsf{ev}(y,Tw)
        \\
        &= \mathsf{ev}(y,Rw,H,e,\epsilon)
        \\
        &= H\left(y,\mathsf{id}_{Rw} \right)
        \\
        &= S(y, w)
    \end{align*}
    and on morphisms:
    \begin{align*}
        \mathsf{ev}\left( F^*T (q,v) \right)
        &= \mathsf{ev}(q,T(v))
        \\
        &= \mathsf{ev}(q,R(v),\psi,f)
        \\
        &= H'\left(q:(y,R(v)) \rightarrow \left(y',\mathsf{id}_{Rw'}\right)\right) \circ \psi_{\left( y, \mathsf{id}_{Rw} \right)}
        \\
        &= S\left( q \circ \left(\overline{R(v)}(y)\right)^{-1}, w' \right) \circ S\left( \overline{R(v)}(y), v \right)
        \\
        &= S(q,v)
    \end{align*} 
\end{proof}
The fact that that $S = \mathsf{ev} \circ F^*T$ holds on the nose is down to the fact that isofibrations of groupoids are exponentiable.  

\section{Impredicative universes of modest fibrations}\label{sec:impredicativeuniversesofmodestfibrations}

As we have mentioned, we know from other settings \citep{birkedal00a,birkedal00b,robinson01,lietz02} that impredicativity is intimately related to untypedness at the level of realizers. Therefore, in this section, we work over an untyped realizer category $(\mathbf{R},\mathbb{I},U)$ and in the full subcategory $\mathbf{PGAsm}(\mathbf{R},\mathbb{I},U) \subseteq \mathbf{PGAsm}(\mathbf{R},\mathbb{I})$ spanned by the partitioned groupoidal assemblies whose realizer type is $U$. This subcategory inherits all the categorical structure of $\mathbf{PGAsm}(\mathbf{R},\mathbb{I})$ discussed so far---including the path category structure---as a consequence of the following.
\begin{prop}\label{thm:pgasmequivalence}
    The inclusion
    \begin{align*}
        \mathbf{PGAsm}(\mathbf{R},\mathbb{I},U) \hookrightarrow \mathbf{PGAsm}(\mathbf{R},\mathbb{I})
    \end{align*}
    is an equivalence of 2-categories.
\end{prop}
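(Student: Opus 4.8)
The plan is to observe that the inclusion is a full sub-$2$-category inclusion, hence automatically an isomorphism (a fortiori an equivalence) on each hom-category, and then to prove essential surjectivity on objects. In fact I would prove the stronger statement that every object of $\mathbf{PGAsm}(\mathbf{R},\mathbb{I})$ is \emph{isomorphic}, not merely equivalent, to one whose realizer type is $U$. By the usual recognition criterion for $2$-categories, a strict $2$-functor that is a local equivalence and essentially surjective on objects is an equivalence of $2$-categories; both conditions will hold here.

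So fix an arbitrary $X=(X,A,\|-\|_X)$ and invoke the universal-object data of \Cref{def:universalobject}: a section $s_A\colon A\to U$, a retraction $r_A\colon U\to A$, and a homotopy $\rho_A\colon r_As_A\Rightarrow\mathsf{id}_A$. Define the replacement
\[
    X^U\coloneqq\bigl(X,\,U,\,\Pi(s_A)\circ\|-\|_X\bigr),
\]
which lies in $\mathbf{PGAsm}(\mathbf{R},\mathbb{I},U)$: its underlying groupoid is that of $X$, and realizers of objects and isomorphisms are simply pushed along $\Pi(s_A)$ into $\Pi U$.

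Next I would exhibit two morphisms of partitioned groupoidal assemblies, both with underlying functor $\mathsf{id}_X$. The identity carries a morphism $\iota\colon X\to X^U$ realized by $(s_A,\mathsf{id})$, since $\Pi(s_A)\circ\|-\|_X=\|-\|_{X^U}$ holds on the nose. In the reverse direction it carries $\pi\colon X^U\to X$ realized by $(r_A,\epsilon)$, where
\[
    \epsilon\coloneqq\Pi(\rho_A)\ast\|-\|_X\colon\ \Pi(r_A)\circ\|-\|_{X^U}=\Pi(r_As_A)\circ\|-\|_X\ \Longrightarrow\ \|-\|_X.
\]
Here I use that $\Pi$ is a $2$-functor, so it carries the homotopy $\rho_A$ to a natural isomorphism $\Pi(\rho_A)\colon\Pi(r_As_A)\Rightarrow\mathsf{id}_{\Pi A}$ as in \Cref{sec:fundamentalgroupoids}; right-whiskering by $\|-\|_X$ supplies the required $\epsilon$.

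Finally, since composition in $\mathbf{PGAsm}(\mathbf{R},\mathbb{I})$ is just composition of underlying functors---realizers being required only to exist, not carried as data---the composites $\pi\circ\iota$ and $\iota\circ\pi$ have underlying functors $\mathsf{id}_X$ and $\mathsf{id}_{X^U}$ and so are the respective identity morphisms. Thus $\iota$ and $\pi$ are mutually inverse and $X\cong X^U$, proving essential surjectivity; combined with the fact that a full sub-$2$-category inclusion is an isomorphism on each hom-category, this yields the claimed equivalence of $2$-categories. The only non-formal ingredient is the construction of the realizer $\epsilon$ for $\pi$ out of the pseudo-retraction homotopy $\rho_A$ via the $2$-functoriality of $\Pi$; this is where the main (and essentially only) work lies, after which the collapse of both composites to identity functors---and hence to identity morphisms---is immediate, and is precisely where the "realizers need only exist" feature of $\mathbf{PGAsm}$ pays off.
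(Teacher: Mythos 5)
Your proposal is correct and follows essentially the same route as the paper: both reduce to essential surjectivity (fullness of the inclusion being immediate) and exhibit an isomorphism $X\cong(X,U,\Pi(s_A)\circ\Vert-\Vert_X)$ whose two directions are the identity functor realized by $(s_A,\mathsf{id})$ and by $(r_A,\Pi(\rho_A))$ respectively. Your version merely spells out the whiskering of $\Pi(\rho_A)$ by $\Vert-\Vert_X$ and the reason the composites are identities, which the paper leaves implicit.
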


\begin{proof}
It suffices to show essential surjectivity. For any $X=(X,A,{\left\Vert-\right\Vert}_X)$ we define an isomorphism
\begin{align*}
    F:X\rightarrow X' \coloneq \left(X,U,{\left\Vert-\right\Vert}_{X'}\right)
\end{align*}
where
\begin{align*}
    {\left\Vert-\right\Vert}_{X'} \coloneqq \Pi(s_A) {\left\Vert-\right\Vert}_X
\end{align*}
The underlying functor of $F$ is $\mathsf{id}_X$, which is realized by $(s_A,\mathsf{id})$. The underlying functor of $F^{-1}$ is also $\mathsf{id}_X$, this time realized by $(r_A,\Pi(\rho_A))$.
\end{proof}

In addition to the above, we also allow the underlying groupoids of partitioned groupoidal assemblies to be locally small (objects of $\mathbf{GPD}$). The realizer category is also assumed to be locally small, which means that the fundamental groupoid functor $\Pi:\mathbf{R}\rightarrow \mathbf{Gpd}$ lands in small groupoids, and further, that the category of presheaves $\widehat{\Pi A}$ on any fundamental groupoid is locally small (and thus able to be the underlying groupoid of a partitioned groupoidal assembly).

Van den Berg (\citeyear{vandenberg18b}) defines a notion of \textit{representation} for a subclass $\mathcal{S}$ of fibrations (called "small" fibrations) that contains all isomorphisms and is closed under composition and homotopy pullbacks along arbitrary maps. A representation models a type-theoretic universe where small types are interpreted as small fibrations. The universe is \textit{impredicative} iff the (weak) dependent product $\Pi_\alpha (f) \in \mathcal{S}$ whenever $\alpha$ is a fibration and $f\in\mathcal{S}$. We will now show that $\mathbf{PGAsm}(\mathbf{R},\mathbb{I},U)$ has an impredicative universe of 1-types, where the small types are \textit{modest fibrations}.


\begin{defn}\label{def:modestfibration}
    A modest fibration is a fibration $M:Y\rightarrow X$ such that for all $x:\mathbf{1}\rightarrow X$ the pullback
    \begin{align*}
        \begin{tikzcd}[ampersand replacement=\&]
            Y_x
            \arrow{r}{\pi_2}
            \arrow[d, swap, "x^* M"]
            \arrow[dr, phantom, "\lrcorner", very near start]
            \&
            Y
            \arrow{d}{M}
            \\
            \mathbf{1}
            \arrow{r}[swap]{x}
            \&
            X
        \end{tikzcd}
    \end{align*}
    is modest in the sense of Definition \ref{def:partitionedgroupoidalassemblies}, that is, the realizability functor
    \begin{align*}
        {\left\Vert-\right\Vert}_{Y_x} \coloneqq {\left\Vert-\right\Vert}_Y \circ \pi_2 : Y_x \rightarrow \Pi U
    \end{align*}
    for each of the fibres $Y_x$ is fully faithful.
\end{defn}

\begin{rem}\label{rem:split}
    Suppose that $M:Y \rightarrow X$ is a modest fibration. Is the splitting
    \begin{align*}
        \begin{tikzcd}[ampersand replacement=\&]
            \widetilde{Y}
            \arrow[rr, "S", "\sim"']
            \arrow[dr, swap, "{\widetilde{M}}"]
            \& \&
            Y
            \arrow[dl, "M"]
            \\
            \&
            X
        \end{tikzcd}
    \end{align*}
    of $M$ again a modest fibration? By Brown's lemma it is: Take any $x:\mathbf{1} \rightarrow X$. Then, by Brown's lemma,
    \begin{align*}
        x^* S : \widetilde{Y}_x \rightarrow Y_y
    \end{align*}
    is an equivalence between the fibres. The realizability functor of $\widetilde{Y}_x$ is given by:
    \begin{align*}
        {\left\Vert-\right\Vert}_{\widetilde{Y}_x}
        &\coloneqq {\left\Vert-\right\Vert}_Y \circ S \circ i_{\widetilde{Y}_x}
        \\
        &= {\left\Vert-\right\Vert}_Y \circ i_{Y_x} \circ x^* S
        \\
        &= {\left\Vert-\right\Vert}_{Y_x} \circ x^* S
    \end{align*}
    (where $i$'s denote inclusions of fibres into total groupoids), which is, as the composition of two fully faithful functors, itself fully faithful.
\end{rem}

It is clear that isomorphisms are modest fibrations. By the pullback lemma, modest fibrations are closed under pullback.
\begin{prop}
    Modest fibrations are closed under composition.
\end{prop}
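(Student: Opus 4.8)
The plan is to reduce the statement to the modesty of the fibres of the composite, and then to establish that by combining the modesty of the base fibres with that of the fibres of the fibres. Suppose $M : Y \to X$ and $N : Z \to Y$ are modest fibrations; I will show $M N : Z \to X$ is a modest fibration. That $MN$ is a fibration is immediate from (PC1), so the content lies in modesty. Fix an object $x : \mathbf{1} \to X$ and write $Z_x$ for the fibre of $MN$ over $x$. Since $N$ restricts to an isofibration $N_x : Z_x \to Y_x$ (an object $z \in Z_x$ has $M(Nz) = x$, so $Nz \in Y_x$, and for $q \in Y_x$ the $N$-lift $\overline{q}(z)$ satisfies $MN\overline{q}(z) = Mq = \mathsf{id}_x$, hence lands in $Z_x$), the fibre $Z_x$ sits over $Y_x$ with fibres the $Z_y$ for $y \in Y_x$. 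By hypothesis $Y_x$ is modest (from $M$) and each $Z_y$ is modest (from $N$). Fix once and for all a realizer $(e,\epsilon)\Vdash N$, so that $\epsilon : \Pi(e)\, \Vert - \Vert_Z \Rightarrow \Vert - \Vert_Y \circ N$ with $\epsilon_z : \Pi(e)\Vert z \Vert_Z \to \Vert Nz \Vert_Y$.

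The goal is that $\Vert - \Vert_{Z_x} = \Vert - \Vert_Z \circ \pi_2$ is fully faithful. The workhorse is the fibre-hom decomposition for the isofibration $N$: for $z,z' \in Z_x$ with $y := Nz$ and $y' := Nz'$, the $N$-lift $\overline{q}(z) : z \to q^{*}z$ induces, for each $q : y \to y'$ in $Y_x$, a bijection between morphisms $p : z \to z'$ of $Z_x$ with $Np = q$ and morphisms $q^{*}z \to z'$ of the fibre $Z_{y'}$. I would treat faithfulness first. If $p,p' : z \to z'$ satisfy $\Vert p \Vert_Z = \Vert p' \Vert_Z$, then naturality of $\epsilon$ gives $\Vert Np \Vert_Y = \epsilon_{z'} \circ \Pi(e)(\Vert p \Vert_Z) \circ \epsilon_z^{-1} = \Vert Np' \Vert_Y$, whence $Np = Np' =: q$ by faithfulness of $Y_x$; transporting both along $\overline{q}(z)^{-1}$ into $Z_{y'}$ produces two morphisms with the common realizer $\Vert p \Vert_Z \circ \Vert \overline{q}(z) \Vert_Z^{-1}$, so they agree by faithfulness of $Z_{y'}$, giving $p = p'$.

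For fullness, take a path $\alpha : \Vert z \Vert_Z \to \Vert z' \Vert_Z$ in $\Pi U$. Pushing forward by $e$ and conjugating by $\epsilon$ yields $\epsilon_{z'} \circ \Pi(e)(\alpha) \circ \epsilon_z^{-1} : \Vert y \Vert_Y \to \Vert y' \Vert_Y$, and fullness of $Y_x$ supplies $q : y \to y'$ in $Y_x$ realizing it. Forming the lift $\overline{q}(z) : z \to q^{*}z$, the composite $\alpha \circ \Vert \overline{q}(z) \Vert_Z^{-1}$ is a path $\Vert q^{*}z \Vert_Z \to \Vert z' \Vert_Z$, so fullness of $Z_{y'}$ yields $p' : q^{*}z \to z'$ realizing it; then $p := p' \circ \overline{q}(z)$ lies in $Z_x$ (since $MNp = Mq = \mathsf{id}_x$) and satisfies $\Vert p \Vert_Z = \Vert p' \Vert_Z \circ \Vert \overline{q}(z) \Vert_Z = \alpha$, as required.

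I expect the main obstacle to be the bookkeeping that threads the two modesty hypotheses through the transport coherently: the realizer $\epsilon$ of $N$ must be used to translate a path between realizers of $Z$-objects into one between realizers of their $N$-images (so that $Y_x$ can be invoked), while the residual freedom is resolved inside a single fibre $Z_{y'}$; verifying that the lifted morphism realizes $\alpha$ on the nose (not merely up to the lift) is where care is needed. All $(2,1)$-categorical coherence is inherited from $\mathbf{Gpd}$, so no axiom-checking beyond these hom-set bijections should be required.
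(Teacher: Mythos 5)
Your proposal is correct and follows essentially the same route as the paper's proof: translate a path between realizers of $Z$-objects through a fixed realizer of the inner fibration to get a path between realizers of their images, invoke fullness/faithfulness of the fibre of the outer fibration to obtain/identify a morphism there, then transport along the resulting lift and finish with fullness/faithfulness of a single fibre of the inner fibration. The only differences are cosmetic (ordering of the two halves and the explicit remark that $N$ restricts to an isofibration between fibres, which the paper leaves implicit).
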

\begin{proof}
    Let $F: X \rightarrow Y$ and $G: Y \rightarrow Z$ be modest fibrations. We want to show that the fibre $X^{GF}_z$ of $GF$ over $z$ is modest for any $z\in Z$.

    For faithfulness, take $x,x'\in X^{GF}_z$ and $\alpha: \left\Vert x \right\Vert_X \rightarrow \left\Vert x' \right\Vert_X \in \Pi U$. We want a morphism $p:x \rightarrow x' \in X^{GF}_z$ such that $\left\Vert p \right\Vert_X = \alpha$. Suppose that $(e,\epsilon) \Vdash F$. Then we have a path
    \begin{align*}
        \begin{tikzcd}[ampersand replacement=\&]
            \left\Vert Fx \right\Vert_Y
            \arrow[r, dashed]
            \&
            \left\Vert Fx' \right\Vert_Y
            \\
            \Pi(e)\left\Vert x \right\Vert_X
            \arrow[u, "\epsilon_x"]
            \arrow[r, swap, "\Pi(e)(\alpha)"]
            \&
            \Pi(e)\left\Vert x' \right\Vert_X
            \arrow[u, swap, "\epsilon_{x'}"]
        \end{tikzcd}
    \end{align*}
    We know that $Fx.Fx'\in Y^G_z$ are both in the fibre of $G$ over $z$, so by fullness of $\left\Vert-\right\Vert_{Y_z}$ we get a morphism $q: Fx \rightarrow Fx' \in Y^G_z$. Now, the transport $q^* x$ and $x'$ are both in the fibre $X^F_{Fx'}$, and we have a path $\alpha \circ \left\Vert \overline{g}(x)^{-1} \right\Vert_X : \left\Vert g^* x \right\Vert_X \rightarrow \left\Vert x' \right\Vert_X$. So by fullness of $\left\Vert-\right\Vert_{X_{Fx'}}$ we get a morphism $r: g^* x \rightarrow x' \in X^F_{Fx'}$. So we take $p \coloneqq r \circ \overline{q}(x)$. Clearly,
    \begin{align*}
        \left\Vert p \right\Vert_X = \left\Vert r \right\Vert_X \circ \left\Vert \overline{q}(x) \right\Vert_X = \alpha \circ \left\Vert \overline{q}(x)^{-1} \right\Vert_X \circ \left\Vert \overline{q}(x) \right\Vert_X = \alpha
    \end{align*}
    and
    \begin{align*}
        G(F(p)) = G \left(F\left(r\circ \overline{q}(x) \right) \right) = G(q) = z
    \end{align*}

    For faithfulness, suppose $p,q: x \rightarrow x' \in X^{GF}_z$ such that $\left\Vert p \right\Vert_{X} = \left\Vert q \right\Vert_{X}$. We want to show that $p=q$. We know that $F(p), F(q)\in Y^G_z$. So we know that $\left\Vert F(p) \right\Vert_Y = \left\Vert F(q) \right\Vert_Y$.
    \begin{align*}
            \begin{tikzcd}[ampersand replacement=\&]
            \Pi(e)\left\Vert x \right\Vert_X
            \arrow[r, "\epsilon_x"]
            \arrow[d, swap, "\substack{\Pi(e)\left\Vert p \right\Vert_X \\ = \Pi(e)\left\Vert q \right\Vert_X}"]
            \&
            \left\Vert Fx \right\Vert_Y
            \arrow[d, swap, "\left\Vert F(p) \right\Vert_Y", xshift=-6]
            \arrow[d, "\left\Vert F(q) \right\Vert_Y", xshift=4]
            \\
            \Pi(e)\left\Vert x' \right\Vert_X
            \arrow[r, swap, "\epsilon_{x'}"]
            \&
            \left\Vert Fx' \right\Vert_Y
        \end{tikzcd}
    \end{align*}
    By faithfulness of $\left\Vert-\right\Vert_{Y_z}$ we infer that $F(p)=F(q)\eqcolon r$. Then $r^* x, x' \in X^F_{Fx'}$ and we have maps
    \begin{align*}
        &p \circ {\overline{r}(x)}^{-1}
        &&q \circ {\overline{r}(x)}^{-1}
    \end{align*}
    such that
    \begin{align*}
        \left\Vert p \circ {\overline{r}(x)}^{-1} \right\Vert_X = \left\Vert q \circ {\overline{r}(x)}^{-1} \right\Vert_X
    \end{align*}
    By faithfulness we deduce $p \circ {\overline{r}(x)}^{-1} = q \circ {\overline{r}(x)}^{-1}$ and thus $p=q$.
\end{proof}

We will now explain what a representation for modest fibrations amounts to. A representation $\theta$ for modest fibrations is a modest fibration $\theta:\Theta\rightarrow\Lambda$ such that for every modest fibration $M:Y \rightarrow X$ there is a map $M_*: Y \rightarrow \Theta$ such that $\theta \circ M_* = \nu_M \circ M$ and such that the induced map $\left[ M_*, M \right] : Y \rightarrow P$ is an equivalence $M \simeq (\nu_M)^*\theta$ in $\mathbf{PGAsm}(\mathbf{R},\mathbb{I},U)$. 
\begin{align*}
    \begin{tikzcd}[ampersand replacement=\&]
        Y
        \arrow[ddrr, bend right=18, swap, "M"]
        \arrow[rr, dashed,  "{\left[ M_*, M \right]}", "{\sim}"']
        \arrow[rrrr, bend left=42, "{M_*}"]
        \& \&
        P
        \arrow{rr}
        \arrow{dd}[swap]{(\nu_M)^*\theta}
        \arrow[ddrr, phantom, "\lrcorner", very near start]
        \& \&
        \Theta
        \arrow{dd}{\theta}
        \\ \\
        \& \&
        X
        \arrow{rr}[swap]{\nu_M}
        \& \&
        \Lambda
    \end{tikzcd}
\end{align*}

\begin{thm}\label{thm:representationmodestfibrations}
    $\mathbf{PGAsm}(\mathbf{R},\mathbb{I},U)$ has a representation for modest fibrations.
\end{thm}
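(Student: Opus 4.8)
The plan is to build the universal modest fibration $\theta\colon\Theta\to\Lambda$ directly, in the spirit of the classical impredicative universe of modest sets, exploiting the fact that $U$ is universal so that every modest type embeds fully faithfully into the \emph{single, fixed} groupoid $\Pi U$ and is realized by points of $U$. I would take $\Lambda$ to be the (locally small, in general large, i.e.\ an object of $\mathbf{GPD}$) groupoid whose objects are the modest partitioned groupoidal assemblies of realizer type $U$ and whose morphisms $S\to S'$ are the isomorphisms realized by $(\mathsf{id}_U,\epsilon)$, each \emph{carrying} such a witness $\epsilon$ as data (exactly as the objects of $\mathsf{Real}(Y^X)$ carry their realizers in \Cref{thm:pgasmcartesianclosure}). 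Fixing the point $u_0\coloneqq s_1\colon\mathbb{I}_0\to U$ supplied by universality of $U$, I would make $\Lambda$ a partitioned groupoidal assembly via the \emph{constant} realizability functor $\|-\|_\Lambda\equiv u_0$; note $\Lambda$ itself need not be modest, only its $\theta$-fibres must be. Let $\Theta$ be the associated total groupoid (Grothendieck construction): objects $(S,s)$ with $s$ an object of $S$, morphisms $(S,s)\to(S',s')$ pairs $(F,\gamma)$ of a morphism $F\colon S\to S'$ of $\Lambda$ and $\gamma\colon Fs\to s'$ in $S'$, with $\|(S,s)\|_\Theta\coloneqq\|s\|_S$ and $\|(F,\gamma)\|_\Theta\coloneqq\|\gamma\|_{S'}\circ\epsilon^F_s$ (well defined precisely because $F$ carries its $\epsilon^F$). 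The projection $\theta$ is then realized by the constant map $u_0\circ{!}\colon U\to U$.

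Next I would check that $\theta$ is a modest fibration. It is an isofibration in $\mathbf{Gpd}$: an isomorphism $F\colon S\to S'$ of $\Lambda$ lifts at $(S,s)$ to $(F,\mathsf{id}_{Fs})\colon(S,s)\to(S',Fs)$. Its strict fibre over a point $S$ is $\{(S,s):s\in S\}\cong S$ with realizability functor $s\mapsto\|s\|_S$, hence modest because $S$ is; so $\theta$ is a modest fibration in the sense of \Cref{def:modestfibration}.

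The universal property is where the real work lies. Given a modest fibration $M\colon Y\to X$, I would first replace it by an equivalent \emph{split} fibration $\widetilde M\colon\widetilde Y\to X$, which is still modest by \Cref{rem:split}. For the split fibration the transport functors $q^*$ compose strictly and are genuine isomorphisms of fibres, each realized by $(\mathsf{id}_U,\epsilon)$ as recorded after the fibration lemma; hence $\nu\colon X\to\Lambda$, $x\mapsto\widetilde Y_x$, $q\mapsto q^*$, is a \emph{strict} functor, realized by $(u_0\circ{!},\mathsf{id})$. Define $\widetilde M_*\colon\widetilde Y\to\Theta$ by $y\mapsto(\widetilde Y_{\widetilde My},y)$ and $v\mapsto(q^*,\bar v)$, with $\bar v$ the vertical part of $v$; since $\|-\|_\Theta\circ\widetilde M_*=\|-\|_{\widetilde Y}$, it is realized by $(\mathsf{id}_U,\mathsf{id})$, and $\theta\circ\widetilde M_*=\nu\circ\widetilde M$ on the nose. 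The pullback $\nu^*\Theta$ then has objects $(x,(\widetilde Y_x,s))$ with $s\in\widetilde Y_x$, i.e.\ exactly the points of $\widetilde Y$, and likewise for morphisms, so $[\widetilde M_*,\widetilde M]\colon\widetilde Y\to\nu^*\Theta$ is an isomorphism, in particular an equivalence. Finally, to descend to the original $M$, I would compose the equivalence $Y\simeq\widetilde Y\cong\nu^*\Theta$ and use that $\nu^*\theta$ is a fibration to lift the comparison $2$-cell (as in the fibration lemma), producing an $M_*$ with $\theta\circ M_*=\nu\circ M$ strictly and $[M_*,M]$ an equivalence; this is legitimate because being an equivalence is invariant under fibration equivalence.

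The main obstacle is coherence rather than any single hard computation. Three tensions must be resolved at once: (i) the realizability functors of $\Lambda$ and $\Theta$ must be genuinely functorial, which forces morphisms to preserve realizers up to isomorphism (the restriction to $\mathsf{id}_U$-realized maps) and to carry their witnessing $2$-cells as data; (ii) the classifying map $\nu_M$ must be a \emph{strict} functor into the groupoid $\Lambda$, which is exactly why the fibration must first be split, since pseudofunctorial transport would not suffice; and (iii) the whole construction must remain within realizer type $U$ even though $\Lambda$ classifies \emph{all} modest fibrations, which is precisely the impredicative phenomenon and is possible only because $U$ is universal, so that every modest type embeds fully faithfully into the fixed groupoid $\Pi U$ and its elements are named by points of $U$. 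Managing the size of $\Lambda$ (a large but locally small object of $\mathbf{GPD}$) and verifying that the constant realizer $u_0$ is compatible with all of the above is the delicate part; by comparison, the checks that $[M_*,M]$ is an equivalence and that $\theta$ is modest are essentially formal.
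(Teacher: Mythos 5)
Your overall architecture---a chaotic (constant-realizer) base $\Lambda$, a Grothendieck-style total object $\Theta$ whose realizability functor reads off the fibre component, splitting $M$ before classifying it by its fibres, and getting the classifying map realized for free because $\Lambda$ is chaotic---is exactly the paper's. The one real divergence is the choice of $\Lambda$: you take its objects to be the modest partitioned groupoidal assemblies of realizer type $U$ themselves, with realizer-preserving isomorphisms carrying their witnesses, whereas the paper takes $\Lambda \coloneqq \nabla\widehat{\Pi U}$, the chaotic assembly on the groupoid of $\mathbf{Bij}$-valued presheaves on $\Pi U$; the presheaf $\nu_M(x)$ is in effect a canonical small encoding of the image of ${\left\Vert-\right\Vert}_{Y_x}$ in $\Pi U$, the groupoidal analogue of replacing a modest set by its PER. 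Your choice buys a strict isomorphism $\widetilde Y \cong \nu^*\Theta$ for split $M$, which is appealingly clean.

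The genuine gap is the size issue you flag but do not resolve, and it is not benign. Under the paper's conventions the underlying groupoid of a partitioned groupoidal assembly must be locally small, but your $\Lambda$ is not: a modest assembly may have a large (though locally small) underlying groupoid---for instance a proper class of objects all sent to the same point of $\Pi U$---and then the automorphisms of such an object in your $\Lambda$ form a proper class. If you repair this by admitting only small underlying groupoids, then $\nu(x) = \widetilde Y_x$ need not be an object of $\Lambda$ when $Y$ is large, so the classification fails. The only repair that classifies everything sends $x$ to a canonical small representative of $\widetilde Y_x$---its full image in $\Pi U$, which is essentially the paper's presheaf---and at that point $[M_*,M]$ is merely an equivalence, and you must construct the pseudoinverse $M^*$ and the $2$-cells $\sigma,\tau$ together with their realizers. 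That is precisely the part of the paper's proof that is not formal: it requires coherent choices (via AC) of representatives $\gamma_{x,a}$ and of elements $\upsilon(\gamma_{x,a})$ realized by their codomains, uses fullness of ${\left\Vert-\right\Vert}_{Y_{x'}}$ to define $M^*$ on morphisms and faithfulness for its functoriality, and uses the pseudoretraction data $\rho$ of the universal object to realize $\sigma$ and $\tau$. Declaring these checks ``essentially formal'' is where your proposal stops short of a proof.
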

Before giving the proof, we first define the "chaotic" inclusion
\begin{align*}
    \nabla:\mathbf{GPD} \rightarrow\mathbf{PGAsm}(\mathbf{R},\mathbb{I},U)
\end{align*}
of groupoids into partitioned groupoidal assemblies. Choose an arbitrary $a_0\in \Pi U$ and set:
\begin{align*}
    \nabla X &\coloneqq \left( X, {\left\Vert-\right\Vert}_{\nabla X} : p \mapsto a_0 \right)
    \\
    \nabla(F) &\coloneqq F
\end{align*}
where $\nabla(F)$---in fact any functor into an object in the image of $\nabla$---is realized by $(a_0*,\mathsf{id})$ (ie. $\nabla$ is right adjoint to the underlying groupoid functor). 
\begin{proof}[Proof of Theorem \ref{thm:representationmodestfibrations}]
    Define
    \begin{align*}
        \Lambda \coloneqq \nabla \widehat{\Pi U}
    \end{align*}
    where $\widehat{\Pi U}$ is the groupoid of $\mathbf{Bij}$-valued presheaves on $\Pi U$.
    
    The underlying groupoid of $\Theta$ has as objects pairs $(F, a)$, where $F\in\widehat{\Pi U}$ and $a\in \Pi U$ is such that $Fa\neq\emptyset$. A morphism $(\psi,\alpha): (F,a)\rightarrow (G,b)$ consists of a natural isomorphism $\psi:F\Rightarrow G$ and a path $\alpha: a\rightarrow b$. The realizability functor ${\left\Vert-\right\Vert}_\Theta$ is given by the second projection. The modest fibration $\theta\coloneqq\pi_1:\Theta\rightarrow\Lambda$ is given by the first projection. Fix $(F,a)$ and $(F,b)$ in the fibre over $F\in\Lambda$. Every $\alpha:a\rightarrow b$ in $\Pi U$ uniquely determines a morphism $(\mathsf{id}_F,\alpha)$ in the fibre over $F$. The chosen lift of $\psi:F\rightarrow G$ at $(F,a)$ is $(\psi,\mathsf{id}_a):(F,a) \rightarrow (G,a)$.
    
    Let $M:Y\rightarrow X$ be a modest fibration. By Remark \ref{rem:split}, we can (using AC) take $M$ to be split. Define the object part of its characteristic map $\nu_M:X \rightarrow \Lambda$ of $M$ as follows. First,
    \begin{align*}
        &\faktor{\nu_M(x)(a) \coloneqq \left\{ \alpha:a\rightarrow a' \mid \exists y\in Y. \, My=x \land \Vert y \Vert_Y = a' \right\}}{\sim}
    \end{align*}
    where the relation $\sim$ is isomorphism in the coslice above $a$. Then
    \begin{align*}
        &\nu_M(x)(\beta:a\rightarrow b)[\alpha] \coloneqq [\alpha \circ \beta^{-1}]
    \end{align*}

    To define $\nu_M(p:x\rightarrow x')(\beta,i)[\alpha]$, we pick a representative $\alpha:a\rightarrow c$ from $[\alpha]$. Further, we pick $y_x^{c}\in Y$ such that $M(y_x^{c})=x$ and $\left\Vert y_x^{c} \right\Vert_Y = c$. Using the fact that $M$ is an isofibration, we define
    \begin{align*}
        \nu_M(p:x\rightarrow x')(\beta,i)[\alpha] \coloneqq \left[ {\left\Vert \overline{p}\left( y_x^{c} \right) \right\Vert}_Y \circ \alpha \circ \beta^{-1} \right]
    \end{align*}
    The following diagram helps visualize this definition.
    \begin{align*}
        \begin{tikzcd}[ampersand replacement=\&]
            \Pi U
            \& \&
            b
            \arrow[rr, "\beta^{-1}"]
            \& \&
            a
            \arrow[rr, "\alpha"]
            \& \&
            c
            \arrow[rr, "\left\Vert \overline{p}\left( y^{c}_{x} \right) \right\Vert_Y"]
            \& \&
            {\left\Vert p^* y^{c}_{x} \right\Vert}_Y
            \\ \\
            Y
            \arrow[uu, "{\left\Vert-\right\Vert}_Y"]
            \arrow[dd, swap, "M"]
            \& \& \& \& \& \&
            y^{c}_{x}
            \arrow[rr, "\overline{p}\left( y^{c}_{x} \right)"]
            \& \&
            p^* y^{c}_{x}
            \\ \\
            X
            \& \& \& \& \& \&
            x
            \arrow[rr, swap, "p"]
            \& \&
            x'
        \end{tikzcd}
    \end{align*}

    This definition is independent of the choices made: Let $\alpha':a\rightarrow d$ be a element of $[\alpha]$ isomorphic to $\alpha$ in the coslice above $a$, and let $y^d_x \in Y$ be a choice of element such that $M(y_x^{d})=x$ and ${\left\Vert y_x^{d} \right\Vert}_Y = d$. Then by modesty of $M$ we have an isomorphism $y_x^{c} \rightarrow y_x^{d}$. Transporting this isomorphism along $p$ and applying ${\left\Vert-\right\Vert}_Y$ yields an isomorphism
    \begin{align*}
        \left\Vert \overline{p}\left( y_x^{c} \right) \right\Vert_Y \circ \alpha \circ \beta^{-1} \rightarrow \left\Vert \overline{p}\left( y_x^{d} \right) \right\Vert_Y \circ \alpha \circ \beta^{-1}
    \end{align*}
    in the coslice above $b$.

    To show that $\nu_M(x)$ is functorial, first note that $\nu_M(\mathsf{id}_x)(\beta,i)[\alpha] = [\alpha \circ \beta^{-1}]$. For composition, we exhibit an isomorphism between
    \begin{align*}
        \left\Vert \overline{qp}\left( y_x^{c} \right) \right\Vert_Y \circ \alpha \circ \beta^{-1}
    \end{align*}
    and
    \begin{align*}
        \left\Vert \overline{p}\left( y_{x'}^{\left\Vert p^* y^c_x \right\Vert_Y} \right) \right\Vert_Y \circ \left\Vert \overline{p}\left( y_x^{c} \right) \right\Vert_Y \circ \alpha \circ \beta^{-1}
    \end{align*}
    in the slice over $b$, where $y_{x'}^{\left\Vert p^* y^c_x \right\Vert_Y}$ is a chosen element such that:
    \begin{align*}
    M\left(y_{x'}^{\left\Vert p^* y^c_x \right\Vert_Y}\right) &= x'
    \\
    \left\Vert y_{x'}^{\left\Vert p^* y^c_x \right\Vert_Y} \right\Vert_Y &= \left\Vert p^* y^c_x \right\Vert_Y
    \end{align*}
    Given that $p^* y^c_{x'}$ and $y_{x'}^{\left\Vert p^* y^c_x \right\Vert_Y}$ are both in the fibre over $x'$ and the image of each under ${\left\Vert-\right\Vert}_Y$ is $\left\Vert p^* y^c_x \right\Vert_Y$, by the fullness of ${\left\Vert-\right\Vert}_Y$ we obtain an isomorphism
    \begin{align*}
        \left\Vert p^* y^c_x \right\Vert^{-1}_Y : p^* y^c_{x'} \rightarrow y_{x'}^{\left\Vert p^* y^c_x \right\Vert_Y}
    \end{align*}
    Transporting this isomorphism along $q$ and then applying ${\left\Vert-\right\Vert}_Y$ gives the desired isomorphism.
    
    The pullback $P \coloneqq X \times_\Lambda \Theta$ has objects of the form $(x,a)$ (we omit the component $\nu_M(x)$ that is determined by $x$) and morphisms of the form $(p,\alpha)$. To show that $M$ is equivalent to $(\nu_M)^*\theta$ we first construct $M_*:Y\rightarrow \Theta$.
    \begin{align*}
        M_*(y) &\coloneqq \left( \nu_M(My), \left\Vert y \right\Vert_Y \right)
        \\
        M_*(q) &\coloneqq \left( \nu_M(M(q)), \left\Vert q \right\Vert_Y \right)
    \end{align*}
    This is realized by $(\mathsf{id},\mathsf{id})$ and indeed satisfies $\nu_M \circ M = \theta \circ M_*$. So we get the universal map $[M_*,M]$.
    
    Now we define a pseudoinverse $M^*:P\rightarrow Y$ to $[M_*,M]$. Given $(x,a)\in P$, we choose an element $[\gamma_{x,a}] \in \nu_M(x)(a)$ and a representative $\gamma_{x,a}: a \rightarrow c_{x,a}$ from $[\gamma_{x,a}]$ (we know $\nu_M(x)(a)$ is non-empty). Moreover, $\upsilon(\gamma_{x,a}) \in Y$ is a chosen element such that $M(\upsilon(\gamma_{x,a})) = x$ and $\left\Vert \upsilon(\gamma_{x,a}) \right\Vert_Y = c_{x,a}$.
    
    With this, on objects we define
    \begin{align*}
        M^*\left(x,a\right) \coloneqq \upsilon\left(\gamma_{x,a}\right)
    \end{align*}
    Now take a morphism $\left(p,\alpha\right): \left(x,a\right)\rightarrow \left(x',b\right) \in P$. We would like to define its image under $M^*$ to be
    \begin{align}\label{eqn:M*morphismsalmost}
        \overline{p}\left( \upsilon\left(\gamma_{x,a}\right) \right): \upsilon\left(\gamma_{x,a}\right) \rightarrow p^* \upsilon\left(\gamma_{x,a} \right)
    \end{align}
    but it is not necessarily the case that
    \begin{align}\label{eqn:codomainsalign}
        p^* \upsilon\left(\gamma_{x,a} \right) = \upsilon\left(\gamma_{x',b}\right)
    \end{align}
    ie. the codomain may not align. Here we utilise modesty. Consider the following commutative diagram in $\Pi U$.
    \begin{align*}
        \begin{tikzcd}[ampersand replacement=\&]
            a
            \arrow[rr, "\alpha"]
            \arrow[d, swap, "\gamma_{x,a}"]
            \& \&
            b
            \arrow[d, "\gamma_{x',b}"]
            \\
            c_{x,a}
            \arrow[dr, swap, "{\left\Vert \overline{p}\left( \upsilon\left(\gamma_{x,a}\right) \right) \right\Vert_Y }"]
            \& \&
            c_{x',b}
            \\
            \&
            \left\Vert p^* \upsilon\left(\gamma_{x,a}\right) \right\Vert_Y
            \arrow[ur, dashed, swap, "\delta_{p,\alpha}"]
        \end{tikzcd}
    \end{align*}
    We know that
    \begin{align*}
        p^* \upsilon(\gamma_{x,a}), \upsilon(\gamma_{x',b}) \in Y_{x'}
    \end{align*}
    and we have the morphism
    \begin{align*}
        \delta_{p,\alpha}: \left\Vert p^* \upsilon\left(\gamma_{x,a} \right) \right\Vert_Y \rightarrow c_{x',b} = \left\Vert \upsilon\left(\gamma_{x',b}\right) \right\Vert_Y
    \end{align*}
    Thus, using fullness, we obtain a morphism
    \begin{align*}
        \left\Vert \delta_{p,\alpha} \right\Vert^{-1}_Y : p^* \upsilon\left(\gamma_{x,a} \right) \rightarrow \upsilon\left(\gamma_{x',b}\right)
    \end{align*}
    in the fibre $Y_{x'}$. Therefore we can define
    \begin{align*}
        M^*\left(p,\alpha\right) \coloneqq {\left\Vert \delta_{p,\alpha} \right\Vert}^{-1}_Y \circ \overline{p}\left( \upsilon\left(\gamma_{x,a}\right) \right)
    \end{align*}
    which reduces to (\ref{eqn:M*morphismsalmost}) in case (\ref{eqn:codomainsalign}) holds. Faithfulness ensures this is functorial. $M^*$ is realized by $(\pi_2 \circ r_{U\times U}, \epsilon)$, where $\epsilon_{(x,a)} \coloneqq \gamma_{x,a} \circ \rho_a$.

    There is a natural isomorphism $\sigma: \mathsf{id}_Y \Rightarrow M^* [M_*,M]$, defined
    \begin{align*}
        \sigma(q,i) \coloneqq \left\Vert \gamma_{My',\left\Vert y' \right\Vert_Y} \right\Vert^{-1}_Y \circ q 
    \end{align*}
    and realized by $( \pi_1 r_{U\times U}, \epsilon)$, where:
    \begin{align*}
        \epsilon_{(y,0)} &\coloneqq \left( \left\langle {\left\Vert-\right\Vert}_Y , {\left\Vert-\right\Vert}_{\mathbf{I}_1} \right\rangle \ast \Pi\left( \rho_{U\times U} \right) \ast \Pi(\pi_1) \right)_{(y,0)}
        \\
        \epsilon_{(y,1)} &\coloneqq \gamma_{My,{\left\Vert y \right\Vert}_Y} \circ \epsilon_{(y,0)}
    \end{align*}
    Conversely, there is a natural isomorphism $\tau: \mathsf{id}_P \Rightarrow [M_*,M] M^*$, defined
    \begin{align*}
        \tau\left(\left(p, \alpha \right), i\right) \coloneqq \left( p, \gamma_{x',b} \circ \alpha \right)
    \end{align*}
    and realized by $( \pi_1 r_{U\times U}, \epsilon')$, where:
    \begin{align*}
        \epsilon'_{\left( \left(x, a \right), 0 \right)} &\coloneqq \left( \left\langle {\left\Vert-\right\Vert}_P , {\left\Vert-\right\Vert}_{\mathbf{I}_1} \right\rangle \ast \Pi\left( \rho_{U\times U} \right) \ast \Pi(\pi_1) \right)_{\left( \left(x, a \right), 0 \right)}
        \\
        \epsilon'_{\left( \left(x, a \right), 1 \right)} &\coloneqq \left( \mathsf{id}_x, \mathsf{id}_{\nu_M(x)}, \gamma_{x',b} \right) \circ \epsilon'_{\left( \left(x, a \right), 0 \right)}
    \end{align*}
\end{proof}

To round off the impredicative universe, we have:
\begin{thm}
    Let $F: Y \rightarrow Z$ be a fibration and $G: X \rightarrow Y$ be a modest fibration. Then the dependent product $\Pi_F(G): \Pi_F X \rightarrow Z$ is a modest fibration. 
\end{thm}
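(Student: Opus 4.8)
The plan is to leverage the explicit construction of weak $\Pi$-types from \Cref{thm:pgasmweakdependentproducts} and to reduce the modesty of $\Pi_F(G)$ to the \emph{fibrewise} modesty of $G$, in direct analogy with the final paragraph of the proof of \Cref{thm:pgasmcartesianclosure}, where the modesty of $Y$ was transferred to the weak exponential $Y^X$. That $\Pi_F(G):\Pi_F X\rightarrow Z$ is a fibration is already part of \Cref{thm:pgasmweakdependentproducts}, where it is exhibited as a first projection together with explicit chosen lifts; hence by \Cref{def:modestfibration} the only thing left to verify is that, for each $z:\mathbf 1\rightarrow Z$, the fibre $(\Pi_F X)_z$ is modest, i.e. that $\|-\|_{(\Pi_F X)_z}$ is fully faithful. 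Since modesty is preserved under the equivalence of \Cref{thm:pgasmequivalence} and (by \Cref{rem:split}) under splitting, I may freely assume $G$ split and relocate realizer types to $U$ when convenient.

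First I would unwind the fibre. Fixing $z$, an object of $(\Pi_F X)_z$ is a triple $(H,e,\epsilon)$ with $H:F\downarrow z\rightarrow X$ a functor \emph{over} $Y$ and $(\mu(e),\epsilon)\Vdash H$; its realizer under $\|-\|_{(\Pi_F X)_z}$ is, up to the fixed constant $z$-component, just $e$. A morphism of the fibre is $(\mathsf{id}_z,\psi,f,\zeta)$ with $\psi:H\Rightarrow H'$ a natural isomorphism over $Y$ and realizer the path $f:e\rightarrow e'$ (the $z$-component being the identity path on $\|z\|_Z$). The crucial observation is that, because $H,H'$ and $\psi$ all live over $Y$, for every $(y,u)\in F\downarrow z$ the values $H(y,u),H'(y,u)$ lie in the fibre $X_y$ of $G$ over $y$, and $\psi_{(y,u)}$ is a morphism of $X_y$ (it maps to $\mathsf{id}_y$ under $G$). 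By hypothesis $G$ is a modest fibration, so each such $X_y$ is modest, i.e. $\|-\|_{X_y}$ is fully faithful.

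With this in place, full faithfulness of $\|-\|_{(\Pi_F X)_z}$ follows as in the exponential case. For faithfulness, suppose two parallel fibre-morphisms $(\mathsf{id}_z,\psi,f,\zeta)$ and $(\mathsf{id}_z,\psi',f',\zeta')$ share a realizer, so $f=f'$. As in \Cref{thm:pgasmweakdependentproducts}, the coherence datum $\zeta$ is \emph{uniquely} determined by its boundary (which reduces to $\zeta(-,0,-)=\epsilon$ and $\zeta(-,1,-)=\epsilon'$ when $r=\mathsf{id}_z$) together with $f$, because $\partial_{\mathbf{Gpd}}$ is an isomorphism; hence $\zeta=\zeta'$, and reading off the naturality squares gives $\|\psi_{(y,u)}\|_X=\|\psi'_{(y,u)}\|_X$ for every $(y,u)$. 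Faithfulness of $\|-\|_{X_y}$ then yields $\psi_{(y,u)}=\psi'_{(y,u)}$, so $\psi=\psi'$. For fullness, given objects $(H,e,\epsilon),(H',e',\epsilon')$ and any path $f:e\rightarrow e'$, the same unique $\zeta$ supplies, for each $(y,u)$, a realizer path $\|H(y,u)\|_X\rightarrow\|H'(y,u)\|_X$ in $\Pi U$; fullness of $\|-\|_{X_y}$ produces a unique filler $\psi_{(y,u)}\in X_y$ with that realizer, and these assemble into a natural isomorphism $\psi:H\Rightarrow H'$ over $Y$, realized by $f$.

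The routine part is the bookkeeping that extracts the realizer paths of the components $\psi_{(y,u)}$ from $\zeta$ and checks that the family $(\psi_{(y,u)})$ is natural in $(y,u)$. I expect the main obstacle to be precisely this naturality: one must verify that the unique fillers provided fibrewise by the modesty of the $X_y$ cohere across morphisms of $F\downarrow z$, which reduces — via the faithfulness just established — to commutativity of the relevant squares of realizer paths, and these are in turn governed by the double-functoriality of $\partial^{-1}$ and the uniqueness of $\zeta$. Once naturality is secured, full faithfulness is immediate and $\Pi_F(G)$ is a modest fibration.
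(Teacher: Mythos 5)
Your proposal is correct and follows essentially the same route as the paper: reduce modesty of the fibre $(\Pi_F X)_z$ to the fibrewise modesty of $G$, observing that $H(y,u)$ and $H'(y,u)$ land in the fibre $X_y$ because $H$, $H'$ live over $Y$, and then use full faithfulness of ${\left\Vert-\right\Vert}_{X_y}$ to show that any path $f:e\rightarrow e'$ uniquely determines a morphism $(\mathsf{id}_z,\psi,f)$, exactly as at the end of the proof of \Cref{thm:pgasmcartesianclosure}. Your additional attention to the naturality of the assembled fillers $\psi_{(y,u)}$ is a reasonable elaboration of a step the paper leaves implicit.
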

\begin{proof}
    The argument is similar to that at the end of the proof of Proposition \ref{thm:pgasmcartesianclosure}. Given $(z,H,e,\epsilon), (z,H',e',\epsilon') \in (\Pi_F X)_z$ and $(y,u)\in F\downarrow z$, we know that $H(y,u)$ and $H'(y,u)$ are both in the fibre $X^G_y$ by the commutativity of (\ref{eqn:depprod}). Then any $f:e \rightarrow e'$ uniquely determines a morphism $(\mathsf{id}_z, \psi, f) \in (\Pi_F X)_z$ because ${\left\Vert-\right\Vert}_{X_y}$ is fully faithful: the image of the component $\psi_y$ under ${\left\Vert-\right\Vert}_{X_y}$ is the unique morphism making the following square commute.
    \begin{align*}
        \begin{tikzcd}[ampersand replacement=\&]
            {\left\Vert Hx \right\Vert}_{X_y}
            \arrow[rr, dashed, "{\left\Vert \psi_y \right\Vert}_{X_y}"]
            \& \&
            {\left\Vert H'x \right\Vert}_{X_y}
            \\ \\
            \substack{\Pi(f) \langle 0, {\left\Vert y \right\Vert}_{X_y} \rangle \\ = \Pi(e){\left\Vert y \right\Vert}_{X_y}}
            \arrow[uu, "\epsilon_y"]
            \arrow[rr, swap,  "{\Pi(f) \langle \mathbb{I}_1, {\left\Vert y \right\Vert}_{X_y} \rangle}"]
            \& \&
            \substack{\Pi(f) \langle 1, {\left\Vert y \right\Vert}_{X_y} \rangle \\ = \Pi(e'){\left\Vert y \right\Vert}_{X_y}}
            \arrow[uu, swap, "\epsilon'_y"]
        \end{tikzcd}
    \end{align*}
\end{proof}

\section{Outlook}\label{sec:outlook}

We have exhibited a model of 1-truncated intensional type theory with an impredicative universe of 1-types in the category $\mathbf{PGAsm}(\mathbf{R},\mathbb{I},U)$ of partitioned groupoidal assemblies over the untyped realizer category $(\mathbf{R},\mathbb{I},U)$. This generalizes set-based partitioned assemblies over a cartesian closed category. More broadly, we have opened the door to a categorical treatment of higher-dimensional realizability---where realizers themselves carry higher-dimensional structure. One could consider alternative classes of fibrations in $\mathbf{PGAsm}(\mathbf{R},\mathbb{I},U)$. For instance, one might take fibrations to be given by the homotopy lifting property, which would be tantamount to the lifting operation being realized.

Ultimately, we would like to investigate groupoidal assemblies (not necessarily partitioned) and groupoidal realizability toposes. These could be studied directly or as free completions of partitioned groupoidal assemblies. \cite{shulman21} has studied 2-dimensional regular and exact completions. One can take the regular or exact completion of any finitely complete 2-category.
\begin{itemize}
    \item Is the exact completion of $\mathbf{PGAsm}(\mathbf{R},\mathbb{I},U)$ an elementary (non-Grothendieck) (2,1)-topos?
\end{itemize}

To be sure, we do not seek a 1-topos but a (2,1)-topos; and it is 2-dimensional regular and exact completions that are relevant. A result due to \cite{lumsdaine11} states that in any coherent 1-category, any cocategory is a co-equivalence relation, ie. it is a cogroupoid whose endpoint maps are jointly epimorphic---in terms of fundamental groupoids, this means that any two parallel paths are equal. We escape this limitation in the higher setting: for example, the (2,1)-category $\mathbf{Gpd}$ is a (2,1)-topos but contains a cogroupoid that is not a co-equivalence relation (Example \ref{eg:intervalgroupoid}). On the other hand, \cite{vandenbergm18} study the "homotopy exact completion" $\mathbf{Hex}(\mathbf{C})$ of a path category $\mathbf{C}$, which turns out to be equivalent to the ex/lex completion ${\mathbf{Ho}(\mathbf{C})}_{\mathsf{ex/lex}}$ of the homotopy category $\mathbf{Ho}(\mathbf{C})$ of $\mathbf{C}$ (their Proposition 3.18).

We would like to know what principles are valid in these putative models.
\begin{itemize}
    \item Do they contain impredicative and univalent universes?
    \item What about propositional resizing?
    \item Church's thesis?
\end{itemize}
A model with an impredicative universe and function extensionality contains refined encodings of (higher) inductive types that satisfy their full universal property \citep{awodey18}.

Beyond this, we would like to find more examples of untyped realizer categories, or else generalize untyped realizer categories to admit more examples. In particular, we would like to do groupoidal realizability over higher-dimensional untyped $\lambda$-calculi, such as a cubical untyped $\lambda$-calculus (roughly, an untyped version of a much stripped back version of cubical type theory \cite{cohen15}). We have also begun thinking about partial computable functors over the groupoid $\mathbf{FinSet}$ of finite sets and bijections (beefed-up natural numbers).

Finally, we would eventually like to investigate weak $\infty$-groupoidal realizability. This should use a weaker notion of interval compared with that used here, in that the cogroupoid axioms should be allowed to hold up to homotopy; it should accomodate $[0,1] \in \mathbf{Top}$---with no quotienting by homotopy. For further discussion of future work see \citep{speight23}.

\paragraph*{Acknowledgements}
Thanks to Samson Abramsky, Carlo Angiuli, Steve Awodey, Robert Harper, Andrzej Murawski, Michael Shulman, Benno van den Berg and Michael Warren for helpful discussions.

The research presented here was carried out during the author's DPhil at the University of Oxford, under the supervision of Samson Abramsky. For some of that time the author was supported by an EPSRC Doctoral Training Partnership Studentship. The author's affiliation changed to University of Birmingham before preparation of the paper.


\end{document}